\newcommand{\formatmode}[2]{#2}
\renewcommand{\leq}{\leqslant} 
\renewcommand{\geq}{\geqslant}
\newcommand{\ra}{\rangle}
\newcommand{\la}{\langle} 
\newcommand{\set}[1]{\left\{#1\right\}}
\newcommand{\B}[1]{\bm{#1}}
\definecolor{gainsboro}{rgb}{0.75, 0.75, 0.75}
\definecolor{yifancolor}{rgb}{0.8, 0.1, 0.80}
\definecolor{yifancommentcolor}{rgb}{0.2, 0.1, 0.80}
\newcommand{\fall}{\forall~}
\let\sset = \subseteq
\newcommand{\footremember}[2]{%
   \footnote{#2}%
   \newcounter{#1}%
   \setcounter{#1}{\value{footnote}}
}
\newcommand{\footrecall}[1]{%
   \footnotemark[\value{#1}]
}
\theoremstyle{definition}
\newtheorem{defn}[theorem]{Definition}
  \let\gc=\gamma  \let\gee=\varepsilon
    \let\gs=\sigma
\newcommand{\bA}{\B{A}}
\newcommand{\bB}{\B{B}}
\newcommand{\bC}{\B{C}}
\newcommand{\ba}{\B{a}}
\newcommand{\bb}{\B{b}}
\newcommand{\bu}{\B{u}}
\newcommand{\bv}{\B{v}}
\newcommand{\bI}{\B{I}}
\newcommand{\bP}{\B{P}}
\newcommand{\cI}{\mathcal{I}}
\DeclareMathOperator{\R}{\mathbb{R}}
\DeclareMathOperator{\C}{\mathbb{C}}
\DeclareMathOperator{\cR}{\mathcal{R}}
\DeclareMathOperator{\cL}{\mathcal{L}}
\DeclareMathOperator{\spp}{span}
\DeclareMathOperator*{\argmin}{argmin}
\DeclareMathOperator{\rank}{rank}
\newcommand{\ab}[1]{\la #1 \ra}
\newcommand{\csg}[2]{{#1}_{[#2,\cdot]}}
\newcommand{\rsg}[2]{{#1}_{[\cdot, #2]}}
\DeclareMathOperator{\spcexp}{GridExp}
\DeclareMathOperator{\vspcexp}{VExpansion}
\DeclareMathOperator{\hspcexp}{HExpansion}
\DeclareMathOperator{\vcl}{VCollapse}
\DeclareMathOperator{\cdg}{CDG}
\DeclareMathOperator*{\merge}{Merge}
\newcommand{\edits}[1]{#1}
\newcommand{\editstwo}[1]{#1}
\spnewtheorem{corollary}[theorem]{Corollary}{\bfseries}{\itshape}
\spnewtheorem{lemma}[theorem]{Lemma}{\bfseries}{\itshape}
\spnewtheorem{proposition}[theorem]{Proposition}{\bfseries}{\itshape}
\begin{document}

\formatmode{
\title{Communication Lower Bounds for Nested Bilinear Algorithms}
\author{Caleb Ju\footremember{eq}{Equal contribution. Work was partially done while at the University of Illinois at Urbana-Champaign.}\footremember{gt}{School of Industrial and Systems Engineering, Georgia Institute of Technology \texttt{cju33@gatech.edu}} \and Yifan Zhang\footrecall{eq}\footremember{ut}{Department  of Mathematics  and Oden Institute  for Computational  Engineering and Sciences, University of Texas at Austin, \texttt{yifanz@utexas.edu}} \and Edgar Solomonik\footremember{uiuc}{Department of Computer Science, University of Illinois at Urbana-Champaign, \texttt{solomon2@illinois.edu}} } 
}{
\title{Communication lower bounds for nested bilinear algorithms via rank expansion of Kronecker products\thanks{C.J. and Y.Z. had equal contribution. Work was partially done while at the University of Illinois at Urbana-Champaign.}}



\author{Caleb Ju         \and
        Yifan Zhang      \and
        Edgar Solomonik
}


\institute{Caleb Ju \at
              Industrial and Systems Engineering, Georgia Institute of Technology, Atlanta, GA, 30332, USA. \\
              \email{cju33@gatech.edu}           
           \and
           Yifan Zhang \at
              Oden Institute for Computational Engineering and Sciences, University of Texas at Austin, TX, 78712, USA.\\
              \email{yf.zhang@utexas.edu}
           \and
           Edgar Solomonik \at
              Department of Computer Science, University of Illinois at Urbana-Champaign, IL, 61801, USA. \\
           Corresponding author \\
           Phone: 217-300-4794 \\
           \email{solomon2@illinois.edu}
}

\date{Received: date / Accepted: date}
}

\maketitle

\begin{abstract}
We develop lower bounds on communication in the memory hierarchy or between processors for nested bilinear algorithms, such as 
Strassen's algorithm for matrix multiplication.
We build on a previous framework that establishes communication lower bounds by use of the rank expansion, or the minimum rank of any fixed size subset of columns of a matrix, for each of the three matrices encoding a bilinear algorithm. 
\editstwo{This framework provides lower bounds for a class of dependency directed acyclic graphs (DAGs) corresponding to the execution of a given bilinear algorithm, in contrast to other approaches that yield bounds for specific DAGs. However, our lower bounds only apply to executions that do not compute the same DAG node multiple times.}
Two bilinear algorithms can be nested by taking Kronecker products between their encoding matrices.
Our main result is a lower bound on the rank expansion of a matrix constructed by a Kronecker product derived from lower bounds on the rank expansion of the Kronecker product's operands.
We apply the rank expansion lower bounds to obtain novel communication lower bounds for nested Toom-Cook convolution, Strassen's algorithm, and fast algorithms for contraction of partially symmetric tensors.

\end{abstract}

\keywords{Communication lower bounds \and Bilinear algorithm \and Kronecker product \and Rank expansion \and Strassen's algorithm \and Convolution \and Tensor contraction} 

\subclass{15A03 \and 65F99 \and 65Y05 \and 68Q11}
 
\section{Introduction} \label{sec:intro}
In high-performance computing, communication cost (i.e., data movement across
the memory hierarchy and/or between processors) has been shown to be more
time-consuming and energy-draining than arithmetic
costs~\cite{kogge2013exascale}. And the gap is expected to continue to grow.
\editstwo{The increased utilization of hierarchical memory and mesh
topology (to address physical constraints in hardware, e.g.,
speed-of-light and voltage limitations) favors methods exploiting data
locality, further motivating well-designed communication
patterns~\cite{bilardi1995horizons,bilardi1999processor}.} Therefore, it is
imperative to design algorithms that minimize communication.  Communication
lower bounds provide a theoretical limit and guide the design of algorithms
that minimize
communication~\cite{jia1981complexity,ballard2011minimizing,dinh2020communication}.

\editstwo{The pioneering work by Yao~\cite{yao1979some} introduced 
communication lower bounds for computing a boolean function
between processors by viewing the computation as a decision tree}. 
Later, Hong and Kung introduced the study of communication lower bounds 
for several algorithms {in the memory hierarchy} by modeling the
computation as a \edits{dependency directed acyclic graph, or dependency DAG},
and representing the data access patterns through a red-blue pebble
game~\cite{jia1981complexity}.  Since then, new techniques have been developed
to derive more lower bounds.  \edits{For nested loop programs with relatively
simple access patterns, such as the classical matrix multiplication and LU
factorization, volumetric inequalities such as the Loomis-Whitney
inequality~\cite{loomis1949inequality} or the more general
H\"{o}lder-Brascamp-Lieb inequalities~\cite{holder1889uber,brascamp1976best}
can be used to derive communication lower
bounds~\cite{irony2004communication,demmel2018communication,ballard2013communication,ballard2018communication,ballard2011minimizing}.}
However, fast algorithms such as Strassen's subcubic \editstwo{matrix multiplication}
algorithm~\cite{strassen1969gaussian} have complicated data access patterns, so
volumetric inequalities do not solely suffice in this setting.

\edits{Another approach is to directly study the dependency DAG. 
This approach applies graph-theoretic tools, such
as the graph expansion, dominating sets, or graph partitioning by
eigenvalues~\cite{jia1981complexity,bilardi2019complexity,ballard2013graph,jain2020spectral,de2019complexity,bilardi2017complexity,bilardi1999processor}.}
In general, this family of techniques aims to uncover bounds on the size of
minimum cuts or vertex separators of the dependency DAG. To derive closed-form
communication lower bounds, one needs additional constraints or a priori
information on the dependency DAG. For example, the graph expansion
argument~\cite{ballard2013graph} requires the DAG to have bounded degrees to
derive sharp lower bounds, and the graph partitioning
argument~\cite{jain2020spectral} requires one to know the eigenvalues of the
corresponding Laplacian matrix to derive analytic lower bounds. Furthermore,
the graph-theoretic techniques must fix a dependency DAG.  However, most
algorithms admit algebraic reorganizations (i.e., computation of different
partial sums) that change the dependency graph and may be more communication
efficient in a particular setting. 

By working with more abstract algorithm representations, a larger space of 
 admissible dependency graphs can be considered simultaneously.
Hypergraphs have been used to capture potential orderings of partial sums~\cite{ballard2016hypergraph}, while bilinear algorithms~\cite{pan1984can} provide a more powerful abstraction for problems that can be posed as bilinear maps on two input sets.
Many important numerical problems fall under this category, including matrix multiplication, convolution, and symmetric tensor contractions, and all known fast algorithms for these problems can be expressed as bilinear algorithms.

A bilinear algorithm $(\bA, \bB, \bC)$ with $\B A \in \mathbb{C}^{m_A \times R}$, $\B B \in \mathbb{C}^{m_B \times R}$, and $\B C \in \mathbb{C}^{m_C \times R}$ computes $\B f(\B x, \B y)=\B C [(\B A^T \B x) \odot (\B B^T \B y)]$, where $\odot$ is the Hadamard product (elementwise or bilinear). The value $R$ is called the \edits{\textit{rank}} of the bilinear algorithm.
When a subset of columns from $\B{A}$, $\B{B}$, or $\B{C}$ is
a low-rank matrix, then the communication costs can be reduced
for executing this portion of the computation. To see why,
let $\B{P}$ consist of a subset of $k$ different columns from
an identity matrix of dimension $R$ so that a portion of the
bilinear algorithm associated with $k$ of the $R$ bilinear products is
$\B{CP} \big[\big((\B{AP})^T \B x\big) \odot \big(\B{BP})^T \B y\big)\big]$. 
We see that $\rank(\B{AP})$, $\rank(\B{BP})$, and $\rank(\B{CP})$ bound the minimum number of linear combinations of inputs needed from $\B x$, $\B y$, and the amount of output information produced, respectively, 
in computing this portion of the bilinear algorithm. 
Lower bounds on the growth of this rank with $k$, i.e., the rank expansion, yield lower bounds on communication for any execution DAG of the bilinear algorithm. Such an execution DAG must only compute linear combinations of inputs or the bilinear forms, \editstwo{while the order of operations and intermediate values can be arbitrarily specified. See~\cite[Definition 4.1]{solomonik2017communication} for more details. In particular, we do not restrict the order in which products and sums are carried out nor prohibit intermediate values from being reused between different outputs, such as in the \textit{independent evaluation} model studied by Hong and Kung~\cite{jia1981complexity}.}
The rank expansion of a matrix also characterizes its Kruskal rank~\cite{kruskal1977three}, which is the smallest $k$ for which any $k$ columns of $\B A$ are linearly independent, i.e., $\rank(\B{AP})=k$ for any choice of $\B P$. We formally define bilinear algorithms and their motivation for developing fast bilinear algorithms in Section~\ref{sec:RkExp}.

We focus on
nested bilinear algorithms~\cite{pan1984can}, which are bilinear algorithm
constructed via Kronecker products of matrices encoding the two
factor bilinear algorithms: $(\B A_1 \otimes \B A_2, \B B_1 \otimes \B B_2, \B C_1 \otimes \B C_2)$.
This abstraction captures 
both recursive and higher-order methods for matrix multiplication, polynomial multiplication, convolution, tensor contractions,
as well as other algorithms.
We show in general the rank expansion for the matrices defining a nested bilinear algorithm is based on the rank expansion of its factors. 
We prove that for a certain class of rank expansion lower bounds $\gs_A$ and $\gs_B$ for $\B{A}$ and $\B{B}$, respectively, 
there exists a rank expansion lower bound $\sigma_C$ for $\B C = \B A \otimes \B B$  
given by \(\sigma_C(k) = \min_{k_A \in [1,n_A], k_B \in [1,n_B], k_Ak_B = k} \sigma_A(k_A)\sigma_B(k_B)\), where 
$n_X = \#\text{cols}(\B{X})$.
This result is a generalization of the identity, $\rank(\B A\otimes \B B) = \rank(\B A)\rank(\B B)$. A formal overview of our results is described in Section~\ref{sec:main_results}.

\edits{
To prove our result, we start by introducing the grid framework in Section~\ref{sec:NestedSigma}. This framework provides a visual interpretation for our matrix rank analysis.
Using the grid framework, we show
it suffices to consider the rank of a subset of columns of $\B{A} \otimes \B{B}$ that has a compact geometric structure on a 2D grid. With this structure, we prove two main theorems in Section~\ref{sec:cont_step}. First, we show how to use the compact geometric structure to lower bound the rank of a matrix by solving a nontrivial discrete optimization problem. Second, to simplify the optimization problem, we apply a continuous relaxation. We can bound the solution to the continuous optimization problem by considering the optimal shape of the subgrid for general and more restricted $\sigma_A$ and $\sigma_B$.
In Section~\ref{sec:cont_step}, we derive our main result, (\(\sigma_C(k)\leq \sigma_A(k_A)\sigma_B(k_B)\)) by reducing general subgrids to rectangles.
We also sharpen these lower bounds in Appendix~\ref{appendix:improved_bnd} by instead considering an L-shaped geometry (we reserve these results for the appendix as we do not employ them for any of the applications considered).
}



Equipped with the general bounds derived in Section~\ref{sec:cont_step},
we apply our framework to fast algorithms 
for matrix multiplication, convolution, and partially symmetric tensor 
contractions in Section~\ref{sec:Apps}. 
We obtain lower bounds on both sequential communication (communication 
in a two-level memory hierarchy) as well as parallel communication 
(communication between processors in a distributed-memory computer 
with a fully connected network). The latter bounds can be translated to the
LogGP and BSP model~\cite{solomonik2017communication}. Our lower bounds 
are all novel in that they consider a larger space of algorithms than previous works.
We obtain the first communication lower bounds for nested symmetry 
preserving tensor contraction algorithms~\cite{solomonik2015contracting}, lower bounds 
for multi-dimensional and recursive Toom-Cook (i.e., convolution) that 
match previously known bounds~\cite{bilardi2019complexity,de2019complexity},
and lower bounds for Strassen's 
algorithm, which are asymptotically lower than previous results~\cite{ballard2013graph,bilardi2017complexity}. 
See Table~\ref{tab:CCList} 
for a comparison between previously known lower bounds and the lower bounds derived in this paper.

\begin{table}[h]
    \centering
{\scriptsize
\caption{Communication lower bounds for the Strassen's fast matrix-matrix multiplication and nested Toom-k for 1D convolution. We consider both the sequential model (S) with 
\edits{$M$} size of fast memory and parallel model (P) with \edits{$P$} processors.
A dash indicates we matched the previous lower bound.}
\label{tab:CCList}
\begin{tabular}{lrrrr}
\hline\noalign{\smallskip}
Algorithm & Previous (S) & Previous (P) & This Paper (S) & This Paper (P) \\ \midrule
Strassen's & $\displaystyle \frac{n^{\log_2(7)}}{M^{\log_4(7)-1}}$~\cite{ballard2013graph,bilardi2017complexity} & $\displaystyle \frac{n^{2}}{P^{\log_7(4)}}$~\cite{ballard2013graph} & $\displaystyle \frac{n^{\log_2(7)}}{M^{\log_2(3)-1}}$ (\ref{cor:MMVLB}) & $\displaystyle \frac{n^{\log_3(7)}}{M^{\log_3(2)}}$ (\ref{cor:MMHLB}) \\ \addlinespace[5pt]
\begin{tabular}[x]{@{}l@{}}Recursive\\convolution\end{tabular} & $\displaystyle \frac{n^{\log_k(2k-1)}}{M^{\log_k(2k-1)-1}}$~\cite{bilardi2019complexity} & $\displaystyle \frac{n}{P^{\log_{2k-1}(k)}}$~\cite{de2019complexity} & -- (\ref{cor:ConvVLB}) & -- (\ref{cor:ConvHLB}) \\ 
\noalign{\smallskip}\hline
\end{tabular}
}
\end{table}

\section{Notation, Definitions, and Preliminaries}\label{sec:RkExp}
\subsection{Notational Conventions} \label{sec:notations}
We will denote $\mathbb{N} = \{1,2,\ldots\}$ to be the natural numbers and $\mathbb{R}_+$ as the set of nonnegative reals. For any $n \in \mathbb{N}$, we write $[n] = \{1,2,\ldots, n\}$.

We denote the pseudoinverse of an increasing function $f$ as 
\begin{equation} \label{eq:defpseudoinv}
    \edits{f^{\dagger}(x) = \sup\set{k: f(k) \leq x}}.
\end{equation}
The Kronecker product and Hadamard (entrywise) product are, respectively, $\otimes$ and $\odot$.
\editstwo{
Finally, for convenience we make the following definition.
\begin{defn}\label{def:P}
    Let $\set{\B e_1,\ldots,\B e_n}$ be the standard basis vectors of $\R^n$. For $k \in [n]$, define
    \begin{equation*}
        \mathcal{P}_n^{(k)}
        =
        \set{(\B e_{i_1}|\ldots|\B e_{i_k}) \in \R^{n \times k}: 1\leq i_1 <\ldots <i_k \leq n}.
    \end{equation*}
    In other words, $\mathcal{P}_n^{(k)}$ is the collection of operators $\B P$ such that $\B A\B P$ selects $k$ columns of $\B A$.
\end{defn}
}

\subsection{Bilinear Algorithms} \label{sec:bilinear_alg_def}


\edits{
The target of the communication lower bound framework studied in~\cite{solomonik2015contracting,solomonik2017communication} is analysis of bilinear algorithms. We now formally define these. 
\begin{defn} \label{def:ba_def}
  A \textit{bilinear algorithm} is defined by a matrix triplet,
  \[
  (\B{A},\B{B},\B{C}),
  \]
  (where $\B{A}\in \mathbb{R}^{m_A \times R}$, $\B{B} \in \mathbb{C}^{m_B \times R}$, and $\B{C} \in \mathbb{C}^{m_C \times R}$), which takes in two inputs $\B{x} \in \mathbb{C}^{m_A}$ and $\B{y} \in \mathbb{C}^{m_B}$ and computes an output $\B{z} \in \mathbb{C}^{m_C}$ where
    \begin{align*}
        \B{z} = f(\B{x},\B{y}) = \B{C}\big[ (\B{A}^T\B{x}) \odot (\B{B}^T\B{y}) \big].
    \end{align*}
\end{defn}
}
Here, $R$ is referred to as the \edits{\textit{rank}} of the bilinear algorithm. We refer to the multiplication $\B{A}^T\B{x}$ and $\B{B}^T\B{y}$ as the \textit{encoding step} and the multiplication with $\B{C}$ as the \textit{decoding step}. Similarly, matrices $\B{A}$ and $\B{B}$ may be referred to as the \textit{encoding matrix} while $\B{C}$ is the \textit{decoding matrix}. 

\edits{The power of this framework is the ability to explicitly express the use of recursion in a bilinear algorithm via Kronecker (tensor) products. 
\begin{defn}\label{def:nestedbil}
  A \textit{nested bilinear algorithm} is a bilinear algorithm whose matrix triplet is defined by Kronecker products, i.e.,
    \begin{align*}
        \Big ( \bigotimes_{i=1}^\tau \B{A}_{i}, \bigotimes_{i=1}^\tau \B{B}_{i}, \bigotimes_{i=1}^\tau \B{C}_{i} \Big).
    \end{align*}
\end{defn}
Nested bilinear algorithms are the main tool to develop fast bilinear
algorithms. For example, instead of using the naive eight elementwise
products, Strassen's algorithm for $2 \times 2$ matrix multiplication only
requires seven~\cite{strassen1969gaussian}. Recursively applying the bilinear algorithm via a nested
bilinear algorithm yields the well-known Strassen's algorithm with
$O(n^{\log_2(7)})$ computational complexity. We refer to the 
survey~\cite{pan1984can} for a detailed discussion.} 

Communication lower bounds for bilinear algorithms can be reasoned by exploiting the sparsity pattern of the encoding and decoding matrices. This approach has found success for the naive $O(n^3)$ matrix multiplication by utilizing volumetric arguments, such as the H\"{o}lder-Brascamp-Lieb inequalities~\cite{holder1889uber,brascamp1976best}, since the sparsity structure of encoding and decoding matrices from the naive algorithm is relatively simple, i.e., one nonzero per column of the matrix. In contrast, the respective matrices for Strassen's algorithm have a dense structure, so volumetric inequalities do not solely suffice here.

\subsection{Communication Cost Lower Bounds for Bilinear Algorithms} \label{sec:commlb_frmwrk}
In this paper, we derive lower bounds on communication in two standard settings: sequential and parallel.
\edits{In the sequential setting, we consider a fast but small memory of size $M$ (e.g., cache),
where computation is performed, and slow but large memory (e.g., main memory). 
In the parallel setting, we consider $P$ processors that communicate over a network and bound the largest amount of data sent or received by any of the $P$ processors.}
In both cases, we quantify communication cost in the number of data elements, which may be elements of the input or output, as well as intermediate values (partial sums of inputs or of products of inputs). Finally, our computational model prohibits recomputation in the bilinear algorithm, meaning that if a bilinear product has already been computed and is not in fast memory or the current processor, the algorithm must communicate the value instead of recomputing it~\cite{solomonik2017communication}. \editstwo{This restriction stems from the assumption in proofs of the general lower bounds framework we build on (specifically, Lemma 5.2 in~\cite{solomonik2017communication})}.
Extension of these communication lower bounds to permit recomputation is of interest and has been explored for related problems~\cite{jia1981complexity,bilardi2019complexity,de2019complexity,bilardi2017complexity}. 
For integer multiplication and matrix multiplication, past works have shown that the best known lower bounds hold even when allowing recomputation~\cite{bilardi2019complexity,de2019complexity,bilardi2017complexity,nissim2019revisiting}.

\edits{Lower bounds for both settings are studied in the work of~\cite{solomonik2017communication}. They are derived by establishing the expansion bound, which we define next. Recall that $R$ is the associated rank of a bilinear algorithm and \editstwo{$\mathcal{P}^{(k)}_R$ is the collection of operators selecting $k$ columns from a matrix with $R$ columns (Definition~\ref{def:P})}. 
\begin{restatable}{defn}{expbdrest} \label{def:ExpansionBnd}
    \sloppy \editstwo{The bilinear algorithm $(\B{A},\B{B},\B{C})$ with rank $R$ has \textit{expansion} $\mathcal{E}^* : \mathbb{N}^3 \to \mathbb{N}$ if
    for any triplet $k_A, k_B, k_C \in [R]$,
    \begin{align*}
      &\mathcal{E}^*(k_A, k_B, k_C)  \\
      &= 
      \max_{\B P \in \mathcal{P}^{(k)}_R, \forall k} \{ 
      \#\text{cols}(\B{P}) : \mathrm{rank}(\B{AP}) \leq k_A, \mathrm{rank}(\B{BP}) \leq k_B, \mathrm{rank}(\B{CP}) \leq k_C \}.
    \end{align*}}
    Likewise, the same bilinear algorithm has non-decreasing (in all variables) \textit{expansion bound} $\mathcal{E} : \mathbb{N}^3 \mapsto \mathbb{R}$ if $\mathcal{E}$ upper bounds $\mathcal{E}^*$, i.e., 
    \[
        \mathcal{E}^*(k_A,k_B,k_C) \leq \mathcal{E}(k_A,k_B,k_C), \ \forall k_A,k_B,k_C \in [R].
    \]
\end{restatable}}
We relaxed the expansion bound from~\cite{solomonik2015contracting,solomonik2017communication} to be from $\mathbb{N}^3 \mapsto \mathbb{R}$ instead of $\mathbb{N}^3 \mapsto \mathbb{N}$, which can be done without loss of generality by rounding down non-integer values. The expansion \editstwo{function quantifies the largest subset of the bilinear algorithm we can complete when given a limited set of data, whereas the expansion bound is an easier-to-compute upper bound on the expansion function}. These functions are closely related to the \textit{edge expansion} of a graph $G$, a function previously used to derive communication lower bounds via the graph expansion framework~\cite{ballard2013graph,jia1981complexity,bilardi2019complexity,bilardi1999processor}. The edge expansion helps to lower bound the number of edges leaving any sufficiently small subgraph of $G$ relative to the number of vertices in the subgraph. Similarly, the expansion bound helps to lower bound the rank of any subset of a bilinear algorithm relative to the size of the subset.

\edits{Equipped with this quantity,~\cite{solomonik2017communication} develops communication lower bounds in both the sequential and parallel setting. We start with the sequential setting.
\begin{restatable}[Theorem 5.3~\cite{solomonik2017communication}]{proposition}{vlbrest} \label{prop:edgar_seq_thm}
  Given a bilinear algorithm $(\B{A},\B{B},\B{C})$ (where $\B{A} \in \mathbb{C}^{m_A \times R}$, $\B{B} \in \mathbb{C}^{m_B \times R}$, and $\B{C} \in \mathbb{R}^{m_C \times R}$), a corresponding expansion bound function $\mathcal{E}$, and a sequential model with fast memory of size $M$, then any procedure for computing the bilinear algorithm must communicate at least
  \begin{align*}
     \max\Big\{ \frac{2RM}{\mathcal{E}^{\max}(M)}, m_A + m_B + m_C \Big\},
  \end{align*}
  data elements between fast and slow memory, where
  \begin{align*}
    \mathcal{E}^{\max}(M) = \max_{\substack{r^{(A)},r^{(B)},r^{(C)} \in \mathbb{N}\\ r^{(A)} + r^{(B)} + r^{(C)} =3M}} \Big\{ \mathcal{E}\big(r^{(A)},r^{(B)},r^{(C)}\big) \Big\}
  \end{align*}
  and $\mathcal{E}^{\max}(M)$ is assumed to be increasing and convex. 
\end{restatable}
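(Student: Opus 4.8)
The plan is to establish the two quantities inside the maximum separately and then combine them. The second term, $m_A + m_B + m_C$, is the easy one: the inputs $\B{x}$ and $\B{y}$ reside initially in slow memory and the outputs $\B{z}$ must end there, so every one of the $m_A + m_B + m_C$ input and output elements must cross the fast--slow boundary at least once. Hence at least this many data elements are communicated, independent of the algorithm. All of the work lies in the first term, which I would obtain by a round (phase) argument in the style of~\cite{solomonik2017communication}.

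First I would partition the execution trace of an arbitrary procedure into consecutive rounds, each consisting of exactly $2M$ data movements between fast and slow memory (the final round may be shorter). Fix one round and let $S \subseteq [R]$ be the set of bilinear products computed for the \emph{first} time during that round. Because recomputation is prohibited, these sets partition $[R]$ across rounds, so it suffices to bound $|S|$ per round. Let $\B{P} \in \mathcal{P}_R^{(|S|)}$ select the columns indexed by $S$. The core accounting step is to argue that the data available during the round is limited: at most $M$ values are resident in fast memory at the start of the round, and at most $2M$ further values are read in, so at most $3M$ distinct data elements can participate in computing the products in $S$.

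Next I would translate this $3M$ budget into a constraint on the three ranks. Forming the products indexed by $S$ requires the encodings $\set{\B{a}_i^T \B{x} : i \in S}$, which span a space of dimension $\rank(\B{AP})$, the encodings $\set{\B{b}_i^T \B{y} : i \in S}$, spanning dimension $\rank(\B{BP})$, and the partial outputs produced, spanning dimension $\rank(\B{CP})$; each of these independent values must be resident in, read into, or written out of fast memory during the round, and hence counts against the $3M$ budget. This yields $\rank(\B{AP}) + \rank(\B{BP}) + \rank(\B{CP}) \leq 3M$. Applying the expansion bound of Definition~\ref{def:ExpansionBnd}, and using that $\mathcal{E}$ is non-decreasing (so the maximum over sums at most $3M$ equals the maximum over sums equal to $3M$), gives $|S| \leq \mathcal{E}(\rank(\B{AP}), \rank(\B{BP}), \rank(\B{CP})) \leq \mathcal{E}^{\max}(M)$. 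Summing over rounds, the total number of products $R$ is at most the number of rounds times $\mathcal{E}^{\max}(M)$, so the number of rounds is at least $R / \mathcal{E}^{\max}(M)$; multiplying by the $2M$ elements communicated per round yields the claimed $2RM/\mathcal{E}^{\max}(M)$, and taking the maximum with the I/O bound finishes the argument. Here monotonicity of $\mathcal{E}$ is what permits replacing the inequality constraint by the defining equality of $\mathcal{E}^{\max}(M)$, while the convexity hypothesis is what guarantees that equal-sized rounds are the worst case, so the per-round counts aggregate to the stated closed form.

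The step I expect to be the main obstacle is the accounting that converts ``at most $3M$ resident-or-communicated data elements'' into the rank inequality $\rank(\B{AP}) + \rank(\B{BP}) + \rank(\B{CP}) \leq 3M$. The delicate point is that a single data element may play several roles and that encodings can in principle be recombined inside fast memory from the raw inputs; making rigorous the claim that at least $\rank(\B{AP})$ genuinely distinct $A$-type values (and likewise for $B$ and $C$) must pass through the bounded fast memory---and that the no-recomputation assumption forbids regenerating them for free---is where the argument needs the most care. The clean separation of data into $A$-, $B$-, and $C$-type values, together with the linear-algebraic fact that the required encodings occupy a subspace whose dimension is exactly the corresponding rank, is what ultimately makes the expansion bound applicable.
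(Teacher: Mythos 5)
This proposition is not proved in the paper at all: it is imported verbatim as Theorem 5.3 of~\cite{solomonik2017communication}, so the only ``proof'' the paper offers is the citation. Your sketch reconstructs the standard segment (phase) argument that the cited source uses, and the overall architecture is right: the $m_A+m_B+m_C$ term from the trivial requirement of reading inputs and writing outputs, a decomposition of the schedule into rounds of $2M$ transfers, the observation that the products computed for the first time in each round partition $[R]$ (this is exactly where the no-recomputation hypothesis enters), and the application of Definition~\ref{def:ExpansionBnd} together with monotonicity of $\mathcal{E}$ to cap the per-round count by $\mathcal{E}^{\max}(M)$.

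However, the step you yourself flag as the main obstacle is not merely delicate --- as written it has a concrete hole, and it is precisely the content of the theorem. You charge every one of the $\rank(\B{AP})+\rank(\B{BP})+\rank(\B{CP})$ independent values against a budget of ``$M$ resident at the start of the round plus $2M$ transferred during it.'' But an independent output contribution (a $C$-type value) that is produced during the round and is still sitting in fast memory at the \emph{end} of the round is neither resident at the start nor transferred, so it escapes that budget entirely; the same issue arises for encodings of inputs formed in fast memory during the round and retained into the next one. The standard repair (as in~\cite{ballard2011minimizing} and in the cited proof) is asymmetric: input-type values are charged against the fast-memory contents at the \emph{start} of the round plus the loads, while output-type values are charged against the contents at the \emph{end} plus the stores. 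Carrying out that bookkeeping and checking that it lands on the specific constants in the statement (rounds of $2M$ transfers, a total rank budget of $3M$, hence the $2RM/\mathcal{E}^{\max}(M)$ numerator) is the substance of the proof, and your sketch defers it. Your closing remark that convexity of $\mathcal{E}^{\max}$ is what ``guarantees that equal-sized rounds are the worst case'' is also a guess rather than an argument; since the result is taken from~\cite{solomonik2017communication}, the honest course here is to verify the accounting against that proof rather than to assert it.
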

The second term $m_A + m_B + m_C$ accounts for the reading of the input and writing of the output. Next, we state the parallel communication lower bound. To be consistent with~\cite{solomonik2017communication}, we assume the algorithm is \textit{storage-balanced}: at the beginning of the algorithm, the input is evenly distributed among $P \in \mathbb{N}$ processors. In the end, the output is evenly distributed among the processors.
\begin{restatable}[Theorem 5.4~\cite{solomonik2017communication}]{proposition}{hlbrest} \label{prop:edgar_par_thm}
  Given a bilinear algorithm $(\B{A},\B{B},\B{C})$ (where $\B{A} \in \mathbb{C}^{m_A \times R}$, $\B{B} \in \mathbb{C}^{m_B \times R}$, and $\B{C} \in \mathbb{R}^{m_C \times R}$), a corresponding expansion bound function $\mathcal{E}$, and a parallel model with $P$ processors, then any storage-balanced procedure for computing the bilinear algorithm must communicate at least
  \begin{align*}
      r^{(A)} + r^{(B)} + r^{(C)}
  \end{align*}
  data elements between processors, where $r^{(A)},r^{(B)},r^{(C)} \in \mathbb{N}$ satisfy
  \begin{align*}
      \frac{R}{P} 
      \leq
      \mathcal{E}\bigg( r^{(A)} + \frac{m_A}{P}, r^{(B)} + \frac{m_B}{P}, r^{(C)} + \frac{m_C}{P} \bigg).
  \end{align*}
\end{restatable}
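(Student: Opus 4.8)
The plan is to combine a pigeonhole argument over the $P$ processors with a dimension-counting argument that ties each processor's local storage plus communicated data to the ranks $\rank(\B{AP})$, $\rank(\B{BP})$, $\rank(\B{CP})$ of the columns it handles, and then to invoke the expansion bound together with its monotonicity. First I would use the no-recomputation assumption: each of the $R$ bilinear products $t_j = (\B{a}_j^T\B{x})(\B{b}_j^T\B{y})$ is computed on exactly one processor, so by pigeonhole some processor $p$ computes a set $S$ of products with $|S| = k \geq R/P$. Let $\B{P}\in\mathcal{P}_R^{(k)}$ select the columns indexed by $S$.

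Next I would establish three inequalities bounding the ranks of $\B{AP}$, $\B{BP}$, $\B{CP}$ in terms of $p$'s storage and communication. For the encoding side, to form each product in $S$ processor $p$ must have available the scalar $\B{a}_j^T\B{x}$; collectively these functionals span a space of dimension $\rank(\B{AP})$. The storage-balanced assumption gives $p$ exactly $m_A/P$ entries of $\B{x}$, spanning at most $m_A/P$ coordinate functionals, and each of the $r^{(A)}$ words it receives contributes at most one further dimension, so the space of $\B{x}$-functionals $p$ can evaluate has dimension at most $m_A/P + r^{(A)}$. Since this space must contain all $\B{a}_j^T$ for $j\in S$, I get $\rank(\B{AP}) \leq r^{(A)} + m_A/P$, and symmetrically $\rank(\B{BP}) \leq r^{(B)} + m_B/P$. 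The decoding side is dual: the products in $S$ contribute $\B{CP}\,\B{t}_S$ to the output, whose image has dimension $\rank(\B{CP})$; processor $p$ can absorb at most $m_C/P$ of these components into its own output share, and every other independent component must be sent out, yielding $\rank(\B{CP}) \leq r^{(C)} + m_C/P$.

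Finally I would invoke Definition~\ref{def:ExpansionBnd} applied to $\B{P}$, namely $k \leq \mathcal{E}\big(\rank(\B{AP}), \rank(\B{BP}), \rank(\B{CP})\big)$, and chain it with the three rank bounds using the fact that $\mathcal{E}$ is non-decreasing in each argument, to obtain
\[
  \frac{R}{P} \leq k \leq \mathcal{E}\Big(r^{(A)} + \tfrac{m_A}{P},\ r^{(B)} + \tfrac{m_B}{P},\ r^{(C)} + \tfrac{m_C}{P}\Big).
\]
Thus the communication amounts $r^{(A)}, r^{(B)}, r^{(C)}$ of processor $p$ satisfy the stated constraint, and the total $r^{(A)} + r^{(B)} + r^{(C)}$ moved by $p$ is a lower bound on the communication of the busiest processor, which is exactly the claimed bound.

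The step I expect to be the main obstacle is the decoding (C-side) inequality. Unlike the encoding functionals, the output contributions generated by $p$'s products may be consumed by many different processors, so I must argue carefully that, under no recomputation and storage balance, the $\rank(\B{CP})$ independent output components that $p$ produces cannot all be retained locally and that each communicated scalar accounts for at most one dimension. Making this ``information leaving the processor'' statement rigorous as a clean rank/dimension inequality, rather than an informal counting argument, is the crux; the two encoding inequalities are then the symmetric and comparatively routine cases.
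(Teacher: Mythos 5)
Your proposal is correct and follows essentially the same route as the source: the paper itself gives no proof of this proposition, importing it verbatim as Theorem 5.4 of~\cite{solomonik2017communication}, and the argument there is exactly your pigeonhole-plus-rank-accounting scheme (a busiest processor owns $k\geq R/P$ products, its available $\B{x}$- and $\B{y}$-functionals are bounded by initial storage plus received words, its output contributions by final storage plus sent words, and Definition~\ref{def:ExpansionBnd} with monotonicity of $\mathcal{E}$ closes the chain). The $\B{C}$-side inequality you flag as the crux is handled in the cited work just as you describe, by noting that every linearly independent output contribution a processor produces must either reside in its $m_C/P$ final output entries or be sent, which is where the storage-balanced and no-recomputation assumptions are used.
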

Thus, expansion lower bound $\mathcal{E}$ yields lower bounds on sequential and parallel communication cost directly via these two propositions.
}

\subsection{Rank Expansion Bounds}

To characterize the expansion lower bound needed for our communication bounds ($\mathcal{E}$), we consider the rank of subsets of columns of each individual encoding/decoding matrix. 
\editstwo{Recall that $\B{P} \in \mathcal{P}^{(k)}_n$ selects a subset of $k$ columns.}
\begin{defn} \label{def:RankExpDef}
    The \textit{rank expansion} of $\B{A} \in \mathbb{C}^{m \times n}$,
    \edits{$\tilde{\gs}  :  [n] \mapsto \mathbb{N}$}, is defined as
    $\tilde{\gs}(k) = \displaystyle \min_{\B{P} \in \mathcal{P}^{(k)}_n} \big\{ \rank(\B{AP})\big\}$.
\end{defn}
\edits{
Let $\tilde{\gs}_A$, $\tilde{\gs}_B$, and $\tilde{\gs}_C$ be rank expansion functions for matrices $\B{A}$, $\B{B}$, and $\B{C}$, respectively. 
Since, for any $\B P\in\mathcal{P}^{(k)}_n$, $\tilde{\gs}_A(\#\mathrm{col}(\B{P}))\leq \mathrm{rank}(\B{AP})$ (and similar for $\B B$, $\B C$), these rank expansion functions yield an expansion bound $\mathcal{E}$ (Definition~\ref{def:ExpansionBnd}) of the form, 
\begin{equation}  \label{eq:rnk_exp_to_exp_bnd}
  \mathcal{E}( r^{(A)}, r^{(B)}, r^{(C)}) := \min\{ \tilde{\gs}_A^\dagger( r^{(A)}), \tilde{\gs}_B^\dagger(r^{(B)}), \tilde{\gs}_C^\dagger(r^{(C)})\},
\end{equation}
for any $r^{(A)},r^{(B)},r^{(C)} \in \mathbb{N}$,
where recall the dagger is the pseudoinverse defined in~\eqref{eq:defpseudoinv}.
}
\editstwo{
While relatively easy to compute, the proposed expansion bound above may not be tight with the expansion function $\mathcal{E}$, as shown in the example below.}

\begin{example} \label{ex:eqgp_1}
    \editstwo{Consider the bilinear algorithm $(\B{A}, \B{B}, \B{C})$ for standard matrix multiplication of two $n \times n$ matrices. Matrices $\B{A}$, $\B{B}$, and $\B{C}$ each contain $n$ copies of each column from an identity matrix of dimension $n^2$ (see~\cite[Lemma B.1]{solomonik2017communication} for an idea on the sparsity pattern). Then for any $r^{(A)},r^{(B)},r^{(C)} \in [n^2]$, the Loomis-Whitney inequality~\cite[Lemma B.1]{solomonik2017communication} produces an expansion bound
    \[
        \mathcal{E}_{\text{LW}} (r^{(A)}, r^{(B)}, r^{(C)}) :=
        \sqrt{r^{(A)}r^{(B)}r^{(C)}},
    \]
     while one can show~\eqref{eq:rnk_exp_to_exp_bnd} simplifies to 
     \[
        \mathcal{E}_{\text{RankExp}} (r^{(A)},r^{(B)},r^{(C)}) 
        := 
        n \cdot \min\{r^{(A)}, r^{(B)}, r^{(C)}\}.
    \]
    When the fast memory size is not large, i.e., $\max\{r^{(A)}, r^{(B)}, r^{(C)}\} \ll n^2$, then $\mathcal{E}_{\text{LW}}$ is tighter. The gap between the two expansion bounds arises because a subset of columns from one matrix, say $\B A$, may be low rank while the same subset for another matrix, say $\B B$, may be nearly full rank. The Loomis-Whitney inequality seems to account for this low-rank property while~\eqref{eq:rnk_exp_to_exp_bnd} does not.}
\end{example}
However, the main advantage of the proposed expansion bound~\eqref{eq:rnk_exp_to_exp_bnd} is that it can derive bounds for any bilinear algorithm. In contrast, the Loomis-Whitney inequality requires simple access patterns~\cite{ballard2011minimizing}, which excludes its application for fast algorithms such as Strassen's algorithm~\cite{strassen1969gaussian}.

As defined, the rank expansion is a discrete function, which makes it challenging to derive simple closed-form expressions for its pseudoinverse. Instead, we seek to lower bound $\tilde{\gs}$ by a continuous increasing function $\gs$.

\begin{defn}\label{def:RankExpLBDef}
    Let $\B{A} \in \mathbb{C}^{m \times n}$ with rank expansion $\tilde{\gs}$.    
    A \textit{rank expansion lower bound} $\sigma$ for $\B{A}$ is a continuous, nonnegative, and increasing function $\sigma$ on $\R_+$ such that
    $\sigma(k) \leq \tilde{\sigma}(k)$ for all $k \in [n]$.
\end{defn}

\edits{
For a nested bilinear algorithm (Definition~\ref{def:nestedbil}), we seek to obtain ${\gs}_A$ from the rank expansion ${\gs}_{i}$ of each term $\B A_i$ in the Kronecker product defining $\B A$ (and similar for $\B B$, $\B C$).
Often ${\gs}_{i}$ is easy to obtain, e.g., for Strassen's algorithm each $\B A_i$ is the same $4$-by-$7$ matrix, while in other cases each $\B A_i$ may vary in size but \editstwo{may have} a known/simple rank expansion.
}

\section{Summary of General Lower Bound Results} \label{sec:main_results}

As motivated above, we seek a rank expansion lower bound $\gs_C$ for $\B{C} = \B{A} \otimes \B{B}$ given rank expansion lower bounds $\gs_A$ and $\gs_B$.
We state our first main result below. Recall the pseudoinverse from~\eqref{eq:defpseudoinv}, denoted with a dagger.

\begin{restatable}{theorem}{mainthmone} \label{thm:lb}
    Suppose $\sigma_A$ and $\sigma_B$ are concave rank expansions lower bounds for $\bA \in \mathbb{C}^{m_A \times n_A}$ and $\bB\in \mathbb{C}^{m_B \times n_B}$, respectively, and $\gs_A(0) = \gs_B(0) = 0$.
    Let $d_A = \gs_A^{\dagger}(1)$, $d_B = \gs_B^{\dagger}(1)$. Then
    \begin{equation*}
        \gs_C(k) = \min_{\substack{k_A\geq d_A,~k_B \geq d_B\\ 
        k_A k_B \geq k}}\gs_A(k_A)\cdot\gs_B(k_B)
    \end{equation*}
    is a rank expansion lower bound for $\B C = \B A \otimes \B B$.  
\end{restatable}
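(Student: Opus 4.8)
The plan is to identify column subsets of $\B C = \B A \otimes \B B$ with subsets $S \subseteq [n_A]\times[n_B]$ of a two-dimensional grid: the column of $\B C$ indexed by $(i,j)$ is $\B a_i \otimes \B b_j$, where $\B a_i,\B b_j$ are the columns of $\B A,\B B$. Under this identification it suffices to prove, for every $S$ with $|S| = k$, that $\rank(\B C|_S) \ge \sigma_C(k)$, together with the regularity of $\sigma_C$ (continuous, nonnegative, increasing) demanded by Definition~\ref{def:RankExpLBDef}. The regularity I would dispatch first by a parametric-minimum argument: the feasible set $\{k_A \ge d_A,\ k_B \ge d_B,\ k_A k_B \ge k\}$ is closed, the objective $\sigma_A(k_A)\sigma_B(k_B)$ is continuous and, by monotonicity of $\sigma_A,\sigma_B$, minimized on the boundary arc $k_A k_B = k$, which is compact once one restricts $k_A \in [d_A,\,k/d_B]$. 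Enlarging $k$ shrinks the feasible set, so $\sigma_C$ is non-decreasing, and continuity follows from continuity of the data.

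The engine of the rank bound is the elementary tensor identity $(U_1 \otimes W) \cap (U_2 \otimes W) = (U_1 \cap U_2)\otimes W$ for subspaces of $\mathbb{C}^{m_A},\mathbb{C}^{m_B}$. Writing $V_i = \spp\{\B b_j : (i,j)\in S\}$, so that $\rank(\B C|_S) = \dim \sum_i \B a_i \otimes V_i$, this identity shows that whenever $\B a_i$ is linearly independent of the previously processed columns of $\B A$, the block $\B a_i \otimes V_i$ meets the accumulated span trivially and contributes its full dimension $\dim V_i$. For a rectangle $S = I \times J$ this recovers the exact value $\rank(\B A|_I)\rank(\B B|_J) \ge \sigma_A(|I|)\sigma_B(|J|)$, which is $\ge \sigma_C(|S|)$ because $(|I|,|J|)$, or a feasible enlargement of it, lies in the minimization defining $\sigma_C$. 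More generally, when the spans $V_i$ are nested the same identity yields, after choosing complements $V_i = \bigoplus_{s \ge i} D_s$, the exact layer-cake formula $\dim\sum_i \B a_i \otimes V_i = \sum_s \rank(\B A|_{\{1,\dots,s\}})\,\dim D_s$; this is the clean combinatorial quantity to be optimized.

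For an arbitrary, non-rectangular $S$ the plan is to pass to this nested (staircase) situation by ordering the grid rows by decreasing $|S_i|$ and collapsing to a down-closed shape, producing a discrete lower bound on $\rank(\B C|_S)$ expressed through $\sigma_A$, $\sigma_B$, and the row counts $|S_i|$. I would then relax this discrete optimization to a continuous one over shapes of fixed area $k$ and argue that a rectangle is a minimizer: concavity of $\sigma_A,\sigma_B$ together with $\sigma_A(0)=\sigma_B(0)=0$ makes $x \mapsto \sigma(x)/x$ non-increasing, which is exactly what forces any staircase to be collapsible to a rectangle without increasing the bound; the thresholds $d_A = \sigma_A^{\dagger}(1)$ and $d_B = \sigma_B^{\dagger}(1)$, the Kruskal-type ranks below which a single independent column or row is not guaranteed, confine the feasible rectangle extents and hence appear as the constraints $k_A \ge d_A$, $k_B \ge d_B$. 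Combining, $\rank(\B C|_S)$ is bounded below by the value of a feasible rectangle, which is $\ge \sigma_C(k)$ by definition.

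The main obstacle I anticipate is precisely this general-shape step: for non-nested, overlapping spans $V_i$ the trivial-intersection argument no longer applies term by term, so the naive "basis of $\B A$" bound is lossy, and one must show that collapsing to a nested staircase loses no rank in the relevant regime and that the continuous relaxation is genuinely minimized by a rectangle rather than by an $L$-shape or other configuration. Controlling the boundary behaviour near $d_A,d_B$ and verifying that the rectangle minimizer is feasible are the delicate points, and the concavity hypotheses on $\sigma_A,\sigma_B$ are the essential lever throughout.
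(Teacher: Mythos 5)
Your overall architecture --- grid identification, the tensor-product span identity, reduction to a down-closed staircase, continuous relaxation, rectangle optimality with the thresholds $d_A,d_B$ coming from unit step widths --- matches the paper's (Sections~\ref{sec:NestedSigma}--\ref{sec:cont_step}), and your rectangle case and layer-cake formula for nested spans are correct. But the step you yourself flag as the anticipated obstacle is a genuine gap, and it is exactly the step the paper's Lemma~\ref{lem:discrete_step} exists to close. ``Collapsing to a down-closed shape'' (the paper's VCollapse) replaces the set $\{\B b_j:(i,j)\in S\}$ in column $i$ by the first $|S_{[i,\cdot]}|$ columns of $\B B$, and this can change the rank in either direction; so $\rank(\B C|_S)\geq\rank(\vcl(S))$ is simply false in general, and no ``relevant regime'' rescues it. The paper does not compare the two ranks at all. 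Instead it builds a two-sided sandwich: from the basis $B_D$ of the collapsed grid it extracts a compact subgrid $S'$ whose columns are trimmed to the \emph{true} column ranks $r_p=\rank(G_{[p,\cdot]})$ of the original grid, proves $|S'|\leq\rank(G)$ by a column-by-column induction (using Lemma~\ref{lem:newcol}), and separately proves $|G|\leq|\spcexp(S')|$, where $\spcexp$ inflates $S'$ using the \emph{inverse} functions $f=\gs_A^{\dagger}$, $g=\gs_B^{\dagger}$. The desired bound then comes from maximizing the expansion $\ab{S'}=\int df\,dg$ over compact grids of fixed size and inverting (Lemmas~\ref{lem:continuous_step} and~\ref{lem:invphi_main}), rather than from minimizing a rank formula over shapes of fixed area.

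This dual formulation is not a cosmetic choice. In your direct route, even after reaching a genuine CDG, the discrete lower bound is $\sum_{s}\dim V_s\,\big(\rank(\B A|_{[s]})-\rank(\B A|_{[s-1]})\big)$ after summation by parts, and the adversary may place the $0/1$ increments of $\rank(\B A|_{[s]})$ as late as possible subject only to the partial-sum constraint $\rank(\B A|_{[s]})\geq\gs_A(s)$; extracting a clean shape-dependent formula from this forces the pseudoinverses $\gs_A^{\dagger},\gs_B^{\dagger}$ into the picture, which your sketch never introduces. Likewise, the variational ``merge'' argument that eliminates interior critical points rests on the \emph{convexity} of $f$ and $g$ (inherited from concavity of $\gs_A,\gs_B$); asserting that $\gs(x)/x$ non-increasing ``forces any staircase to be collapsible to a rectangle'' is not a proof of the mirror-image claim. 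So the proposal is a correct skeleton with the load-bearing lemma --- the rank-versus-expansion sandwich through a compact sub-basis --- missing.
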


The proof of this result is lengthy. We delay it \editstwo{until} Section~\ref{sec:main_proof_1}. 
Recursively applying the above theorem to $\B{C} = \bigotimes_{i=1}^p \B{A}_i$, where $p \geq 3$, requires solving a non-trivial optimization problem. To derive simple rank expansion lower bounds for nested bilinear algorithms, we require the rank expansion lower bounds $\gs_i$ for each $\B{A}_i$ to be log-log concave.

\begin{defn} \label{def:loglog}
    We say function $f(x)$ is \textit{log-log concave} (resp. \textit{convex}) if $\ln(f)$ is concave (resp. convex) in $\ln(x)$.
\end{defn}

The class of log-log concave functions contains many functions that can serve as tight rank expansion lower bounds, 
such as logarithms and polynomials with a leading term that has an exponent greater than or equal to 1. 
For further discussions, see Section~\ref{sec:main_proof_2}. 
For log-log concave rank expansion functions, we obtain the following bound for nested bilinear algorithms.


\begin{restatable}{theorem}{mainthmtwo} \label{thm:nest_lb}
    Let $\gs_i$ be a rank expansion lower bound of $\bA_i \in \R^{m_i \times n_i}$ for $i = 1,2,\ldots,p$ ($p\geq 2$), and let $\bC = \bigotimes_{i = 1}^p \bA_i$.
    If $\gs_i$ are concave and log-log concave and satisfy $\gs_i(0) = 0$, then
    \begin{equation} \label{eq:lb_for_many}
        \gs_C(k) = \min_j\set{\gs_j\left(\frac{k}{\prod_{i \neq j}d_i}\right)},
    \end{equation}
    where $d_i = \gs_i^{\dagger}(1)$, is a concave and log-log concave rank expansion lower bound of $\bC$. In particular, if $\gs_i(k) = (k / k_i)^{q_i}$ with $k_i\geq 1$ and $q_i \in (0, 1]$, then
    \begin{equation}\label{eq:lb_for_poly}
        \gs_C(k) = \left(\frac{k}{\prod_i k_i}\right)^{\min_j q_j}.
    \end{equation}
    If $\gs_i(k) = a_i \ln (b_i k + 1)$, then
    \begin{equation}\label{eq:lb_for_log}
        \gs_C(k) = a \ln (bk + 1),
    \end{equation}
    where $a = \min_i a_i$ and $\displaystyle b = \frac{e^{1/a}-1}{\prod_{i=1}^{n} b_i^{-1}(e^{1/a_i}-1)}$.
\end{restatable}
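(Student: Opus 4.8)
The plan is to prove Theorem~\ref{thm:nest_lb} by induction on $p$, using Theorem~\ref{thm:lb} as the engine and exploiting log-log concavity to collapse the two-variable optimization in that theorem into a clean one-term minimum. The base case $p=2$ is the conceptual core: I would show that under the stated hypotheses the optimization $\min_{k_A \geq d_A,\, k_B \geq d_B,\, k_A k_B \geq k} \gs_A(k_A)\gs_B(k_B)$ from Theorem~\ref{thm:lb} is bounded below by $\min\set{\gs_A(k/d_B),\, \gs_B(k/d_A)}$, which is exactly~\eqref{eq:lb_for_many} for $p=2$. Since a rank expansion lower bound need only lie below $\tilde{\gs}_C$, and Theorem~\ref{thm:lb} already guarantees its optimization value does, it suffices to show the simpler expression is no larger than that value.

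For the $p=2$ collapse I would argue in three steps. First, since $\gs_A,\gs_B$ are nonnegative and increasing, the product $\gs_A(k_A)\gs_B(k_B)$ is monotone in each argument, so the minimizer lies on the lower-left boundary $k_A k_B = k$ (whenever $k \geq d_A d_B$; otherwise the constraint is inactive and the value is $\gs_A(d_A)\gs_B(d_B)=1$, which dominates the simpler bound). Second, passing to logarithmic coordinates $u_A = \ln k_A$, $u_B = \ln k_B$, the constraint becomes the line $u_A + u_B = \ln k$, and log-log concavity (Definition~\ref{def:loglog}) makes $u_A \mapsto \ln\gs_A(e^{u_A}) + \ln\gs_B(e^{\ln k - u_A})$ a concave function; hence its minimum over the feasible interval is attained at an endpoint. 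Third, the two endpoints are $k_A = d_A$ and $k_B = d_B$; using continuity and monotonicity of $\gs_i$ together with $\gs_i(0)=0$ to get $\gs_i(d_i) = \gs_i(\gs_i^\dagger(1)) = 1$, these endpoints evaluate to $\gs_B(k/d_A)$ and $\gs_A(k/d_B)$, respectively, giving the claimed minimum.

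Next I would propagate this through the induction. Writing $\bC = \bA_1 \otimes \bC'$ with $\bC' = \bigotimes_{i=2}^p \bA_i$, the inductive hypothesis supplies a rank expansion lower bound $\gs_{C'}$ of the form~\eqref{eq:lb_for_many} that is concave and log-log concave; these two properties are preserved because each $\gs_j(\,\cdot\,/c_j)$ is concave (composition with a positive linear scaling) and log-log concave (a shift in $\ln$-coordinates), and both classes are closed under pointwise minimum. This lets me apply the $p=2$ result to $(\bA_1, \bC')$. The only computation needed is $d_{C'} = \gs_{C'}^\dagger(1) = \prod_{i=2}^p d_i$, which follows by evaluating~\eqref{eq:lb_for_many} at $k = \prod_{i=2}^p d_i$ and using $\gs_j(d_j) = 1$; substituting back and relabeling the scaling constants recovers $\min_j \gs_j(k/\prod_{i\neq j} d_i)$.

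Finally, the two explicit corollaries reduce to evaluating~\eqref{eq:lb_for_many}. For the monomial case I would compute $d_i = \gs_i^\dagger(1) = k_i$ and observe every term equals $(k/\prod_i k_i)^{q_i}$, so on the meaningful range $k \geq \prod_i k_i$ the minimum selects the smallest exponent $\min_i q_i$, matching~\eqref{eq:lb_for_poly}. For the logarithmic case I would compute $d_i = b_i^{-1}(e^{1/a_i}-1)$, show that the $j$-th term simplifies to $a_j \ln\!\big((e^{1/a_j}-1)k/\prod_i d_i + 1\big)$, and note every term passes through the common point where it equals $1$ at $k = \prod_i d_i$; a short concavity/crossing argument (all terms concave, agreeing at that point, with large-$k$ slope governed by $a_j$) then shows the term with $a = \min_i a_i$ is the minimum for $k \geq \prod_i d_i$, which matches the stated $a$ and $b$ in~\eqref{eq:lb_for_log}. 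I expect the main obstacle to be making the $p=2$ endpoint argument fully rigorous---in particular handling the feasibility of the constraint interval and the regime $k < d_A d_B$---since everything downstream is bookkeeping on top of that reduction.
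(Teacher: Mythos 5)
Your proposal follows essentially the same route as the paper: induction on $p$ with the $p=2$ case obtained by collapsing the optimization of Theorem~\ref{thm:lb} to its boundary, which is exactly the content of the paper's Lemma~\ref{lem:bndoptim} (your observation that $u_A \mapsto \ln\gs_A(e^{u_A}) + \ln\gs_B(e^{\ln k - u_A})$ is concave, hence minimized at an endpoint, is the same argument the paper phrases via the sign of a derivative); the identity $d_{C'} = \prod_{i \geq 2} d_i$, the preservation of concavity and log-log concavity under scaled minima, and the handling of $k < d_A d_B$ all match the paper's proof.

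The one step that does not go through as stated is the logarithmic case. Concavity of the terms, agreement at the common point $k = \prod_i d_i$, and ordering of the asymptotic growth rates do \emph{not} together imply that the smallest-$a$ term is minimal on all of $[\prod_i d_i, \infty)$: two concave functions agreeing at a point with ordered asymptotic slopes can still cross in between. The paper closes this with a sharper argument: writing $\gamma = k/\prod_i d_i \geq 1$, the $j$-th term equals $g(1/a_j)$ where $g(x) = h(x)/x$ and $h(x) = \ln\big(\gamma(e^x-1)+1\big)$ is concave with $h(0)=0$, so $g$ is the secant slope of $h$ from the origin and hence decreasing; since $a = \min_i a_i$ maximizes $1/a_i$, its term is the minimum. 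Your ``crossing argument'' would need to be replaced by something of this form to be rigorous; everything else in the proposal is sound.
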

The proof is given in Section~\ref{sec:main_proof_2}. 
In Section~\ref{sec:Apps}, we illustrate the application of these results and derive new lower bounds for the communication cost of three bilinear algorithms (see Section~\ref{sec:intro} for a high-level description of our application contributions).
Theorems \ref{thm:lb} and \ref{thm:nest_lb} are easy to use and capable of producing nontrivial tight lower bounds.
However, one potential problem is that they both require defining the lower bound $\gs$ on the entire range $\R_+$, which requires extrapolating $\gs$ beyond the intended domain $[0, n]$ (e.g., in Theorem~\ref{thm:lb}, the optimal $k_A,k_B$ may have $k_A\geq n_A$ or $k_B \geq n_B$).
In 
Appendix~\ref{appendix:improved_bnd},
we derive lower bounds $\gs_C$ that do not require extrapolating $\gs_A$ and $\gs_B$ to $\R_+$, which often results in tighter lower bounds on $\gs_C$.

\section{Rank Analysis of Kronecker Product via Grid Expansion} \label{sec:NestedSigma}

\edits{
In this section, we introduce a grid representation to analyze the rank of a column-wise submatrix $\bC\bP$ in $\bC = \bA \otimes \bB$. 
The main idea is to represent columns of $\bC\bP$ as a subset in the 2-D grid representing columns of $\bA$ and $\bB$.
We then manipulate the grid representation, compactifying the set of grid points, while keeping intact any low-rank structure.
After the combinatorial arguments in this section, we derive rank bounds (including our main result, Theorem~\ref{thm:lb}) by continuous analysis of the resulting compact geometric structure formed by the grid points.
}

\subsection{Grid Framework}
\label{sec:prelim}


\edits{
We first introduce our grid representation and then define the notions of a grid basis and a compact grid.
A basis is a set of grid points that represents a linearly independent set of columns that span all points in a given grid.
For compact grids, we show that there is a basis with a simple reducible structure.
}
\subsubsection{Grid Representation}

Let $\B{A} \in \mathbb{C}^{m_A \times n_A}$ and $\B{B} \in \mathbb{C}^{m_B \times n_B}$ be arbitrary matrices, and let $\B{C} = \B{A} \otimes \B{B}$ be given by
\[
    \B{C} 
    = 
    \begin{bmatrix}
        \ba_{1} \otimes \B{B} \ & \ \cdots & \ \ba_{i} \otimes \B{B} \ \cdots \ \ba_{n_A} \otimes \B{B} 
    \end{bmatrix}.
\]
The column $\B{c}_k$ from $\B{C}$ is defined as $\ba_i \otimes \bb_j$ for some columns $\ba_i$ and $\bb_j$. Thus, we will refer to a column $\B{c}_k$ by the tuple $(i, j)$. 

Now recall that $\mathcal{P}_n^{(k)}$ is the set of matrices comprised of $k$ different columns from an identity matrix of size $n$.
For any $\B{P} \in \mathcal{P}_{\#\text{cols}(\B{C})}^{(k)}$, $\bC\bP$ then 
contains $k$ column vectors of $\bC$. 
Thus, $\bC\bP$ can be identified (up to a column reordering) by a
set of $k$ tuples. 
In particular, we can view this set of tuples as a set of grid points from an $n_A \times n_B$ grid, as shown in Figure \ref{fig:example}. 
Note that we use the \textit{Cartesian indexing system}, not the array indexing system!
We call this set of grid points \textit{the grid representation of} $\bC\bP$. 
\begin{figure}[h]
   \includegraphics[width=8cm]{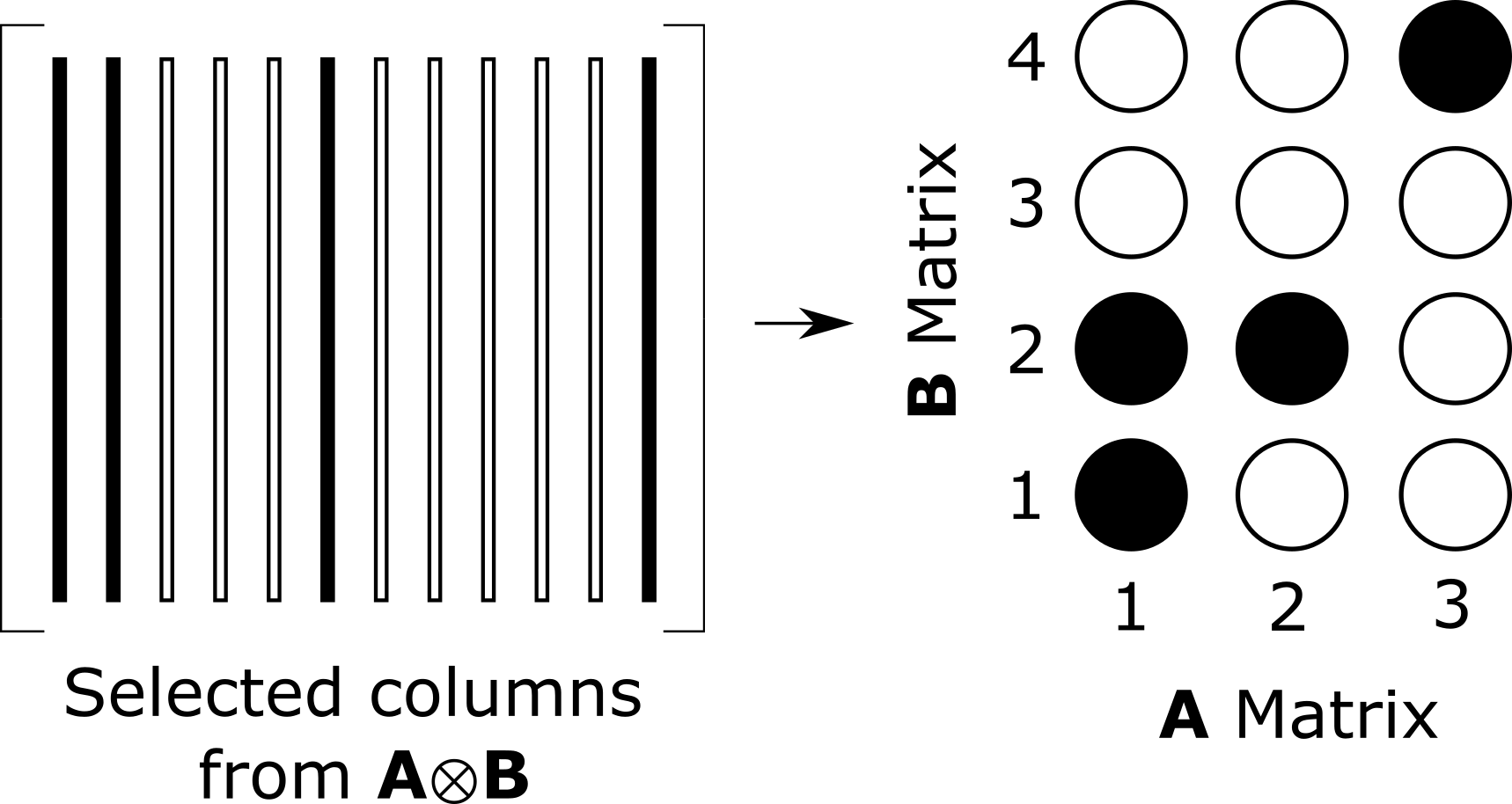}
   \centering
   \caption{
    Let $\B{A} \in \mathbb{C}^{m_A \times 3}$, $\B{B} \in \mathbb{C}^{m_B \times 4}$, and $\bC = \bA \otimes \bB$. The subset of columns $\{\ba_1 \otimes \bb_1, \ba_1 \otimes \bb_2, \ba_2 \otimes \bb_2, \ba_3 \otimes \bb_4\}$ from $\bC$ is represented by the black-filled columns on the figure on the left. This subset of columns is equivalent to the matrix product $\B{CP}$ for some $\B{P} \in \mathcal{P}_{12}^{(4)}$. The grid representation of $\B{CP}$ is the $3 \times 4$ grid on the right. Each black-filled circle represents a selected column of $\B{C}$, and its $(i,j)$ index is the column from $\B{A}$ and $\B{B}$ defining that column.}
    \label{fig:example}
\end{figure}

For a grid $G \sset[n_A]\times[n_B]$, 
we write $G_{[i,\cdot]} = G \cap (\set{i} \times [n_B])$ and
$G_{[\cdot,j]} = G \cap ([n_A] \times \set{j})$ 
to be the set of grids points in column $i$ and row $j$ of $G$, respectively.
We denote the size of a grid $G$ by $|G|$.
When it is not ambiguous,
we use $(i, j)$ and $\ba_i \otimes \bb_j$ interchangeably to denote a column of $\bC$, and associate a grid $G$ with a submatrix of $\bC$, referring to the subset of column vectors represented by $G$.
When we say $\spp\set{G}$, we mean the space spanned by column vectors represented by grid $G$, and we call the dimension of this space by $\rank(G)$.

We endow the grid points with the colexicographic order, 
where $(i,j)$ precedes $(i',j')$ if $i < i'$ or if $i=i'$ and $j < j'$. 
We then denote by $[(i, j)]$ the set of points $(k, \ell) \leq (i, j)$ in this order, and we write $((i, j)) = [(i, j)] \setminus \set{(i, j)}$.
We naturally extend this ordering to two grids by sorting their points and then comparing the first pair, second pair, and so on. 

In order to analyze the rank of $\bC\bP$, we introduce the notion of a basis, the maximal linearly independent set of columns in $\bC\bP$, as defined below.

\begin{defn}
    Let $G$ be the grid representation of $\B{CP}$.
    The \textit{basis} of $G$ is a subgrid $B\sset G$,
    such that column vectors in $B$ comprise a maximal linearly independent set of column vectors in $G$,
    and $B$ is minimal in the colexicographic order among all such subgrids of $G$.
\end{defn}

Algorithm~\ref{alg:basis_construction} illustrates this definition and is one concrete way to find the basis.
We traverse the grid in the colexicographic order and add point $(p, q)$ to the basis if it is not in the span of $((p,q))$.

\begin{algorithm}[t]
   \begin{algorithmic}[1]
   \Function{BasisSelection}{Grid $G$}
     \State $B \gets \{ \ \}$ 
     \For{$(i,j) \in G$ using colexicographic traversal order} 
       \If{$(i, j) \not \in \mathrm{span}\set{B}$} \label{line:in}
         \State $B \gets B \cup \{(i,j)\}$ \Comment{Add grid point if it is ``new'' to the span}
       \EndIf
     \EndFor
   \EndFunction
   \Return $B$
   \end{algorithmic}
  \caption{Constructing a unique basis}
   \label{alg:basis_construction}
\end{algorithm}

In order to describe which columns in $\bA$ and $\bB$ are involved in $\bC\bP$ and simplify the notations, we define the following projections.

\begin{defn}
  The $\B{A}$-\textit{projection} of the grid $G$, denoted by $P_A(G$),
  is the set of indices 
  $I \subseteq [n_A]$ such that $\csg{G}{i} \neq \emptyset$. Likewise, the 
  $\B{B}$-\textit{projection}, $P_B(G)$,
  is the set of indices $J \subseteq [n_B]$
  such that $\rsg{G}{j} \neq \emptyset$.
\end{defn}
The $\B{A}$-projection and $\B{B}$-projection can be viewed as the shadow of the grid $G$ onto the $x$-axis (for columns of $\B{A}$) and the $y$-axis (for columns of $\B{B}$), respectively. We provide an example in Figure~\ref{fig:shadow}.

\begin{figure}[htbp]
   \includegraphics[width=0.6\linewidth]{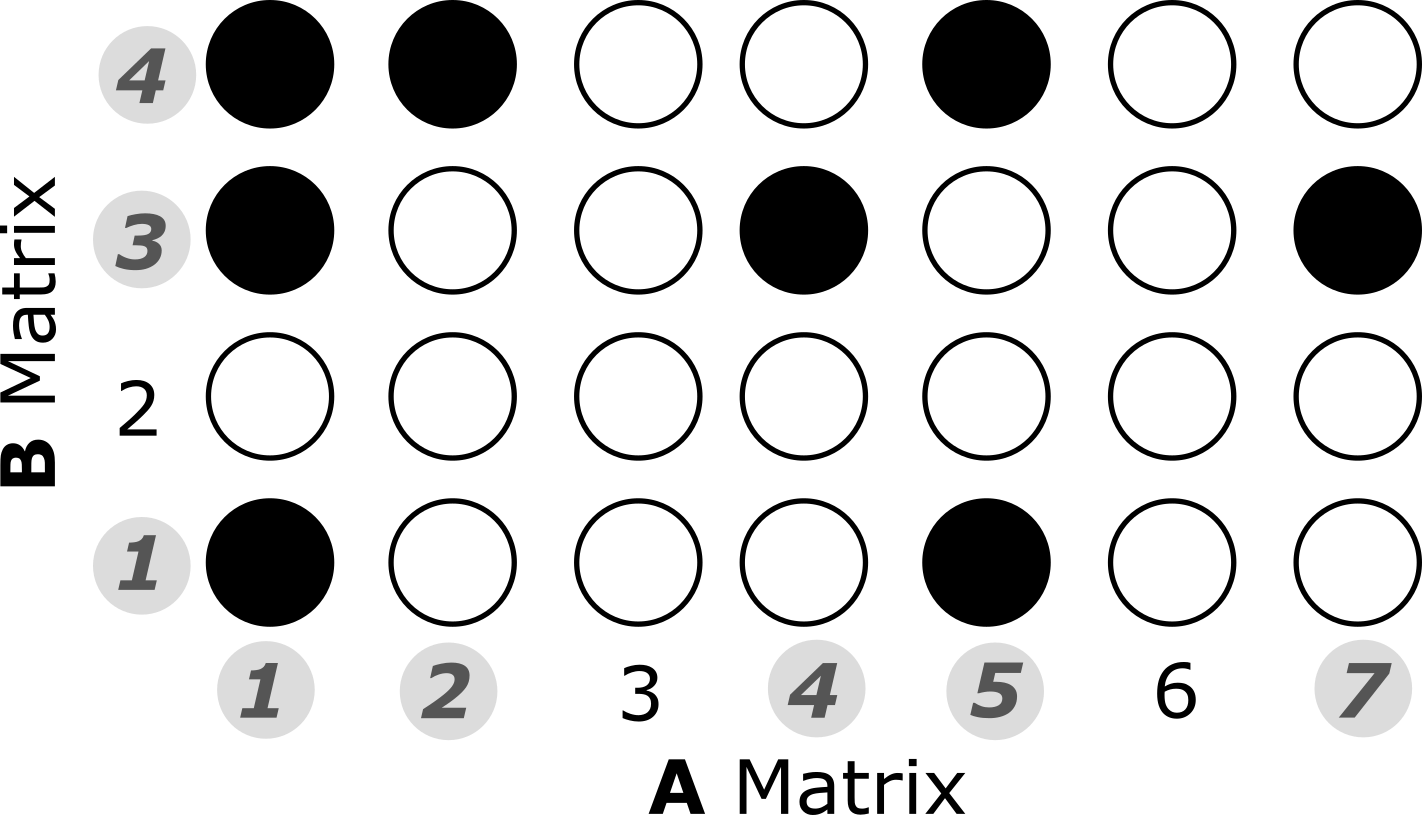}
   \centering
   \caption{A grid representation for a $\B{CP}$ where $\B{C} = \B{A} \otimes \B{B}$. The $\B{A}$-projection is the set $\{1,2,4,5,7\}$, or the set of indices on the x-axis highlighted in bold italics and with a gray ball. Similarly, the $\B{B}$-projection is the set $\{1,3,4\}$.}
    \label{fig:shadow}
\end{figure}

\subsubsection{Compact Dense Grids}

Now we introduce a class of structured grids that is the key to our analysis:
compact dense grids (CDGs).

\begin{defn} \label{def:cdg}
    A grid $G$ is a \textit{dense grid} if for every 
    $i$, $\csg{G}{i} = \set{i}\times [k]$ for some $k\geq 0$.
    A grid $G$ is a \textit{compact dense grid} (CDG) if it is a dense grid and 
    $|\csg{G}{i}|$ is non-increasing in $i$.
\end{defn}

We show the difference between a non-dense grid, a dense grid, and a CDG in
Figure~\ref{fig:compact}. We can transform any arbitrary grid into a dense grid 
by collapsing, i.e., letting the grid points fall vertically, as shown in the same figure. We make this concrete below.

\begin{defn} \label{def:vcol}
    Let $G$ be an arbitrary grid. Then a \textit{vertical collapse} (VCollapse) of
    $G$ produces a dense grid ${D}$, where
    \begin{align*}
        {D}_{[i,\cdot]} = 
        \begin{cases}
        \set{i} \times \big{[}|G_{[i,\cdot]}|\big{]} \ &: G_{[i,\cdot]} \neq \emptyset \\
        \emptyset \ &: G_{[i,\cdot]} = \emptyset
        \end{cases}
    \end{align*}
\end{defn}

Given any grid $G$, we can vertically collapse it to a dense grid $D_0$.
Then with a reordering of the columns of matrix $\bA$ (the $x$-axis), we can produce a CDG $D$ from $D_0$.
This process is illustrated in Figure~\ref{fig:compact}.
These two operations commute, and note that only the VCollapse step may change the rank of $G$ (or equivalently $\bC\bP$).

\begin{figure}[h]
   \includegraphics[width=\linewidth]{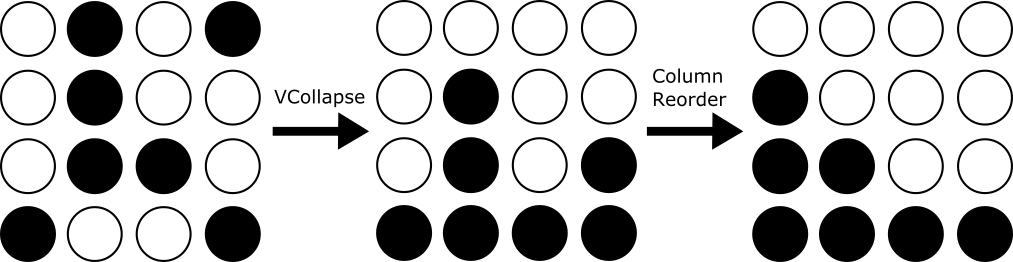}
   \centering
   \caption{Example a non-dense grid (left), dense grid (center), and CDG
   (right). The left grid is non-dense since the $\B{B}$-projection of the second column equal is $\{2,3,4\}$. 
   The action of a vertical collapse ensures the resulting grid (center) is dense.}
    \label{fig:compact}
\end{figure}

\subsubsection{Reducible Structure in the Basis of a Compact Dense Grid}

Here we establish the most important property of a CDG -- its basis has a reducible structure. 
This is one of the key steps towards the main theorems.
We start with two lemmas.
As before, we denote the columns of $\bA$ and $\bB$ by $\set{\ba_i}$ and $\set{\bb_j}$.
We identify column vectors of $\bC = \bA \otimes \bB$ in the 2-dimensional grid.

\begin{lemma} \label{lem:inspan}
    Let $\bu \in \mathbb{C}^n$, $\bv \in \mathbb{C}^m$. 
    Let $\set{\ba_i}_{i = 1}^p \sset \C^n$, 
    $\set{\bb_j}_{j = 1}^q \sset \C^m$ be sets of vectors.
    Then $\bu \otimes \bv \in \spp\set{\ba_i \otimes \bb_j}$
    if and only if $\bu \in \spp\set{\ba_i}$ and 
    $\bv \in \spp\set{\bb_j}$.
\end{lemma}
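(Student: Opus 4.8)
The plan is to prove both directions of the iff using the structure of the Kronecker product together with standard bilinearity.

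\medskip

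\textbf{The easy direction ($\Leftarrow$).} Suppose $\bu = \sum_i \ga_i \ba_i$ and $\bv = \sum_j \gb_j \bb_j$ for suitable scalars $\ga_i, \gb_j \in \C$. I would then expand the Kronecker product using its bilinearity:
\[
  \bu \otimes \bv = \Big(\sum_i \ga_i \ba_i\Big) \otimes \Big(\sum_j \gb_j \bb_j\Big) = \sum_{i,j} \ga_i \gb_j \, (\ba_i \otimes \bb_j).
\]
This exhibits $\bu \otimes \bv$ as an explicit linear combination of the $\ba_i \otimes \bb_j$, so $\bu \otimes \bv \in \spp\set{\ba_i \otimes \bb_j}$. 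This direction is essentially immediate from the mixed-product / bilinearity property of $\otimes$.

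\medskip

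\textbf{The harder direction ($\Rightarrow$).} This is where the real content lies. Assume $\bu \otimes \bv \in \spp\set{\ba_i \otimes \bb_j}$, so $\bu \otimes \bv = \sum_{i,j} c_{ij}\, (\ba_i \otimes \bb_j)$ for some scalars $c_{ij}$. The plan is to view $\bu \otimes \bv$ as a rank-one outer product matrix $\bu \bv^T \in \C^{n \times m}$ under the standard isomorphism $\C^{nm} \cong \C^{n\times m}$ that sends $\ba_i \otimes \bb_j \mapsto \ba_i \bb_j^T$. Under this reshaping, the hypothesis becomes
\[
  \bu \bv^T = \sum_{i,j} c_{ij}\, \ba_i \bb_j^T = \Big(\sum_i \ba_i \, \mathbf{e}_i^T\Big) \, C \, \Big(\sum_j \mathbf{e}_j \bb_j^T\Big),
\]
i.e. $\bu \bv^T = \tilde{\bA} C \tilde{\bB}^T$ where $\tilde{\bA}$ has columns $\ba_i$, $\tilde{\bB}$ has columns $\bb_j$, and $C = (c_{ij})$. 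The key observation is that the column space of $\bu \bv^T$ is $\spp\set{\bu}$ (assuming $\bv \neq 0$) and its row space is $\spp\set{\bv}$ (assuming $\bu \neq 0$). Since the column space of $\tilde{\bA} C \tilde{\bB}^T$ is contained in the column space of $\tilde{\bA}$, namely $\spp\set{\ba_i}$, we immediately get $\bu \in \spp\set{\ba_i}$; symmetrically, the row space argument gives $\bv \in \spp\set{\bb_j}$. The degenerate cases where $\bu = 0$ or $\bv = 0$ make the conclusion trivial, so they can be disposed of first.

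\medskip

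\textbf{Main obstacle.} The only delicate point is handling the column/row space containments rigorously and dealing with the degenerate cases cleanly; everything else is bilinearity. An alternative to the outer-product reshaping that avoids even this subtlety is to argue by contradiction using a projection: if $\bu \notin \spp\set{\ba_i}$, pick a linear functional (a covector) $\bw$ annihilating all $\ba_i$ but with $\bw^T \bu \neq 0$, then apply $\bw \otimes I$ to both sides of $\bu \otimes \bv = \sum_{i,j} c_{ij}(\ba_i \otimes \bb_j)$; the right side vanishes while the left yields $(\bw^T\bu)\,\bv \neq 0$, a contradiction (here one uses $\bv \neq 0$, which holds since $\bu\otimes\bv \neq 0$ in the nontrivial case). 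The symmetric argument with $I \otimes \bw'$ handles $\bv$. I expect the reshaping-to-outer-product viewpoint to give the cleanest writeup, with the annihilator argument as a reassuring cross-check.
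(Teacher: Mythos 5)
Your proposal is correct and, unlike the paper, actually proves the statement: the paper's ``proof'' of Lemma~\ref{lem:inspan} is a one-line citation to the standard fact that $\set{\ba_i\otimes \bb_j}$ behaves like a tensor-product basis (referring to Halmos). Your matricization argument --- reshape $\bu\otimes\bv$ to the rank-one matrix $\bu\bv^T$, write the hypothesis as $\bu\bv^T=\tilde{\bA}C\tilde{\bB}^T$, and compare column and row spaces --- is a clean, self-contained substitute, and the annihilator variant ($\bw^T\otimes \bI$ applied to both sides) is equally valid and is essentially the projection idea the paper itself uses one lemma later in Lemma~\ref{lem:newcol}. So the two routes buy different things: the paper's citation is shorter but opaque; your argument is elementary and makes visible exactly where each hypothesis is used.

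One small caution: your remark that the degenerate cases $\bu=0$ or $\bv=0$ ``make the conclusion trivial'' is not quite right for the forward direction. If $\bu=0$ then $\bu\otimes\bv=0$ lies in the span and $\bu\in\spp\set{\ba_i}$ trivially, but the claimed conclusion $\bv\in\spp\set{\bb_j}$ can genuinely fail, so the ``only if'' direction of the lemma as literally stated requires $\bu\neq 0$ and $\bv\neq 0$. This is an imprecision inherited from the lemma statement itself (in all of the paper's applications the vectors are nonzero columns of full-rank matrices), but your writeup should say that the degenerate cases are \emph{excluded}, not that they are trivially true.
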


\begin{proof}
    \sloppy \edits{The result follows from the basis for a tensor product of finite-dimensional vector spaces. See for instance~\cite{halmos2017finite}.}
\end{proof}

\begin{lemma} \label{lem:newcol}
    If $\ba_p \notin \spp\set{\ba_i}_{i<p}$ and $F\sset[(p, q)]$ is a set such that $(p, q) \in \spp\set{F}$, then $(p, q) \in \spp\set{F_{[p, \cdot]}}$.
\end{lemma}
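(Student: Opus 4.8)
The plan is to exploit bilinearity of the Kronecker product to ``project out'' every column of $\bA$ except $\ba_p$. Since $(p,q)\in\spp\set{F}$, I would first record the dependence explicitly as
\begin{equation*}
  \ba_p\otimes\bb_q = \sum_{(i,j)\in F} c_{ij}\,(\ba_i\otimes\bb_j).
\end{equation*}
Because $F\sset[(p,q)]$, every index $(i,j)\in F$ satisfies either $i<p$ (with $j$ arbitrary) or $i=p$ (with $j\leq q$), and the latter points are exactly those of $\csg{F}{p}$. So I split the sum into a part indexed by $\csg{F}{p}$ and a part indexed by points with $i<p$; the goal is to show the $i<p$ part can be discarded.

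To eliminate the $i<p$ terms I would invoke the hypothesis $\ba_p\notin\spp\set{\ba_i}_{i<p}$. Since $\ba_p$ lies outside the subspace $V=\spp\set{\ba_i}_{i<p}\sset\C^{m_A}$, a standard dual-space (double-annihilator) argument produces a linear functional, represented by a vector $\bm{\varphi}\in\C^{m_A}$ under the bilinear pairing $v\mapsto\bm{\varphi}^T v$, with $\bm{\varphi}^T\ba_i=0$ for all $i<p$ and $\bm{\varphi}^T\ba_p=1$. Identifying each tensor $\ba_i\otimes\bb_j$ with the rank-one matrix $\ba_i\bb_j^T$ (equivalently, applying the contraction $\bm{\varphi}\otimes\mathrm{id}$ to the displayed relation), left-multiplication by $\bm{\varphi}^T$ annihilates every term with $i<p$ and acts as the identity scalar on the remaining terms.

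Carrying this out yields $\bb_q=\sum_{(p,j)\in\csg{F}{p}} c_{pj}\,\bb_j$, that is, $\bb_q\in\spp\set{\bb_j:(p,j)\in\csg{F}{p}}$. Tensoring this identity on the left with $\ba_p$ gives $\ba_p\otimes\bb_q=\sum_{(p,j)\in\csg{F}{p}} c_{pj}\,(\ba_p\otimes\bb_j)$, which is exactly the assertion $(p,q)\in\spp\set{\csg{F}{p}}$. I do not expect to need Lemma~\ref{lem:inspan} here, since $F$ need not be a product grid and so its clean ``iff'' characterization does not apply directly; the functional argument sidesteps this. The only step needing genuine care is the construction of $\bm{\varphi}$: in the complex setting the relevant pairing is the bilinear form $\bm{\varphi}^T v$ rather than the Hermitian inner product, but the functional still exists precisely because $V$ is a proper subspace missing $\ba_p$. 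This construction is the crux; the remainder is bookkeeping.
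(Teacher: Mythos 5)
Your proof is correct, but it takes a genuinely different route from the paper's. The paper proves the lemma by contradiction using orthogonal projections: it sets $V=\spp\set{\ba_p\otimes\bb_{j_m}}$ for $\csg{F}{p}=\set{(p,j_m)}_m$, forms the residual $\bu=(\bI-\bP_V)(\ba_p\otimes\bb_q)=\ba_p\otimes(\bI-\bP_{V'})\bb_q$, observes that a nonzero $\bu$ would be a pure tensor $\ba_p\otimes(\cdot)$ lying in the span of the remaining points of $F$ (all of which have first index $i<p$), and invokes Lemma~\ref{lem:inspan} to contradict $\ba_p\notin\spp\set{\ba_i}_{i<p}$. You instead contract the dependence relation against a separating functional $\bm{\varphi}$ with $\bm{\varphi}^T\ba_i=0$ for $i<p$ and $\bm{\varphi}^T\ba_p=1$, which kills the $i<p$ terms outright and directly exhibits $\bb_q$ as a combination of $\set{\bb_j:(p,j)\in\csg{F}{p}}$. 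Your construction of $\bm{\varphi}$ is legitimate: since $\ba_p$ lies outside the proper subspace $\spp\set{\ba_i}_{i<p}$, such a functional exists (and it automatically annihilates the smaller set of $\ba_i$ actually appearing in $F$). What your approach buys is a direct, non-contradictory argument that is independent of Lemma~\ref{lem:inspan} and avoids the paper's slightly delicate claim that the projection residual lies in the span of the $i<p$ points of $F$; what the paper's approach buys is reuse of Lemma~\ref{lem:inspan}, keeping all the tensor-rank reasoning routed through that single characterization. Both are sound.
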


\begin{proof}
    Let $\csg{F}{p} = \set{(p, j_m)}_m$. 
    Let $V' = \spp\set{\bb_{j_m}}$ and $V = \spp\set{\ba_p \otimes \bb_{j_m}}$ $= \ba_p \otimes V'$. Let $\bP_V$, $\bP_{V'}$ be orthogonal projections. 
    Let $\bu = (\bI - \bP_V)(\ba_p \otimes \bb_q) = \ba_p \otimes (\bI - \bP_{V'}) \bb_{q}$. 
    If $\bu \neq 0$, then $\bu$ is in the span of $F\setminus\set{(p, j_m)}_m$, but this is impossible by Lemma \ref{lem:inspan}, since $\ba_p \notin \spp\set{\ba_i}_{i<p}$. 
    Thus, $\bu = 0$ and $(p, q) \in \spp\set{F_{[p, \cdot]}}$.
\end{proof}

Now we are ready to prove the following key result on the structure of the basis of a CDG.

\begin{proposition} \label{prop:tnsprodB}
    Let $D$ be a CDG and its basis be $B_D$. Let the $\bA$ and $\bB$-projection of $B_D$ be $X = P_A(B_D)$ and $Y = P_B(B_D)$, respectively. Then
    \begin{enumerate}[nosep, label=(\roman*)]
        \item $(1, j) \in B_D$ if and only if $\bb_j \notin \spp\set{\bb_k}_{k \in Y \cap [j - 1]}$,
        \item $(i, 1) \in B_D$ if and only if $\ba_i \notin \spp\set{\ba_k}_{k \in X \cap [i - 1]}$,
        \item $B_D = (X \times Y) \cap D$,
        \item For each $p \in X$, $\csg{(B_D)}{p}$ spans $\csg{D}{p}$,
        \item For each $q \in Y$, $\rsg{(B_D)}{q}$ spans $\rsg{D}{q}$.
    \end{enumerate}
\end{proposition}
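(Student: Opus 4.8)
The plan is to isolate two facts about a CDG $D$ and read all five statements off them: first, exactly which columns of $D$ contribute to the basis (pinning down $X = P_A(B_D)$), and second, a \emph{column-uniform} membership rule stating that once a column $p$ is ``active,'' whether $(p,q)\in B_D$ depends only on $q$. Throughout I process the grid in colexicographic order (primary key the $\B A$-index $i$), so the greedy characterization $(p,q)\in B_D \iff (p,q)\notin\spp\set{D\cap((p,q))}$ is available. Let $X^\ast$ denote the set of \emph{present} columns $p$ (those with $\csg{D}{p}\neq\emptyset$) satisfying $\ba_p\notin\spp\set{\ba_k: k\in X^\ast\cap[p-1]}$; this is the greedy maximal independent subset of the $\B A$-columns, defined without reference to the basis. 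The goal is to prove $X=X^\ast$ and the uniform rule $(p,q)\in B_D \iff \bb_q\notin\spp\set{\bb_1,\dots,\bb_{q-1}}$ for every $p\in X^\ast$.

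The key leverage is the non-increasing column heights of a CDG. Since $\abs{\csg{D}{i}}$ is non-increasing, any present column $p$ has all earlier columns present and at least as tall, so $\spp\set{\ba_k:k\in X^\ast\cap[p-1]}=\spp\set{\ba_1,\dots,\ba_{p-1}}$; thus $p\in X^\ast$ is equivalent to $\ba_p\notin\spp\set{\ba_1,\dots,\ba_{p-1}}$, which is exactly the hypothesis of Lemma~\ref{lem:newcol}. I split on this condition. If $\ba_p$ is redundant, then for each $j\le\abs{\csg{D}{p}}$ I expand $\ba_p\otimes\bb_j=\sum_{k\in X^\ast\cap[p-1]} c_k(\ba_k\otimes\bb_j)$; the CDG property guarantees every $(k,j)$ with $k<p$ still lies in $D$ (its column is at least as tall) and precedes $(p,j)$, so $(p,j)\in\spp\set{D\cap((p,j))}$ and the column contributes nothing. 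If instead $p\in X^\ast$, Lemma~\ref{lem:newcol} with $F=D\cap((p,q))$ collapses any dependence of $(p,q)$ onto its own column, $(p,q)\in\spp\set{(p,1),\dots,(p,q-1)}=\ba_p\otimes\spp\set{\bb_1,\dots,\bb_{q-1}}$, and Lemma~\ref{lem:inspan} reduces this to $\bb_q\in\spp\set{\bb_1,\dots,\bb_{q-1}}$; the converse inclusion is immediate, giving the uniform rule. Since $\bb_1\neq 0$ by full rank, $(p,1)\in B_D$ for every $p\in X^\ast$, so present columns contribute iff they lie in $X^\ast$; this gives $X=P_A(B_D)=X^\ast$ and statement (ii). Statement (iv) follows because $\set{q\le\abs{\csg{D}{p}}:\bb_q\notin\spp\set{\bb_1,\dots,\bb_{q-1}}}$ is a maximal independent subset of $\set{\bb_1,\dots,\bb_{\abs{\csg{D}{p}}}}$, so $\csg{(B_D)}{p}$ spans $\csg{D}{p}$.

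The remaining statements are bookkeeping on the uniform rule. For a row $q$, the rule shows $(i,q)\in B_D$ for \emph{all} active $i\in X$ tall enough to reach $q$, precisely when $\bb_q\notin\spp\set{\bb_1,\dots,\bb_{q-1}}$; writing $r_q=\max\set{i:\abs{\csg{D}{i}}\ge q}$, the set $X\cap[r_q]$ is a maximal independent subset of $\set{\ba_1,\dots,\ba_{r_q}}$, so $\rsg{(B_D)}{q}$ spans $\spp\set{\ba_1,\dots,\ba_{r_q}}\otimes\bb_q=\rsg{D}{q}$, proving (v) and identifying $Y=P_B(B_D)=\set{q\le\max_i\abs{\csg{D}{i}}:\bb_q\notin\spp\set{\bb_1,\dots,\bb_{q-1}}}$. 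Statement (iii) then drops out: $(i,j)\in B_D$ iff $i\in X$, $\bb_j\notin\spp\set{\bb_1,\dots,\bb_{j-1}}$, and $(i,j)\in D$, and the last two conditions together are exactly $j\in Y$. Finally (i) is the first-column instance of the uniform rule combined with the fact that $Y\cap[j-1]$ is a maximal independent subset of $\set{\bb_1,\dots,\bb_{j-1}}$, whence $\spp\set{\bb_k:k\in Y\cap[j-1]}=\spp\set{\bb_1,\dots,\bb_{j-1}}$.

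The main obstacle is the backward direction of the uniform rule: a priori $(p,q)$ could depend on grid points scattered across all earlier columns, and I must reduce this to a dependence within column $p$ alone. This is precisely what Lemma~\ref{lem:newcol} supplies, but its hypothesis --- that $\ba_p$ is independent of \emph{every} preceding $\ba_i$, not merely of the selected $X$-representatives --- is what forces me to exploit the CDG's non-increasing heights at the outset; without compactness a redundant short column could precede $p$ and break the reduction. I will also note the harmless convention that (i) and (ii) are read over columns and rows present in $D$, since empty columns (which a CDG pushes to the end) contribute no basis points.
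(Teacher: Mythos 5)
Your proof is correct and follows essentially the same route as the paper's: a colexicographic greedy characterization of the basis, a case split on whether $\ba_p$ is spanned by the preceding $\ba_i$, Lemma~\ref{lem:newcol} to collapse dependence into column $p$, Lemma~\ref{lem:inspan} to reduce to a condition on $\bb_q$ alone, and the non-increasing column heights of the CDG to guarantee the points $(k,j)$, $k<p$, lie in $D$. Your explicit verification that $\spp\set{\ba_k}_{k\in X\cap[p-1]}=\spp\set{\ba_1,\dots,\ba_{p-1}}$ (needed to match the hypothesis of Lemma~\ref{lem:newcol} with the membership condition in part (ii)) is a detail the paper leaves implicit, but the argument is otherwise the same.
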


\begin{proof}
    Part (i) is clear since the basis has minimal colexicographic order (see Algorithm \ref{alg:basis_construction}). 
    
    For part (ii), the ``if" direction follows from Lemma \ref{lem:inspan}, which implies $(i, 1)$ is not spanned by preceding columns $((i, 1))$, 
    and we conclude using the minimal colexicographic order property of the basis.
    For the ``only if" part, note that $(i, 1) \in B_D$ implies $(i, 1) \notin \spp\set{(k, 1)}_{k \in X \cap[i-1]}$.
    The conclusion then follows.
        
    For part (iii) if $D$ only involves a single column of the 2-D grid, then it is clear.
    Now suppose that CDG $D$ involves at least $p > 1$ columns of the 2-D grid.

    \textit{Case 1}: If $\ba_p$ is spanned by $\set{\ba_i}_{i < p}$, then $\fall q$, $(p, q)$ is spanned by preceding points $\set{(i, q)}_{i < p} \sset ((p, q)) \sset D$, where the last inclusion is from the CDG structure of $D$. Thus, $(p, q)$ is not added to $B_D$ for any $q$.

    \textit{Case 2}: Now suppose $\ba_p$ is \textit{not} spanned by $\set{\ba_i}_{i < p}$. 
    For a fixed $q$, if $\bb_q \notin \spp\set{\bb_j}_{j < q}$, then by Lemma \ref{lem:newcol}, $(p, q)$ is not spanned by preceding columns $((p, q))$ and is thus added to the basis. 
    If $\bb_q \in \spp\set{\bb_j}_{j < q}$, then $(p, q)$ is spanned by $\set{(p, j)}_{j < q} \sset ((p, q)) \sset D$. Thus $(p, q) \notin B_D$.

    Case 1 means that if $\ba_p$ is spanned by $\set{\ba_i}_{i < p}$, then we skip this column in forming $B_D$. Case 2 implies that if we do not skip the $p$th column, then we add point $(p, q)$ to $B_D$ if and only if $\bb_q \notin \spp\set{\bb_j}_{j < q}$ if and only if $(1, q) \in B_D$. Thus, $B_D = (X \times Y) \cap D$.

    Parts (iv) and (v) are then clear.
    Part (iv) follows from (i), (iii) and
    part (v) follows from (ii), (iii).  
    This completes the proof.
\end{proof}

Graphically, this proposition says that $B_D$ should look like the one in Figure \ref{fig.DBD}, and for non-empty rows (columns) in $B_D$, the selected vectors span the vectors in that row (column) of $D$. 

\begin{figure}
    \centering
    \includegraphics[width = 0.45\linewidth]{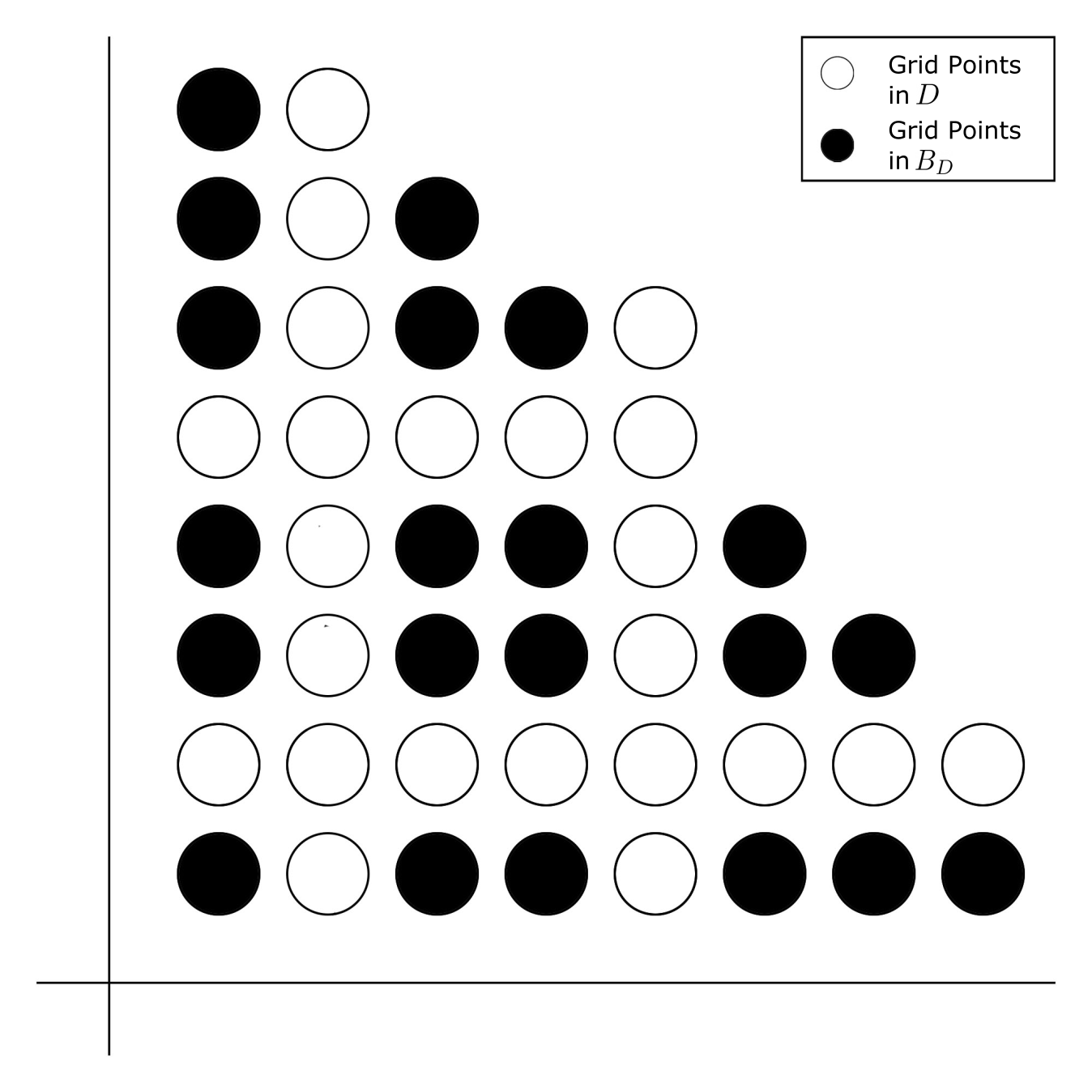}
    \caption{An illustration of the structure of $B_D$.}
    \label{fig.DBD}
\end{figure}

\subsection{Grid Compactification and Expansion} \label{sec:main_proof}

\edits{
Having shown in Proposition~\ref{prop:tnsprodB} that the basis for a CDG has a simple structure, our goal is now to bound the basis size (rank) of an arbitrary grid using that of a CDG constructed by collapsing that grid (via VCollapse, see Definition~\ref{def:vcol}).
To achieve this, we first define a notion of CDG expansion, 
which serves to upper bound the number of columns in $\bC$ spanned by a basis of a CDG shape.
We then show that a subset of the basis of a collapsed CDG may be used to lower bound the rank of an arbitrary grid while having an expansion that upper bounds the size of the grid.
That is, for any grid $G$, we find a CDG $S$ such that
\begin{equation*}
    |S| \leq \rank(G) \text{~~and~~} |G| \leq |\mathrm{GridExp}(S)|. 
\end{equation*} 
It then suffices to bound $|\mathrm{GridExp}(S)|$ with $|S|$.
Since the expansion we consider preserves a simple geometric structure, we achieve this \editstwo{latter bound} by continuous analysis in Section~\ref{sec:cont_step}.
}

\subsubsection{Basis Expansion} \label{sec:discrete_analysis}

We start by defining $\spcexp$, or the grid expansion, on CDGs.
For a CDG-shaped basis, this upper bounds the number of columns in $\bC$ the basis can span.
The definition consists of two steps:
$\vspcexp$ and $\hspcexp$. They commute with each other. 

\begin{defn} \label{def:gridexp_def}
    Let $\gs_A$ and $\gs_B$ be rank expansion lower bounds of $\bA$ and $\bB$.
    For a CDG $S$, $\vspcexp(S)$ is a CDG with the same number of columns of $S$. Each column is of size
    \begin{equation*}
        |\csg{\vspcexp(S)}{i}| = \min\set{n_B,~\lfloor\gs_B^\dagger(|\csg{S}{i}|)\rfloor}.
    \end{equation*}  
    $\hspcexp(S)$ is a CDG with the same number of rows of $S$. Each row is of size
    \begin{equation*}
        |\rsg{\hspcexp(S)}{j}| = \min\set{n_A,~\lfloor\gs_A^\dagger(|\rsg{S}{j}|)\rfloor}.
    \end{equation*}  
    We define $\spcexp(S)$ as the CDG
    \begin{equation*}
        \spcexp(S) = \hspcexp(\vspcexp(S)).
    \end{equation*}
\end{defn}

These expansions depend on the rank expansion of $\bA$ and $\bB$, $\gs_A$ and $\gs_B$ (Definition~\ref{def:RankExpDef}).
When $\gs_A$ and $\gs_B$ are understood from the context, we often abbreviate this dependency.
$\vspcexp$ grows $S$ vertically as much as possible such that $\vspcexp(S)$ can have the same span as $S$. Similarly, $\hspcexp$ grows horizontally. 
See steps 3 and 4 in Figure~\ref{fig:gridGrowth} for an illustration.

\begin{figure}[h]
    \includegraphics[width=\linewidth]{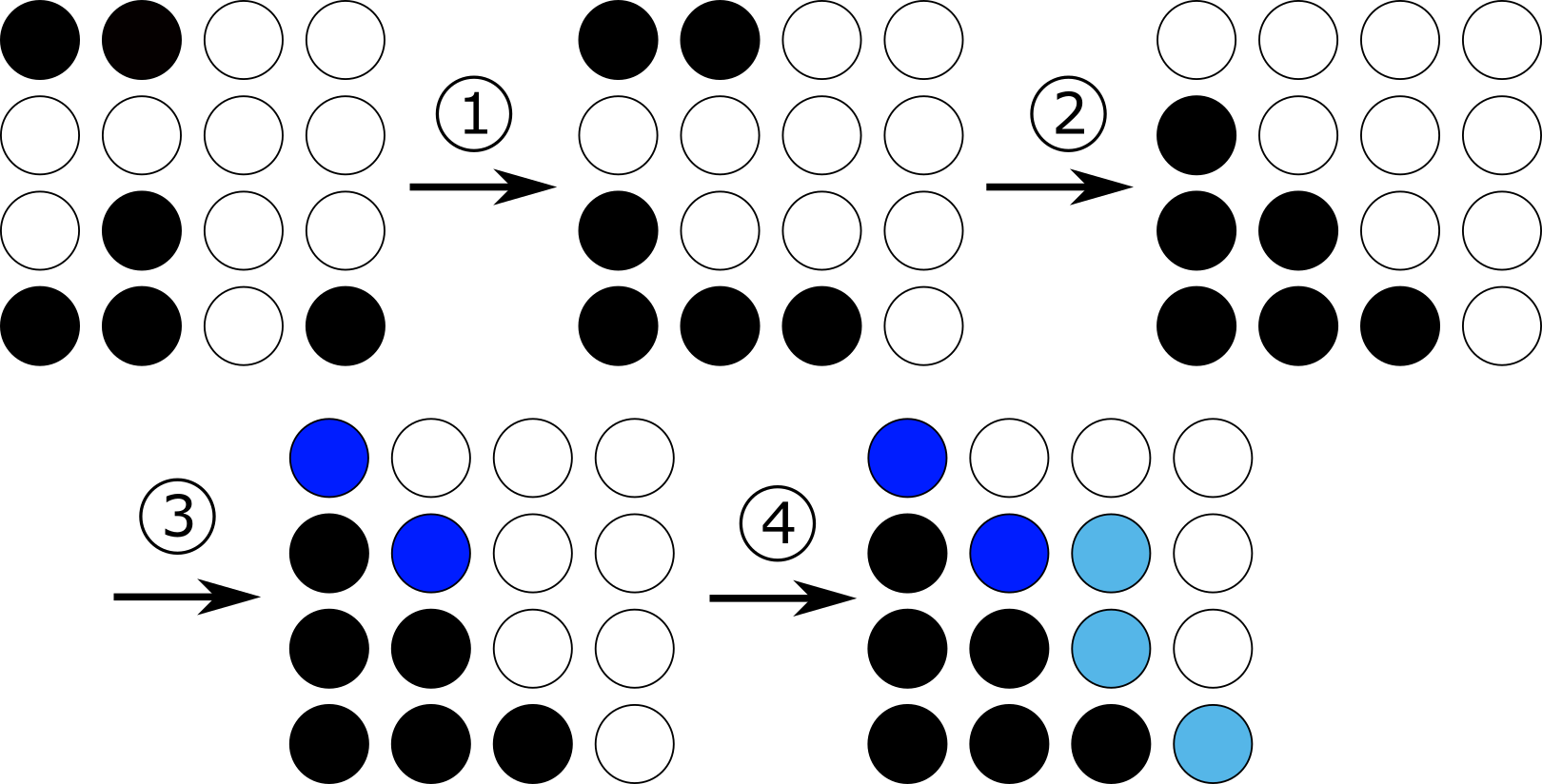}
    \centering
    \caption{Grid expansion of a pre-CDG (Definition~\ref{def:precompact_CDG}) $S$ on a $4 \times 4$ grid where $\gs_A^\dagger=\gs_B^\dagger = f$, and
    $\lfloor f(1) \rfloor=1$, $\lfloor f(2) \rfloor=3$, 
    $\lfloor f(3) \rfloor =4$, and $\lfloor f(4) \rfloor =5$. Through a column (Step 1, denoted by a 1 with a circle) and row (Step 2) reordering, we produce a CDG. 
    We then apply a vertical expansion (Step 3), resulting in two additional points in dark blue. 
    We finish with a horizontal expansion (Step 4), resulting in three more points in light blue.}
    \label{fig:gridGrowth}
\end{figure}

Note that $\spcexp(S)$ only depends on the \textit{shape} of $S$, and it is independent of the column vectors represented by $S$. 
Therefore, we can compute $\spcexp$ for a pre-compact dense grid as defined below. 

\begin{defn} \label{def:precompact_CDG}
    A grid $S$ is a \textit{pre-compact dense grid} (pre-CDG) if $S$ becomes a CDG after a column and row reordering of the grid.
    Denote this CDG by $\cdg(S)$. Then we extend the definition of $\spcexp$ to a pre-CDG $S$ by
    \begin{equation*}
        \spcexp(S) := \spcexp(\cdg(S)).
    \end{equation*}
\end{defn}
This is well-defined since the shape of $\cdg(S)$ is unique. See Figure~\ref{fig:gridGrowth} for an example of converting a pre-CDG to a CDG.

\subsubsection{Bounding General Grid Rank from a Compact Basis}

Now we are ready to prove the main result of the discrete step.

\begin{lemma}\label{lem:discrete_step}
    For any grid $G \sset [n] \times [m]$, there exists a pre-CDG $S$ such that
    \begin{enumerate}[nosep, label=(\roman*)]
        \item $|S| \leq \rank(G)$,
        \item $|G| \leq |\spcexp(S)|$.
    \end{enumerate}
\end{lemma}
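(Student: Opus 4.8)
The plan is to reduce the arbitrary grid $G$ to a compact dense grid by collapsing, read off the structured basis guaranteed by Proposition~\ref{prop:tnsprodB}, and then \emph{trim} that basis column by column down to exactly the height the vertical expansion can regenerate. Let $D = \cdg(\vcl(G))$ be the CDG produced by a vertical collapse of $G$ followed by the column reordering that sorts the column heights into non-increasing order; since both operations only slide or permute grid points, $|D| = |G|$, and by the earlier remark only the collapse can alter the rank. Write $k_i = |\csg{D}{i}|$ and $m_j = |\rsg{D}{j}|$ for the column heights and row widths of $D$, and let $B_D = (X \times Y) \cap D$ be its basis, with $X = P_A(B_D)$ and $Y = P_B(B_D)$ (Proposition~\ref{prop:tnsprodB}(iii)). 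I would take $S$ to be the sub-CDG of $B_D$ obtained by truncating each basis column $i \in X$ to height $s_i = \lceil \sigma_B(k_i) \rceil$. This truncation is legal: since $\csg{(B_D)}{i}$ spans $\csg{D}{i}$ (Proposition~\ref{prop:tnsprodB}(iv)), we have $|\csg{(B_D)}{i}| = \rank\{\bb_1,\dots,\bb_{k_i}\} \geq \tilde\sigma_B(k_i) \geq \sigma_B(k_i)$, so $s_i \leq |\csg{(B_D)}{i}|$; and as $k_i$ is non-increasing and $\sigma_B$ is increasing, $s_i$ is non-increasing, so $S$ is a genuine (pre-)CDG.

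For condition (ii) I would prove $D \subseteq \spcexp(S)$, whence $|G| = |D| \leq |\spcexp(S)|$. Vertically, each column $i \in X$ satisfies $\sigma_B^{\dagger}(s_i) \geq \sigma_B^{\dagger}(\sigma_B(k_i)) \geq k_i$, so $\vspcexp$ grows it to height $\min\{n_B, \lfloor \sigma_B^{\dagger}(s_i)\rfloor\} \geq k_i$; thus every column of $D$ supported on $X$ is recovered. The remaining columns of $D$ (those whose $\ba_i$ is dependent on earlier ones, hence absent from $B_D$) must be recovered horizontally. For a row $q$ of $D$ (necessarily $q \in Y = P_B(D)$, since the tallest column $1 \in X$ meets every nonempty row), the width of row $q$ in $\vspcexp(S)$ is at least $|\{i \in X : k_i \geq q\}| = |\rsg{(B_D)}{q}|$, and by Proposition~\ref{prop:tnsprodB}(v) this set spans $\rsg{D}{q}$, whose columns are built from $\ba_1,\dots,\ba_{m_q}$; hence $|\rsg{(B_D)}{q}| = \rank\{\ba_1,\dots,\ba_{m_q}\} \geq \sigma_A(m_q)$. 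Therefore $\hspcexp$ widens row $q$ to $\min\{n_A, \lfloor \sigma_A^{\dagger}(\sigma_A(m_q))\rfloor\} \geq m_q$. Since $D$ and $\spcexp(S)$ are both CDGs and every row of the latter is at least as wide as the corresponding row of $D$, we conclude $D \subseteq \spcexp(S)$.

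The delicate part, and where I expect the main obstacle to lie, is condition (i), $|S| \leq \rank(G)$. Collapsing can \emph{increase} the rank: a column of $G$ using linearly dependent columns of $\bB$ becomes, after $\vcl$, a column using $\bb_1,\dots,\bb_{k_i}$ of possibly larger rank, so $|B_D| = \rank(D)$ may exceed $\rank(G)$ and the untrimmed basis need not satisfy (i). The truncation to $s_i = \lceil \sigma_B(k_i)\rceil$ is designed precisely to absorb this gap. To control the size I would bound $\rank(G)$ from below by restricting to the independent $\ba$-directions indexed by $X$: the vectors $\{\ba_i : i \in X\}$ are linearly independent, so by Lemma~\ref{lem:inspan} the subspaces $\ba_i \otimes \spp\{\bb_j : (i,j) \in G\}$ sit in direct sum inside $\spp\{G\}$, giving $\rank(G) \geq \sum_{i \in X} \rho_i$, where $\rho_i = \rank\{\bb_j : (i,j) \in G\}$ is the rank of the original height-$k_i$ column that became column $i$ of $D$. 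Since $\rho_i \geq \tilde\sigma_B(k_i) \geq \sigma_B(k_i)$ and $\rho_i$ is an integer, $\rho_i \geq \lceil \sigma_B(k_i)\rceil = s_i$, and hence $\rank(G) \geq \sum_{i \in X} s_i = |S|$. Combining this with the covering argument of the previous paragraph completes the proof. The one point that requires care throughout is the interplay between the two expansions: trimming is safe only because the vertical expansion regenerates the truncated columns while the horizontal expansion fills in the columns dropped from the basis, and the direct-sum lower bound is exactly what certifies that the trimmed height stays within the budget $\rank(G)$.
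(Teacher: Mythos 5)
Your proof follows the same architecture as the paper's: vertically collapse $G$ to a CDG $D$, extract the structured basis $B_D=(X\times Y)\cap D$ from Proposition~\ref{prop:tnsprodB}, trim each basis column, and verify that the two expansions regenerate $D$. Where you depart, the argument still works and is arguably cleaner. First, you trim column $i$ to $\lceil\sigma_B(k_i)\rceil$, a function of the column's cardinality alone, whereas the paper trims to $r_i=\rank(G_{[i,\cdot]})$, the actual rank of the corresponding column of $G$; your $S$ is the smaller of the two, but still tall enough for the vertical expansion since $\sigma_B^\dagger(\lceil\sigma_B(k_i)\rceil)\geq k_i$. Second, and more substantively, your proof of (i) via the direct sum $\spp\set{G}\supseteq\bigoplus_{i\in X}\ba_i\otimes\spp\set{\bb_j:(i,j)\in G}$ (valid because $\set{\ba_i}_{i\in X}$ is linearly independent, combined with Lemma~\ref{lem:inspan}) replaces the paper's induction on the number of columns, which compares $|S^k|$ against the basis of $G\cap([k]\times[m])$ one column at a time. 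The global inequality $\rank(G)\geq\sum_{i\in X}\rank(G_{[i,\cdot]})$ is the same mechanism the induction exploits, but stated this way it is shorter and easier to check.

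One slip to repair in the expansion step: $Y=P_B(B_D)$ is the set of rows of the \emph{basis}, not of $D$, so a nonempty row $q$ of $D$ need not lie in $Y$ (for instance when $\bb_q$ is dependent on earlier $\bb_j$'s), and for such $q$ one has $\rsg{(B_D)}{q}=\emptyset$ while $\set{i\in X:k_i\geq q}$ is nonempty, so your identification of the two and the appeal to Proposition~\ref{prop:tnsprodB}(v) do not apply. The inequality you actually need survives: by the CDG ordering $\set{i:k_i\geq q}=[m_q]$, and $X\cap[m_q]$ indexes a maximal linearly independent subset of $\set{\ba_1,\dots,\ba_{m_q}}$ by the greedy selection underlying Proposition~\ref{prop:tnsprodB}, so $|\set{i\in X:k_i\geq q}|=\rank\set{\ba_1,\dots,\ba_{m_q}}\geq\sigma_A(m_q)$ for every nonempty row $q$ of $D$, whether or not $q\in Y$. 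With that one-line repair the covering argument goes through for all rows and the proof is complete.
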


\begin{proof}
    Since $\rank(G)$, $|G|$, and the pre-CDG structure are invariant under column permutations on the grid, we may reorder the columns and assume $D = \vcl(G)$ is a CDG.

    \textit{Step 1. Construction of pre-CDG $S$}. 
    Let $B_G$ and $B_D$ be the bases of $G$ and $D$, respectively.
    Construct grid $S \sset B_D$ in the following way.
    Let $X = P_A(B_D)$. 
    For each column index $p \in X$,
    we remove grid points in
    ${B_D}_{[p, \cdot]}$ from the top until there are at most $r_p = \rank(G_{[p,\cdot]})$ points left (we may not need to remove anything). 
    Repeat this for each column, and let $S$ be the resulting subgrid of $B_D$.
    Clearly, $S$ is a pre-CDG since $B_D$ is a pre-CDG by Proposition \ref{prop:tnsprodB}. 
    Removing the top points in each column preserves the pre-CDG structure.

    \textit{Step 2. Proof of (i)}.
    Let $B_G^k$ be the basis for $G \cap([k] \times [m])$.      
    With increasing number of columns $k$, we have
    \begin{equation*}
        B_G^1 \sset B_G^2 \sset \cdots \sset B_G^n = B_G.
    \end{equation*}
    Similarly, define $S^k = S \cap ([k] \times [m])$. 
    We show that $|S^k| \leq |B_G^k|$ for all $k$ by an induction on $k$. As defined in step 1, let $r_p = \rank(\csg{G}{p})$.

    The base case $|S^1| \leq r_1 = |B_G^1|$ is immediate. Now suppose $|S^i| \leq |B_G^i|$ for all $i \leq k$. We wish to show $|S^{k+1}| \leq |B_G^{k+1}|$. 
    Indeed, if $\ba_{k+1} \in \spp\set{\ba_i}_{i \leq k}$, then by Proposition \ref{prop:tnsprodB}, $|S^{k+1}| = |S^{k}|$, so $|S^{k+1}| = |S^{k}| \leq |B_G^k| \leq |B_G^{k+1}|$. 
    If $\ba_{k+1} \notin \spp\set{\ba_i}_{i \leq k}$, we have $|S^{k+1}| - |S^{k}| \leq r_{k+1}$. It is sufficient to show $|B_G^{k + 1}| - |B_G^{k}| \geq r_{k+1}$.
    To that end, when $\ba_{k+1} \notin \spp\set{\ba_i}_{i \leq k}$, by Lemma \ref{lem:newcol}, we must add $(k+1, q)$ to the basis {(because of Algorithm~\ref{alg:basis_construction})} if $\bb_q \notin \spp\set{\bb_j: 
    {\exists (p, j) \in G~\text{s.t.}~j < q}}$. 
    Thus, we have to add at least a maximal linearly independent set of vectors $\csg{G}{k+1}$ to the basis (but may include more).
    Hence, $|B_G^{k + 1}| - |B_G^{k}| \geq r_{k+1}$.
    Combining the two cases, we conclude $|S^k| \leq |B_G^k|$ for all $k$. 
    In particular, $|S| = |S^n| \leq |B_G^n| = |B_G|$ as desired.

    \textit{Step 3. Proof of (ii)}.
    Let $R_c$ and $R_r$ be respectively the column and row reordering after which $S$ becomes a CDG. That is, $\cdg(S) = R_r(R_c(S))$.
    Let $S^+ = \vspcexp(\cdg(S))$.
    First, we show that for each $i \in P_A(R_c(B_D))$, $R_c(D)_{[i, \cdot]} \sset S^+_{[i, \cdot]}$. 

    Fix a column $i \in P_A(R_c(B_D))$. Denote $i'$ as the index of this column before applying $R_c$. 
    Denote 
    \begin{align*}
        s &= |\csg{\cdg(S)}{i}| = |\csg{S}{i'}|,\\
        g &= |\csg{R_c(G)}{i}| = |\csg{G}{i'}|,\\
        d &= |\csg{R_c(D)}{i}| = |\csg{D}{i'}|.
    \end{align*}
    Since $S^+$ and $R_c(D)$ are both dense grids, it suffices to check $d \leq |S^+_{[i, \cdot]}|$.
    Clearly, $g = d$. 
    If no point was removed from $B_D$ to form $S$ in column $i'$, then 
    by part (iv) of Proposition \ref{prop:tnsprodB} and the fact $\gs_B$ is a lower bound on the rank expansion for $\bB$, $d \leq |S^+_{[i, \cdot]}|$.
    If grid points were removed from $B_D$ in column $i'$, then $s = r_{i'}$.
    Now $|S^+_{[i, \cdot]}| = \lfloor\gs_B^{\dagger}(s)\rfloor = \lfloor\gs_B^{\dagger}(r_{i'})\rfloor \geq g = d$. 
    Thus, in both cases, $R_c(D)_{[i, \cdot]} \sset S^+_{[i, \cdot]}$.

    The above shows for all $i \in P_A(R_c(B_D))$, The $i$th column $\csg{R_c(D)}{i}$ is covered by $S^+_{[i, \cdot]}$. 
    In particular, for each $j \in P_B(D)$, $\rsg{R_c(B_D)}{j}$ is covered by $S^+_{[\cdot, j]}$.
    By part (v) of Proposition \ref{prop:tnsprodB} and the fact that $\gs_A$ is a rank expansion lower bound for $\bA$,
    we have $|\hspcexp(S^+)_{[\cdot, j]}| \geq |D_{[\cdot, j]}|$. 
    Since this holds for every $j \in P_B(D)$, we conclude $|\spcexp(S)| \geq |D| = |G|$ as desired.
\end{proof}

With this established, it remains to find a strictly increasing function $\phi$ such that for any CDG $S$, $|\spcexp(S)| \leq \phi(|S|)$.
Together with Lemma \ref{lem:discrete_step}, this will imply for any $G$, there exists some pre-CDG $S$,
\begin{equation*}
    \phi(\rank(G)) \geq \phi(|S|) = \phi(|\cdg(S)|) \geq |\spcexp(\cdg(S))| \geq |G|.
\end{equation*}
Thus, $\gs_C = \phi^{-1}$ is a valid rank expansion.
We find such a function $\phi$ in the next section.

\section{Rank Expansion Bounds via Continuous Analysis of Compact Grid Expansion} \label{sec:cont_step}

Finding a tight upper bound on $|\spcexp(S)|$ (given $\gs_A$ and $\gs_B$) for all CDG $S$ with a given size can be a hard discrete optimization problem for arbitrary $\gs_A$, $\gs_B$, and CDG $S$.
In order to find an easy-to-apply bound, we relax the integer-grid assumption and work with the so-called ``stairs" in $\R^2$.
This is made precise in Section \ref{sec:stair} below.
As required by Theorem \ref{thm:lb} and Theorem \ref{thm:nest_lb}, we assume 
\begin{equation} \label{eq:assumpt_zero}
    \gs_A(0) = \gs_B(0) = 0 \text{~and they are concave.}
\end{equation} 
Throughout this section, we assume in addition 
\begin{equation} \label{eq:assumpt_smooth}
    \gs_A \text{~and~} \gs_B \text{~are strictly increasing \editstwo{$C^1$} functions.} 
\end{equation}
This additional assumption \eqref{eq:assumpt_smooth} will not make the resulting bound $\gs_C$ worse through a density argument on these functions.
\edits{Recall the pseudoinverse function denoted with a dagger, as defined in~\eqref{eq:defpseudoinv}}. Now we may simplify our notation and denote
\begin{equation}
    \begin{aligned} \label{eq:def_fg}
        f(x) &:= \gs_A^{-1}(x) = \gs_A^\dagger(x),\\
        g(x) &:= \gs_B^{-1}(x) = \gs_B^\dagger(x).
    \end{aligned}
\end{equation}
These are strictly increasing convex functions with $f(0) = g(0) = 0$.
We will assume these notations in the following context unless otherwise explained.
To simplify notations, 
for arbitrary functions $h_1$ and $h_2$, we use the notation 
\begin{equation} \label{eq:propto_def}
  \edits{h_1(x)\propto_+ h_2(x)}
\end{equation}
\edits{if there exists $c > 0$ such that $h_1(x) = c \cdot h_2(x)$}.

\subsection{The Stair Relaxation} \label{sec:stair}

We identify a CDG with a stair-like area in $\mathbb{R}^2$ -- each grid point $(i, j)$ now represents a unit square $[i-1, i] \times [j-1, j]$ in $\R^2$.
An illustration can be found in Figure~\ref{fig:merge}.
This \textit{stair} structure consists of contiguous rectangles with decreasing heights and varying widths. We will refer to these rectangles as \textit{\textit{steps}}. 
The colloquial name of stairs represents the visual similarity to the stairs we see in real life in multi-story buildings. 
Now we define the continuous version of $\spcexp$ on stairs.

\begin{defn}  \label{def:langleS}
    The \textit{expansion size} of a CDG $S$ (with respect to $f$ and $g$) is defined as
    \begin{equation}\label{eq:expsize}
        \ab{S} = \int_{S \sset \R^2} df(x) dg(y).
    \end{equation}
    We identify the grid $S$ with a stair in $\R^2$ in the integration.
\end{defn}

The expansion size $\ab{S}$ now serves as an upper bound of $|\spcexp(S)|$, as stated below.

\begin{lemma} \label{lem:gridexp_ab}
    \sloppy For any CDG $S$, we have
    $|\spcexp(S)| \leq \ab{S}$.
\end{lemma}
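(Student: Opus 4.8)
The plan is to make both sides of the inequality completely explicit for a CDG $S$ and then reduce everything to a one-dimensional comparison between a sum and an integral. Represent $S$ by its non-increasing column heights $h_1 \geq h_2 \geq \cdots \geq h_w$, where $h_i = |\csg{S}{i}|$ and $w$ is the number of columns, and recall from \eqref{eq:def_fg} that $f = \gs_A^{\dagger}$ and $g = \gs_B^{\dagger}$ are increasing with $f(0)=g(0)=0$. First I would unfold $\spcexp = \hspcexp \circ \vspcexp$ from Definition~\ref{def:gridexp_def}: the vertical step makes column $i$ have height $v_i = \min\{n_B, \lfloor g(h_i)\rfloor\}$, so $\vspcexp(S)$ is again a CDG, and its row $j$ has width $w'_j = \#\{i : v_i \geq j\}$. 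The horizontal step then gives $|\spcexp(S)| = \sum_{j=1}^{v_1} \min\{n_A, \lfloor f(w'_j)\rfloor\}$. Bounding each row term by $f(w'_j)$ and observing that, for an integer $j$, $w'_j \leq \#\{i : \lfloor g(h_i)\rfloor \geq j\} = \#\{i : g(h_i) \geq j\} =: I(j)$ while $v_1 \leq \lfloor g(h_1)\rfloor$, I obtain the clean upper bound $|\spcexp(S)| \leq \sum_{j=1}^{\lfloor g(h_1)\rfloor} f(I(j))$, using that all terms are non-negative to extend the summation range.

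Next I would rewrite $\ab{S}$ in a matching form. By Definition~\ref{def:langleS} and $f(0)=g(0)=0$, the Stieltjes integral over the stair telescopes columnwise to $\ab{S} = \sum_{i=1}^w (f(i)-f(i-1))\,g(h_i)$, which is exactly the ordinary area of the transformed stair $(f\times g)(S) = \bigcup_i [f(i-1), f(i)] \times [0, g(h_i)]$. Slicing this region horizontally at height $v$, the covered width is $f(I(v))$ for the continuous function $I(v) := \#\{i : g(h_i) \geq v\}$, precisely because the heights $g(h_i)$ are non-increasing in $i$ (the CDG property). Hence $\ab{S} = \int_0^{g(h_1)} f(I(v))\, dv$. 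At this point both sides of the target inequality are expressed through the single function $v \mapsto f(I(v))$, which is non-increasing and non-negative since $I$ is non-increasing and $f$ is increasing.

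It then remains to compare $\sum_{j=1}^{\lfloor g(h_1)\rfloor} f(I(j))$ with $\int_0^{g(h_1)} f(I(v))\, dv$. Since $f\circ I$ is non-increasing, for each integer $j$ we have $f(I(j)) \leq \int_{j-1}^{j} f(I(v))\, dv$, and summing over $j = 1, \ldots, \lfloor g(h_1)\rfloor$ yields $\sum_j f(I(j)) \leq \int_0^{\lfloor g(h_1)\rfloor} f(I(v))\, dv \leq \int_0^{g(h_1)} f(I(v))\, dv = \ab{S}$, which finishes the proof. I expect the main difficulty to be bookkeeping rather than conceptual depth: one must confirm that the two-step expansion really produces the row-width profile controlled by $I(j)$ (this is where the CDG/monotonicity structure and the identity $\lfloor g(h_i)\rfloor \geq j \iff g(h_i)\geq j$ for integer $j$ enter), and that discarding the floors and the $n_A, n_B$ truncations is legitimate for an upper bound. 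The one genuinely substantive step is the monotone sum-versus-integral comparison, whose validity rests entirely on $I$ being non-increasing -- a direct consequence of $S$ being a compact dense grid.
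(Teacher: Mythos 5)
Your proof is correct and follows essentially the same route as the paper: drop the floors and the $\min\{n_A,\cdot\}$, $\min\{n_B,\cdot\}$ truncations, and compare the resulting expansion to the continuous quantity $\ab{S}$. In fact your write-up is considerably more rigorous than the paper's two-line sketch (which simply asserts that ``when these are equalities'' the size equals $\ab{S}$); your layer-cake slicing $\ab{S}=\int_0^{g(h_1)} f(I(v))\,dv$ together with the monotone sum-versus-integral comparison is exactly the missing bookkeeping, and every step (the $\lfloor g(h_i)\rfloor\geq j\iff g(h_i)\geq j$ equivalence, the non-increasing profile $I$, and the extension of the summation range) checks out.
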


\begin{proof}
    Let $S^+ = \vspcexp(S)$ and $S^{++} = \hspcexp(S^+)$.
    By definition of these expansions,
    \begin{align*}
        |\csg{\vspcexp(S)}{i}| 
        &= 
        \min\set{n_B,~\lfloor\gs_B^\dagger(|\csg{S}{i}|)\rfloor}
        \leq
        g(|\csg{S}{i}|),\\
        |\hspcexp(S^+)_{[\cdot, j]}| 
        &= 
        \min\set{n_A,~\lfloor\gs_A^\dagger(|S^+_{[\cdot, j]}|)\rfloor}
        \leq
        f(|S^+_{[\cdot, j]}|).
    \end{align*} 
    When these are indeed equalities, because of the stair structure in $S^{++}$ and the fact $f(0) = g(0) = 0$, we have
    \begin{equation*}
        |\spcexp(S)| = |S^{++}| = \ab{S}.
    \end{equation*}
    Therefore, the conclusion follows.
\end{proof}

We are left to bound $\ab{S}$ given $|S| = \int_{S} dxdy$.
To that end,
we further relax the corner points of $S$ from being integer coordinates to general points in (the positive quadrant of) $\mathbb{R}^2$.
The integral version of definitions of $\ab{S}$ and $|S|$ extend naturally to a non-integral stair.
The decreasing stair structure is still preserved,
i.e., $S$ is the region below a decreasing step function in the first quadrant.
In the next section, we prove that $\ab{S}$ is maximized when $S$ is a single rectangle when its size $|S|$ is fixed.

\subsection{Reduce Stairs to Rectangles}

First, we introduce the \textit{merge} operation that reduces the number of steps (rectangles) in the stair by 1.

\begin{defn} \label{def:merge_step}
    Let $S$ be a stair with at least 2 steps indexed by $1, 2, ...$ (from left to right). 
    Fix two consecutive steps $k$ and $k+1$.
    Denote $u$ as the difference in their heights.
    There are two ways of changing the shape of $S$ with $|S|$ and its stair structure unchanged:
    \begin{enumerate}[nosep, label=(\roman*)]
        \item increase $u$: until step $k$ has the same height of stair $k-1$ (if $k > 1$) or step $k+1$ has the same height of stair $k+2$ (or down to the ground if it is the last step).
        \item decrease $u$: until step $k$ and $k+1$ reaches the same height ($u = 0$).
    \end{enumerate}
    Denote the resulting stair from the first approach as $M_1$ and the second approach as $M_2$.
    The \textit{merge} operation (on a given stair $S$) is defined as
    \begin{equation*}
        \merge(k, k+1) = 
        \begin{cases}
            M_1 & \text{if~} \ab{M_1} \geq \ab{M_2}\\
            M_2 & \text{if~} \ab{M_1} < \ab{M_2}
        \end{cases}.
    \end{equation*}
\end{defn}

An illustration of the merge step is given in Figure \ref{fig:merge} below.
The next step is to show that a merge always increases the expansion size.

\begin{figure}[htbp] 
    \centering
    \includegraphics[width = 0.45\linewidth]{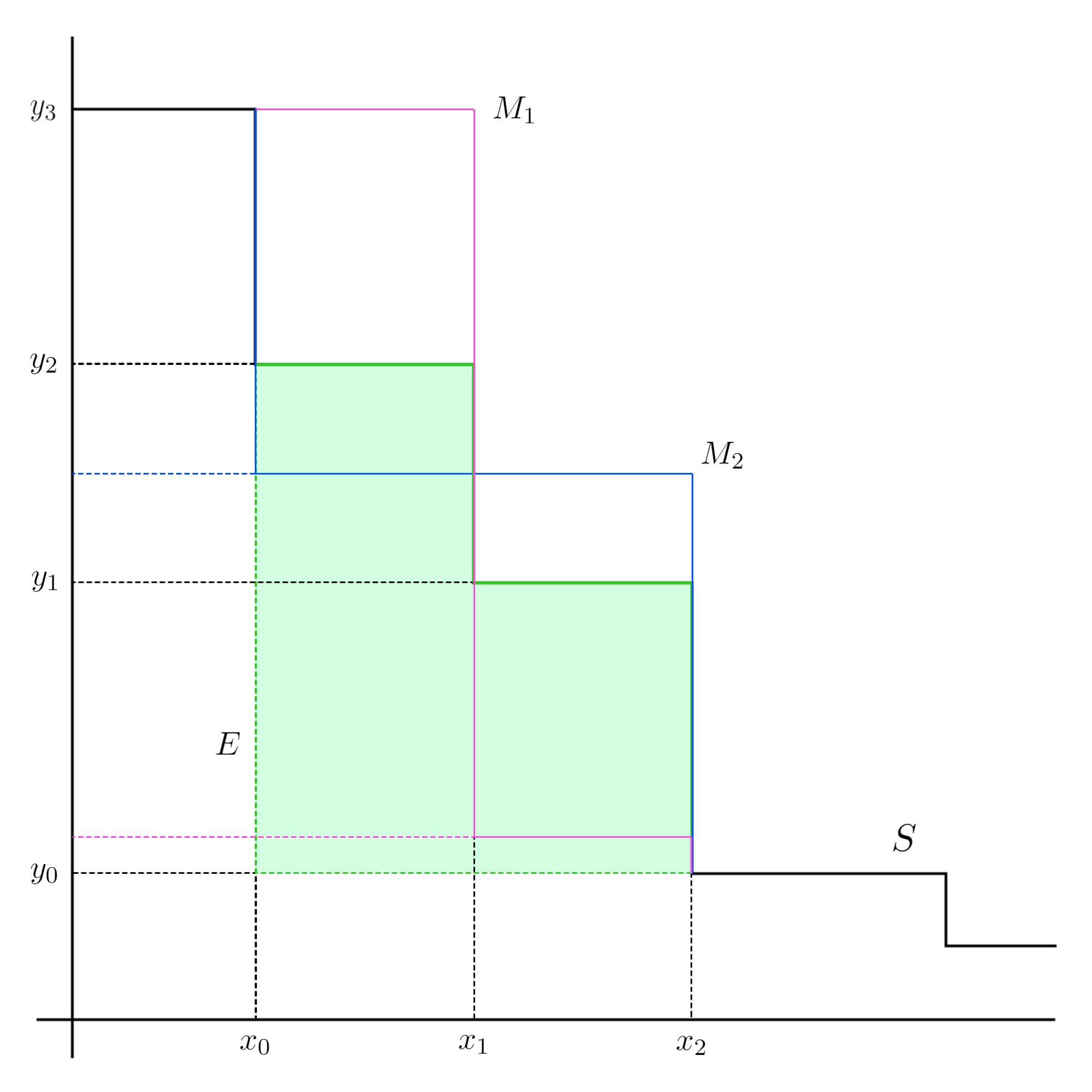}
    \caption{
    Stair $S$ consists of four steps in decreasing height, and we want to ``remove'' a step without changing the area $|S|$. 
    Consider the operation $\merge(2, 3)$.
    The original shape of $S$ (relevant to this merge) is highlighted in green. 
    The pink stair represents $M_1$, produced by increasing $u$ to the maximum.
    The blue stair represents $M_2$, produced by decreasing $u$ to 0.
    }
    \label{fig:merge}
\end{figure}

\begin{lemma} \label{lem:continuous_step}
    Let $S$ be a CDG on the grid with at least two steps.
    Identify $S$ as a stair.
    Consider the merge of steps $k$ and $k+1$. 
    The merge operation varies their height difference while keeping the area $|S|$ unchanged.
    Denote $E(u)$ as the (unique) stair when their height difference is $u$. 
    Then for any $[a, b] \sset [0, +\infty)$, 
    \begin{equation*}
        \max_{u \in [a, b]} \ab{E(u)} = \max\set{\ab{E(a)}, \ab{E(b)}}.
    \end{equation*}
    Hence,
    for any $k$ and $u$ in the merge,
    \begin{equation*}
        \ab{E(u)} \leq \ab{\merge(k, k+1)} = \max\set{\ab{M_1}, \ab{M_2}}.
    \end{equation*}
    Consequently, by carrying out a sequence of $\merge$, for any CDG $S$ with size $|S| = t$,
    \begin{equation} \label{eq:opt_phi_exp}
        \ab{S} \leq \max_{\substack{t_A,~ t_B \geq 1,\\ t_A t_B = t}} f(t_A)g(t_B).
    \end{equation} 
\end{lemma}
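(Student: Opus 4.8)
The plan is to recognize $\ab{\cdot}$ as a sum with one contribution per step and to show that, under the merge parametrization, it is a \emph{convex} function of the single parameter $u$; since a convex function on an interval attains its maximum at an endpoint, this is exactly the first displayed claim. First I would write the expansion size of a stair explicitly. If the steps of $S$ have horizontal breakpoints $0 = x_0 < x_1 < \cdots$ and heights $h_1 \geq h_2 \geq \cdots$, then because $f(0) = g(0) = 0$,
\[
    \ab{S} = \sum_i \big(f(x_i) - f(x_{i-1})\big)\, g(h_i).
\]
In $\merge(k,k+1)$ the horizontal breakpoints are held fixed and only the two heights $h_k, h_{k+1}$ vary; writing $w_k, w_{k+1}$ for the two (fixed) widths, $W = w_k + w_{k+1}$, and $A = w_k h_k + w_{k+1} h_{k+1}$ for their combined area, the constraints $|S| = \text{const}$ and $h_k - h_{k+1} = u$ force
\[
    h_{k+1}(u) = \frac{A - w_k u}{W}, \qquad h_k(u) = \frac{A + w_{k+1} u}{W},
\]
so $E(u)$ is genuinely one-dimensional and only the two terms indexed $k, k+1$ of the sum above depend on $u$.

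Next I would differentiate twice. With $\Delta f_k := f(x_k) - f(x_{k-1})$ and $\Delta f_{k+1} := f(x_{k+1}) - f(x_k)$ positive constants, the $u$-dependent part of $\ab{E(u)}$ is $\Phi(u) = \Delta f_k\, g(h_k(u)) + \Delta f_{k+1}\, g(h_{k+1}(u))$, and since $h_k'(u) = w_{k+1}/W$ and $h_{k+1}'(u) = -w_k/W$,
\[
    \Phi''(u) = \Delta f_k\, g''(h_k)\Big(\tfrac{w_{k+1}}{W}\Big)^2 + \Delta f_{k+1}\, g''(h_{k+1})\Big(\tfrac{w_k}{W}\Big)^2 \geq 0,
\]
because $g$ is convex ($g'' \geq 0$) and $f$ is increasing. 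Hence $u \mapsto \ab{E(u)}$ is convex, so its maximum over any $[a,b]$ is $\max\set{\ab{E(a)}, \ab{E(b)}}$. Taking $[a,b]$ to be the full admissible range, whose endpoints are exactly $M_2$ (at $u = 0$) and $M_1$ (at the largest $u$ preserving the stair structure), gives $\ab{E(u)} \le \max\set{\ab{M_1}, \ab{M_2}} = \ab{\merge(k,k+1)}$ for every admissible $u$, and in particular for the value of $u$ realizing the original $S$.

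Finally I would iterate the merge. Each merge keeps $|S|$ fixed, never decreases $\ab{\cdot}$, preserves the decreasing-stair structure (the binding constraint that defines $M_1$ is precisely what guarantees monotone heights survive), and reduces the number of steps by one; crucially it also leaves the \emph{total} horizontal width unchanged, since at either endpoint two adjacent steps are fused into one of their combined width. After finitely many merges $S$ becomes a single rectangle whose width $t_A$ equals the original total width $W_0$ and whose height is $t_B = t/W_0$, so that $\ab{\cdot} = f(t_A)\, g(t_B)$. Here $t_A \ge 1$ (at least one column) and $t_B \ge 1$ (each of the $W_0$ columns of the CDG has height at least one, so $t = |S| \ge W_0$). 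Bounding $f(t_A) g(t_B)$ by the maximum over all factorizations $t_A t_B = t$ with $t_A, t_B \ge 1$ then yields \eqref{eq:opt_phi_exp}.

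The main obstacle is pinning down the correct parametrization of the merge so that $E(u)$ is honestly a single-variable family: one must hold the horizontal breakpoints fixed and let the area constraint determine \emph{both} heights from $u$, then verify that the extremes $u = 0$ and $u = u_{\max}$ coincide with $M_2$ and $M_1$ and keep the heights monotone. Once this is set up correctly, the convexity is a one-line consequence of $g'' \ge 0$, and the reduction to a rectangle is routine bookkeeping of widths and heights.
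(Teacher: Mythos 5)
Your proof is correct and reaches the same endpoint-maximization conclusion as the paper, but by a genuinely cleaner route for the key step. The paper writes $\frac{d}{du}\ab{E(u)}$ as $k_1 g'(y_2)-k_2 g'(y_1)$ up to a positive factor, argues from convexity of $g$ that there is at most one critical point, and then separately uses convexity of $f$ (secant slopes $k_1\leq k_2$) to show the derivative at $u=0$ is nonpositive, concluding there is no interior local maximum. You instead observe that the two varying heights are affine in $u$ with constant coefficients, so the second derivative is $\Delta f_k\, g''(h_k)(w_{k+1}/W)^2+\Delta f_{k+1}\,g''(h_{k+1})(w_k/W)^2\geq 0$: the expansion size is outright convex in $u$. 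This is stronger than what the paper proves, needs only convexity of $g$ and monotonicity of $f$, and gives endpoint maximization in one line. Your parametrization of $h_k(u)$ and $h_{k+1}(u)$ agrees exactly with the paper's relation \eqref{eq.yu_rel}, and your identification of the endpoints of the admissible range with $M_2$ (at $u=0$) and $M_1$ (at the largest admissible $u$) matches Definition~\ref{def:merge_step}.

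One claim in your final iteration step is wrong, though harmlessly so: merges do \emph{not} always preserve the total horizontal width. In the $M_1$ branch of Definition~\ref{def:merge_step}, when step $k+1$ is the last step it may be driven down to the ground, at which point it vanishes and the total width shrinks by its width; in particular the paper's final $\merge(1,2)$ can return either a rectangle of the full width or one whose width is that of the first step alone. So the final rectangle need not be $W_0\times(t/W_0)$. What you actually need --- that its width $t_A$ and height $t_B$ are both at least $1$ --- still holds: every surviving step's width is a sum of original integer step widths, hence the final width satisfies $1\leq t_A\leq W_0$, and then $t_B=t/t_A\geq t/W_0\geq 1$ since $t=|S|\geq W_0$. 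With that one-line repair, the bound \eqref{eq:opt_phi_exp} follows exactly as you state.
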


\begin{proof}
    We start by introducing some notations.
    Consider the stair $E(u)$.
    As depicted in Figure \ref{fig:merge}, let $x_0 < x_1 < x_2$ be the horizontal splits between steps $(k-1, k)$ (or $x_0 = 0$ if $k = 1$), $(k, k+1)$, and $(k+1, k+2)$ (or ground if $k+1$ is the last step).
    Let $y_0 < y_1 < y_2 < y_3$ be the height of steps $k-1, ~k, ~k+1, ~k+2$ (we set $y_0 = 0$ if $k+1$ is the last step, and $y_3 = +\infty$ if $k = 1$).  
    Since $y_2 - y_1 = u$, using the equi-area relation, we have
    \begin{equation}\label{eq.yu_rel}
        \begin{cases}
        y_1 = \bar{y} - \frac{x_1 - x_0}{x_2 - x_0} u,\\
        y_2 = \bar{y} + \frac{x_2 - x_1}{x_2 - x_0} u,
        \end{cases} 
    \end{equation} 
    where $\bar{y}$ is the height of the new $k$th stair in $M_2$ when $u = 0$, which is independent of $u$.

    The expansion size can be computed to be
    \begin{equation*}
        \ab{E(u)} = C + [f(x_1) - f(x_0)] [g(y_2(u)) - g(y_0)] + [f(x_2) - f(x_1)] [g(y_1(u)) - g(y_0)],
    \end{equation*}
    where $C$ is independent of $u$, and other terms represent the area of the green region in Figure \ref{fig:merge}.
    Taking the derivative with respect to $u$ and using the relations in \eqref{eq.yu_rel}, we obtain 
    \begin{align*}
        \frac{d}{du}\ab{E(u)} 
        &= 
        \frac{x_2 - x_1}{x_2 - x_0}[f(x_1) - f(x_0)] g'(y_2) - \frac{x_1 - x_0}{x_2 - x_0} [f(x_2) - f(x_1)] g'(y_1) \\
        &\propto_+ 
        \frac{f(x_1) - f(x_0)}{x_1 - x_0} g'(y_2) - \frac{f(x_2) - f(x_1)}{x_2 - x_1}  g'(y_1)\\
        &=: 
        k_1 g'(y_2) - k_2 g'(y_1).
    \end{align*}
    Applying the relation \eqref{eq.yu_rel} once more, we have 
    \begin{equation*}
        \frac{d}{du}\ab{E(u)} = 0 \ \ \ \ \Leftrightarrow\ \ \ \ 
        g'\left(\bar{y} - \frac{x_1 - x_0}{x_2 - x_0} u\right) = \frac{k_1}{k_2} \cdot g'\left(\bar{y} + \frac{x_2 - x_1}{x_2 - x_0} u\right).
    \end{equation*}
    By convexity of $g$, the right-hand side is increasing in $u$ whereas the left-hand side is decreasing in $u$. 
    Thus, there is at most one critical point for $\ab{E(u)}$ on $u \geq 0$. 
    Now we show that 
    \begin{equation}
        \frac{d}{du}\ab{E(u)}|_{u = 0} \leq 0. 
    \end{equation}
    Indeed, when $u = 0$, $y_1 = y_2 = \bar{y}$, and by convexity of $f$, $k_1 \leq k_2$ by the slopes of secant lines of $f$.
    Therefore, 
    \begin{equation*}
        \frac{d}{du}\ab{E(u)}|_{u = 0} = (k_1 - k_2)g'(\bar{y}) \leq 0. 
    \end{equation*}
    Thus, $\ab{E(u)}$ cannot have a local maximum in any $(a, b) \sset \R_+$. 
    Applying this to the case of $\merge(k, k+1)$, the maximum of $\ab{E(u)}$ is either $\ab{M_1}$ or $\ab{M_2}$. 

    We repeatedly apply $\merge(2, 3)$ until the grid has only two steps (if there are only 1 or 2 steps to begin with, we skip this step). 
    Then we apply $\merge(1, 2)$, yielding a $t_A \times t_B$ rectangle, with size $t = t_A \times t_B$. 
    This rectangle has a width and height of at least 1. 
    This is because we start from a CDG $S$ on the grid, so the first and last step of $S$ has a height and width of at least 1.
    Hence, in the last merge, the two steps are of height and width of at least 1.
    Hence, the optimization problem~\eqref{eq:opt_phi_exp} is a valid upper bound of $\ab{S}$, which maximizes the grid expansion over all rectangular-shaped grids $S$ with an area at least $t$ and width and height at least 1. 
\end{proof}

As motivated at the end of Section \ref{sec:main_proof},
Lemma \ref{lem:gridexp_ab} and Lemma \ref{lem:continuous_step} give us a suitable function $\phi$ such that for any CDG $S$, $|\spcexp(S)| \leq \phi(|S|)$,
\begin{equation} \label{eq:main_phi}
    \phi(t) = \max_{\substack{t_A,~ t_B \geq 1,\\ t_A t_B = t}} f(t_A)g(t_B).
\end{equation}
To conclude the proof of Theorem \ref{thm:lb},
it remains to prove the proposed expansion lower bound
\begin{equation*}
    \gs_C(k) = \min_{\substack{k_A\geq d_A,~k_B \geq d_B\\ 
    k_A k_B \geq k}}\gs_A(k_A)\cdot\gs_B(k_B)
\end{equation*}
is indeed a lower bound of $\phi^{-1}$. 
In fact, in the next section, we prove that it is optimal, in that $\gs_C = \phi^{-1}$.

\subsection{Proof of Main Expansion Bound (Theorem \ref{thm:lb})} \label{sec:main_proof_1}


We now provide the final step toward proving Theorem~\ref{thm:lb}, showing that the inverse of the growth function $\phi$ gives us the rank expansion function, $\gs_C = \phi^{-1}$.
\begin{lemma} \label{lem:invphi_main}
    Suppose $\gs_A$ and $\gs_B$ are concave, strictly increasing, and smooth with $\gs_A(0) = \gs_B(0) = 0$.
    Let $f = \gs_A^{-1}$, $g = \gs_B^{-1}$, $d_A = f(1)$, $d_B = g(1)$. 
    Define
    \begin{align*}
        \gs_C(k) &= \min_{\substack{k_A\geq d_A,~k_B \geq d_B\\ 
        k_A k_B \geq k}}\gs_A(k_A)\cdot\gs_B(k_B) ,\\
        \phi(t) &= \max_{\substack{t_A,~ t_B \geq 1,\\ t_A t_B = t}} f(t_A)g(t_B).
    \end{align*}
    Then $\gs_C = \phi^{-1}$ on $[d_Ad_B, +\infty)$.
\end{lemma}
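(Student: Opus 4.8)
The plan is to recognize the problems defining $\gs_C$ and $\phi$ as a dual pair of constrained optimization problems that become transparently related after a change of variables. First I would rewrite the minimization defining $\gs_C$ in terms of $f$ and $g$. Substituting $x_A = \gs_A(k_A)$ and $x_B = \gs_B(k_B)$, so that $k_A = f(x_A)$ and $k_B = g(x_B)$ (recall $f = \gs_A^{-1}$, $g = \gs_B^{-1}$), the constraint $k_A \geq d_A = f(1)$ becomes $x_A \geq 1$ (using $\gs_A(d_A) = \gs_A(\gs_A^{-1}(1)) = 1$ and strict monotonicity of $\gs_A$), the constraint $k_A k_B \geq k$ becomes $f(x_A)g(x_B) \geq k$, and the objective $\gs_A(k_A)\gs_B(k_B)$ becomes $x_A x_B$. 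Hence
\[
  \gs_C(k) = \min\set{ x_A x_B : x_A, x_B \geq 1,\ f(x_A)\, g(x_B) \geq k}.
\]
This exhibits $\gs_C$ as minimizing the product of arguments subject to a lower bound $k$ on $f\cdot g$, whereas $\phi$ maximizes $f\cdot g$ subject to a fixed product $t$ of the arguments—manifestly dual roles.

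Next I would record the elementary properties of $\phi$. For $t \geq 1$ its feasible set $\set{(t_A, t/t_A) : 1 \leq t_A \leq t}$ is compact and the objective continuous, so the maximum is attained. Moreover $\phi$ is strictly increasing: if $(t_A,t_B)$ attains $\phi(t)$ and $t' > t$, then $(t_A\, t'/t,\, t_B)$ has product $t'$, both coordinates $\geq 1$, and strictly larger value since $f$ is strictly increasing; thus $\phi(t') > \phi(t)$. Together with continuity, $\phi(1) = f(1)g(1) = d_A d_B$, and $\phi(t) \geq d_B f(t) \to \infty$, this makes $\phi$ a strictly increasing continuous bijection from $[1,\infty)$ onto $[d_A d_B, \infty)$, so $\phi^{-1}$ is well defined precisely on the claimed domain $[d_A d_B,\infty)$.

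The two matching inequalities then follow by swapping feasibility between the two problems. For $\gs_C(k) \geq \phi^{-1}(k)$: any pair $(u,v)$ feasible for $\gs_C(k)$ is also feasible for the maximization defining $\phi(uv)$, so $\phi(uv) \geq f(u)g(v) \geq k$, whence $uv \geq \phi^{-1}(k)$ by monotonicity of $\phi$; taking the infimum over feasible pairs yields $\gs_C(k) \geq \phi^{-1}(k)$. Conversely, let $(u^*,v^*)$ attain $\phi$ at $t = \phi^{-1}(k)$, so $u^* v^* = \phi^{-1}(k)$ and $f(u^*)g(v^*) = \phi(\phi^{-1}(k)) = k$; then $(u^*,v^*)$ is feasible for $\gs_C(k)$, giving $\gs_C(k) \leq u^* v^* = \phi^{-1}(k)$. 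Combining, $\gs_C = \phi^{-1}$ on $[d_A d_B,\infty)$.

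The step requiring the most care is not any single calculation but the bookkeeping around the substitution and the domain: verifying that each constraint transforms correctly, that the $\phi$-maximum is attained so the pair $(u^*,v^*)$ used for the upper bound exists, and that $\phi$ is genuinely a bijection onto $[d_A d_B,\infty)$ so that $\phi^{-1}(k)$ is defined on exactly the stated range. It is worth noting that this lemma needs only strict monotonicity of $f,g$ and $f(0)=g(0)=0$; the concavity of $\gs_A,\gs_B$ (convexity of $f,g$) has already been consumed in establishing that $\phi$ is the correct expansion bound via Lemma~\ref{lem:continuous_step}, and is not invoked again here.
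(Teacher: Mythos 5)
Your proposal is correct and follows essentially the same route as the paper: both proofs exploit the substitution $t_A=\gs_A(k_A)$, $t_B=\gs_B(k_B)$ to transfer feasible points between the minimization defining $\gs_C$ and the maximization defining $\phi$, establishing the two inequalities $\gs_C\leq\phi^{-1}$ and $\phi^{-1}\leq\gs_C$. Your version is somewhat more streamlined (the paper proves $\gs_C\leq\phi^{-1}$ by contradiction via strict monotonicity of $\gs_C$, whereas you exhibit the maximizer of $\phi$ at $t=\phi^{-1}(k)$ directly as a feasible point), and your closing observation that concavity is not actually used in this lemma is accurate.
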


\begin{proof}
    To begin with, since $f$ and $g$ are strictly increasing, so is $\phi$, and thus $\phi$ is invertible. Also, $\gs_C$ is strictly increasing on $[d_Ad_B, +\infty)$ because both $\gs_A$ and $\gs_B$ are strictly increasing.

    \textit{Step 1. $\gs_C \leq \phi^{-1}$.}
    Assume by contradiction there is some $k \geq d_Ad_B$ where 
    $\sigma_C(k) > \phi^{-1}(k)$. 
    Then there exists a $k'$ such that 
    \begin{equation*}
        k' := \phi(\gs_C(k)) > \phi(\phi^{-1}(k)) = k.
    \end{equation*}  
    From the above inequality, this means that there exists $t_A, t_B \geq 1$ such that $t_At_B = \gs_C(k)$ and $\gs_A^{-1}(t_A) \gs_B^{-1}(t_B) \geq k'$. 
    This implies
    \begin{align*}
        \gs_C(k) &< \gs_C(k')\\
        &=
        \min_{\substack{k_A\geq d_A,~k_B \geq d_B\\ 
        k_A k_B \geq k'}}\gs_A(k_A)\cdot\gs_B(k_B) \\
        &\leq 
        \gs_A(\gs_A^{-1}(t_A)) \cdot \gs_B(\gs_B^{-1}(t_B)) \\
        &=
        \gs_C(k),
    \end{align*}
    which contradicts that $\gs_C$ is strictly increasing.

    \textit{Step 2. $\phi^{-1} \leq \gs_C$.}
    We show that if $\gs_C(k) = t$ with $k \geq d_Ad_B$, then $\phi(t) \geq k$.
    Indeed, when $k \geq d_Ad_B$, $\gs_C(k) = t$ implies
    \begin{equation*}
        \gs_C(k) = \min_{\substack{k_A\geq d_A,~k_B \geq d_B\\ 
        k_A k_B = k}}\gs_A(k_A)\cdot\gs_B(k_B) = t.
    \end{equation*}
    Note the equality constraint under the $\min$.
    Let $t_A = \gs_A(k_A) \geq 1$ and $t_B = \gs_B(k_B) \geq 1$.
    Then we have found
    \begin{equation*}
    t_A \geq 1,~t_B \geq 1,~t_At_B = t, \text{~such that~}
    \gs_A^{-1}(t_A)\gs_B^{-1}(t_B) = k. 
    \end{equation*}
    By definition of $f$, $g$, and $\phi$, we have $\phi(t) \geq k$, and the proof is complete.
\end{proof}

Finally, we put all the pieces together to give the full proof of Theorem \ref{thm:lb}. 
For convenience, we restate the theorem here.
\edits{
\mainthmone*
}

\begin{proof}
    First, for $k \leq d_Ad_B$, the bound is trivial since
    \[
        \gs_C(k) = \gs_A(d_A)\cdot\gs_B(d_B) = 1,
    \]
    which is clearly a lower bound of the rank of any nonzero matrix. 

    Let us consider $k \geq d_Ad_B$. 
    Through a density argument, we may assume $\gs_A$ and $\gs_B$ are strictly increasing and smooth.
    For any grid $G$ of size $k$, by Lemma \ref{lem:discrete_step}, we can find a pre-CDG $S$ such that 
    \begin{align*}
        |S| &\leq \rank(G),\\
        k &= |G| \leq |\spcexp(S)|.
    \end{align*}
    By Lemma \ref{lem:gridexp_ab} and Lemma \ref{lem:continuous_step}, we have
    \begin{align*}
        |\spcexp(S)| &= |\spcexp(\cdg(S))| 
        \leq \ab{\cdg(S)},\\
        \ab{\cdg(S)}
        &\leq 
        \max_{\substack{t_A,~ t_B \geq 1,\\ t_A t_B = |S|}} f(t_A)g(t_B) =: \phi(|S|),
    \end{align*}
    where $f = \gs_A^{-1}$, $g = \gs_B^{-1}$.
    Finally, combining all the steps, since $\phi$ is increasing,
    \begin{equation*}
        |G| \leq \phi(|S|) \leq \phi(\rank(G)).
    \end{equation*}
    Therefore, Lemma \ref{lem:invphi_main} tells us
    \begin{equation*}
        \rank(G) \geq \phi^{-1}(|G|) = \gs_C(|G|).
    \end{equation*}
    The proof is then complete.
\end{proof}

\subsection{Proof of Nested Expansion Bound (Theorem \ref{thm:nest_lb})} \label{sec:main_proof_2}


We now seek to extend the nested rank expansion lower bound (Theorem \ref{thm:lb}), to cases when $\bC = \otimes_{i=1}^p \bA_i$ contains $p>2$ terms.
Theorem \ref{thm:lb} may not be \editstwo{easy to apply} repeatedly, since the resulting bound $\gs_C$ does not satisfy the assumptions applied to $\gs_A$ and $\gs_B$, as $\gs_C(0) = 0$ and need not be concave.
However, we can circumvent these issues when $\gs_A$ and $\gs_B$ are \textit{log-log concave} (see Definition \ref{def:loglog}).
For such rank expansion functions, we give a simple form for the rank expansion of a $p$-term Kronecker product in Theorem \ref{thm:nest_lb}.
Before going into its proof, we summarize some properties of log-log convex (concave) functions. 
For a more detailed reference on log-log convex functions, see \cite{agrawal2019disciplined}.

\begin{proposition} \label{prop:lccv_x}
    $f(x)$ is log-log convex (resp. concave) if and only if $\ln f(e^x)$ is convex (resp. concave) in $x$. Thus, if $\set{f_i(x)}_{i \in \cI}$ and $\set{g_i(x)}_{i \in \cI}$ are, respectively, two collections of log-log convex and concave functions, then $\sup_i f_i$ and $\prod_i f_i$ are log-log convex functions, and $\inf_i g_i$ and $\prod_i g_i$ are log-log concave functions.
\end{proposition}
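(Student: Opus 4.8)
The plan is to reduce every assertion to ordinary convexity/concavity via the change of variables $x \mapsto e^x$. First I would establish the stated characterization, which is essentially an unwinding of Definition~\ref{def:loglog}: by definition $f$ is log-log concave exactly when $\ln f$ is concave as a function of $\ln x$, and substituting $t = \ln x$ (equivalently $x = e^t$) this is literally the statement that $t \mapsto \ln f(e^t)$ is concave; renaming $t$ back to $x$ gives the claim, and the convex case is identical. Accordingly, for each $f_i$ I would set $F_i(x) := \ln f_i(e^x)$ and for each $g_i$ set $G_i(x) := \ln g_i(e^x)$, so that by this characterization $F_i$ is convex and $G_i$ is concave (all functions here are positive on $(0,\infty)$, so the logarithms are well defined).

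Next I would prove the four closure statements by pushing the outer $\ln(\,\cdot\,)(e^x)$ through each operation. The key observation is that $\ln$ is increasing and continuous, hence commutes with both $\sup$ and $\inf$: $\ln(\sup_i a_i) = \sup_i \ln a_i$ and $\ln(\inf_i a_i) = \inf_i \ln a_i$ for positive reals. Therefore for the supremum I would compute $\ln\big((\sup_i f_i)(e^x)\big) = \sup_i \ln f_i(e^x) = \sup_i F_i(x)$, which is a pointwise supremum of convex functions and hence convex; by the characterization, $\sup_i f_i$ is log-log convex. Dually, $\ln\big((\inf_i g_i)(e^x)\big) = \inf_i G_i(x)$ is a pointwise infimum of concave functions, hence concave, so $\inf_i g_i$ is log-log concave.

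For the products I would use that $\ln$ turns products into sums: $\ln\big((\prod_i f_i)(e^x)\big) = \sum_i F_i(x)$ and $\ln\big((\prod_i g_i)(e^x)\big) = \sum_i G_i(x)$. A sum of convex functions is convex and a sum of concave functions is concave, so $\prod_i f_i$ is log-log convex and $\prod_i g_i$ is log-log concave. There is no genuine obstacle in this argument; the only points that require care are the monotonicity-driven commutation of $\ln$ with $\sup$ and $\inf$, and, if one allows an infinite index set $\cI$, the standard caveat that the relevant suprema, infima, sums, and products be finite pointwise, under which convexity and concavity are still preserved (these operations are realized as monotone or locally uniform limits of convex/concave functions).
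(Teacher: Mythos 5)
Your proof is correct. The paper itself gives no proof of this proposition --- it is stated as a known fact with a pointer to \cite{agrawal2019disciplined} --- and your argument is the standard one that the reference uses: the substitution $t=\ln x$ turns log-log convexity into ordinary convexity of $\ln f(e^t)$, after which closure under $\sup$, $\inf$, and products follows from $\ln$ commuting with $\sup$/$\inf$ on positive reals and turning products into sums. Your caveat about pointwise finiteness for an infinite index set $\cI$ is the right one to flag; in the paper's applications $\cI$ is finite, so nothing further is needed.
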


\begin{proposition} \label{prop:lccv_inv}
    If $f : \R_+ \mapsto \R_+$ is invertible and log-log convex (resp. concave), then $f^{-1}$ is log-log concave (resp. convex).
\end{proposition}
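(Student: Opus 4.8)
The plan is to convert log-log convexity into ordinary convexity by the same substitution used in Proposition~\ref{prop:lccv_x}, and then appeal to the elementary fact that functional inversion interchanges convexity and concavity for monotone functions. Concretely, I would first set $F(u) = \ln f(e^u)$, so that by Proposition~\ref{prop:lccv_x} the hypothesis that $f$ is log-log convex (resp. concave) is exactly the statement that $F$ is convex (resp. concave). Writing $g = f^{-1}$ and, in the same spirit, $G(v) = \ln g(e^v)$, the goal becomes to show that $G$ is concave (resp. convex), which is precisely log-log concavity (resp. convexity) of $g$.

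The key step is a short computation identifying $G$ as the inverse of $F$. Given $v$, solving $F(u) = v$ means $\ln f(e^u) = v$, hence $f(e^u) = e^v$, hence $e^u = f^{-1}(e^v) = g(e^v)$, so that $u = \ln g(e^v) = G(v)$. This yields
\[
  G = F^{-1}.
\]
Thus the whole proposition reduces to a single statement about $F$ and its inverse in the transformed coordinates.

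It then remains to invoke the standard lemma that the inverse of a convex increasing function is concave, and the inverse of a concave increasing function is convex. Since a rank expansion lower bound is increasing (Definition~\ref{def:RankExpLBDef}), $f$ is increasing and hence so is $F$; therefore $F$ convex gives $G = F^{-1}$ concave, i.e. $g$ is log-log concave, and symmetrically $F$ concave gives $g$ log-log convex. The concave case is thus completely parallel to the convex one, obtained by replacing ``convex'' with ``concave'' throughout.

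The only point requiring care — and the one I would treat as the main obstacle — is the monotonicity bookkeeping underlying the inverse-swaps-curvature lemma. Because invertibility of $f$ on $\R_+$ (together with continuity) forces $f$ to be strictly monotone, and in our setting it is increasing, the lemma applies in exactly the direction claimed; for a strictly decreasing $f$ the curvature would \emph{not} swap under inversion, so it is essential that the rank expansion functions we work with are increasing. Beyond verifying this direction, the argument is purely the identity $G = F^{-1}$ plus the one-line convex-inverse fact.
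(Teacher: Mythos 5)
Your proof is correct. Note that the paper gives no proof of Proposition~\ref{prop:lccv_inv} at all: it is listed among several standard properties of log-log convex functions, with the reader pointed to the cited reference on disciplined geometric programming. Your argument is the natural one and fills that gap cleanly: the substitution $F(u)=\ln f(e^u)$, $G(v)=\ln g(e^v)$ with $g=f^{-1}$, the identity $G=F^{-1}$ (which you verify correctly), and the elementary fact that inversion exchanges convexity and concavity for increasing functions.

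One point deserves tightening. The proposition as stated assumes only that $f:\R_+\mapsto\R_+$ is invertible, not that it is increasing, and your appeal to Definition~\ref{def:RankExpLBDef} imports a hypothesis about rank expansion lower bounds that is not part of the proposition itself. As you correctly observe, the curvature does \emph{not} swap under inversion when $f$ is decreasing: for instance $f(x)=x^{-1}e^{1/x}$ is a strictly decreasing bijection of $\R_+$ whose transform $F(u)=-u+e^{-u}$ is strictly convex, and $(F^{-1})''=-F''/(F')^{3}>0$, so $f^{-1}$ is again log-log convex rather than log-log concave. Thus the proposition is literally true only under an implicit increasing-ness assumption; that assumption holds for every function to which the paper applies the result, but it should be stated rather than inferred from the application. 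With that hypothesis made explicit, your reduction $G=F^{-1}$ together with the convex-inverse fact constitutes a complete proof.
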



\begin{proposition} \label{prop:lccv_apprx}
    Let $f$ be a log-log convex (resp. concave) function on $[a, b]$. Then for any $\gee > 0$, there exists a smooth log-log convex (resp. concave) function $g$ on $[a, b]$ with $\|f - g\|_{L^\infty(a, b)} < \gee$.   
\end{proposition}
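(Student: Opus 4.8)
The plan is to transfer the problem to ordinary convexity via the logarithmic change of variables already used in Proposition~\ref{prop:lccv_x}, smooth there by mollification, and transfer back. Write $\alpha = \ln a$, $\beta = \ln b$, and set $F(u) = \ln f(e^u)$ on $[\alpha,\beta]$. By Proposition~\ref{prop:lccv_x}, $f$ is log-log convex (resp. concave) exactly when $F$ is convex (resp. concave). Moreover $F$ is continuous, being a composition of the continuous maps $u \mapsto e^u$, $f$, and $\ln$ (here $f > 0$ since $f : \R_+ \mapsto \R_+$, and $f$ is continuous as are all the rank expansion lower bounds of interest); since $[a,b]$ is compact, $f$ is bounded between two positive constants, so $F$ takes values in a bounded range $[-C, C]$. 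It therefore suffices to produce a smooth convex (resp. concave) $G$ on $[\alpha,\beta]$ with $\|F - G\|_{L^\infty(\alpha,\beta)}$ as small as we like, and then set $g(x) = \exp(G(\ln x))$.

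First I would smooth $F$ in the convex case (the concave case is identical after a sign change). I would extend $F$ to all of $\R$ by its supporting lines at the endpoints: for $u > \beta$ let $F(u) = F(\beta) + F'_-(\beta)(u - \beta)$ and for $u < \alpha$ let $F(u) = F(\alpha) + F'_+(\alpha)(u-\alpha)$, using the one-sided derivatives that exist by convexity. This extension is convex and continuous on $\R$. Now mollify: with a standard nonnegative mollifier $\rho_\delta$ supported in $[-\delta,\delta]$ and integrating to $1$, put $G_\delta = F * \rho_\delta$. Then $G_\delta$ is smooth, and it is convex because convolving a convex function against a nonnegative unit-mass kernel preserves convexity (a one-line Jensen-type computation). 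Uniform convergence $G_\delta \to F$ on $[\alpha,\beta]$ follows from $|G_\delta(u) - F(u)| \le \sup_{|t|\le\delta}|F(u-t) - F(u)|$ together with the uniform continuity of $F$ on the compact set $[\alpha - 1, \beta + 1]$. Hence, for any $\eta > 0$, choosing $\delta$ small gives a smooth convex $G := G_\delta$ with $\|F - G\|_{L^\infty(\alpha,\beta)} < \eta$.

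Finally I would transfer the estimate back. Define $g(x) = \exp(G(\ln x))$; this is smooth and, by Proposition~\ref{prop:lccv_x} again, log-log convex (resp. log-log concave in the other case). Since $F(\ln x)$ lies in $[-C, C]$ and $G$ is within $\eta$ of $F$, both $F(\ln x)$ and $G(\ln x)$ lie in $[-C-\eta, C+\eta]$ for $x \in [a,b]$, where $\exp$ is Lipschitz with constant $L = e^{C+\eta}$. Therefore
\begin{equation*}
    |f(x) - g(x)| = \big|\exp(F(\ln x)) - \exp(G(\ln x))\big| \le L\,|F(\ln x) - G(\ln x)| \le L\eta
\end{equation*}
for all $x \in [a,b]$, and choosing $\eta = \gee / L$ yields $\|f - g\|_{L^\infty(a, b)} < \gee$, as required.

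The hard part will be the smoothing step: one must preserve convexity (concavity) \emph{exactly} while gaining smoothness, and the mollification must be carried out on an interval strictly larger than $[\alpha,\beta]$, which is precisely why the convex affine extension by supporting lines is performed before convolving. The remaining points — continuity of $F$ and the Lipschitz transfer through $\exp$ on a bounded range — are routine once positivity and boundedness of $f$ on the compact interval $[a,b]$ are noted.
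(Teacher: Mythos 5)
The paper states Proposition~\ref{prop:lccv_apprx} without proof (it defers to the reference on log-log convexity for such facts), so there is no in-paper argument to compare against; your mollification proof is the standard route and is correct in substance. The change of variables $F(u)=\ln f(e^u)$, the preservation of convexity under convolution with a nonnegative unit-mass kernel, and the Lipschitz transfer back through $\exp$ on the bounded range $[-C-\eta,\,C+\eta]$ are all sound. The one step that needs care is the affine extension by supporting lines: a convex function on a closed interval need not have finite one-sided derivatives at the endpoints (e.g.\ $F(u)=-\sqrt{u-\alpha}$ on $[\alpha,\beta]$ is convex with $F'_+(\alpha)=-\infty$), and it need not be continuous at the endpoints either, so the extension you describe is not always available, and your appeal to continuity of $f$ uses a hypothesis the proposition does not state. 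Both issues are easily repaired: since the $L^\infty(a,b)$ norm does not see the two endpoints, first replace $F$ by its continuous version on $[\alpha,\beta]$ (the limit from the interior, which is still convex), and then precompose with a dilation $F_\lambda(u)=F\bigl(c+\lambda(u-c)\bigr)$ with $c=(\alpha+\beta)/2$ and $\lambda<1$; this is convex, defined on an interval strictly containing $[\alpha,\beta]$, and converges to $F$ uniformly on $[\alpha,\beta]$ as $\lambda\uparrow 1$ by uniform continuity, after which your mollification argument goes through verbatim. For the functions the paper actually feeds into this proposition (continuous, positive, increasing functions defined on all of $\R_+$, as in Lemma~\ref{lem:bndoptim}) the caveat is vacuous, so your proof as written suffices for every use made of the result.
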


This function class contains many useful functions for our work.
We provide 
a couple of increasing, concave, and log-log concave functions below. 
\begin{proposition} \label{prop:monom_log}
    For the following choices of $f$, $f(x)$ is concave and log-log concave, and for any $t \in \R_+$, $f(t) - f(t-x)$ is log-log convex on $(0, t)$:
    \begin{enumerate}[nosep, label=(\roman*)]
        \item $f(x) = a \ln (bx + 1)$, $a > 0,~b > 0$;
        \item $f(x) = ax^p$, $a > 0, ~p \leq 1$.
    \end{enumerate}
\end{proposition}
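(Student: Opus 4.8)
The plan is to verify each of the two claims separately for both functional forms, treating concavity, log-log concavity, and the log-log convexity of $f(t)-f(t-x)$ in turn. For concavity, I would simply check that the second derivative is nonpositive: for $f(x)=a\ln(bx+1)$, $f''(x) = -ab^2/(bx+1)^2 < 0$, and for $f(x)=ax^p$ with $p\leq 1$, $f''(x) = ap(p-1)x^{p-2}\leq 0$. Both are clearly concave on $\R_+$.

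For log-log concavity, the cleanest route is Proposition~\ref{prop:lccv_x}: a function is log-log concave iff $\ln f(e^x)$ is concave in $x$. For the power function $f(x)=ax^p$, we have $\ln f(e^x) = \ln a + px$, which is affine in $x$ and hence concave (indeed both log-log concave and convex). For the logarithm $f(x)=a\ln(bx+1)$, I would compute $h(x):=\ln f(e^x) = \ln a + \ln\ln(be^x+1)$ and check $h''\leq 0$; alternatively, one can observe that $\ln(bx+1)$ is a concave increasing function and that the composition structure preserves log-log concavity. I expect the logarithm case here to require a short explicit second-derivative computation in the variable $x$ after substituting $u=e^x$, showing the sign is negative.

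The main obstacle, and the step deserving the most care, is the third claim: that $x \mapsto f(t)-f(t-x)$ is log-log \emph{convex} on $(0,t)$ for each fixed $t$. Again I would apply Proposition~\ref{prop:lccv_x}, setting $g(x)=f(t)-f(t-x)$ and checking that $\ln g(e^s)$ is convex in $s$. For $f(x)=ax^p$, this is $g(x)=a(t^p-(t-x)^p)$, and I would verify convexity of $\ln\big(a(t^p-(t-e^s)^p)\big)$ in $s$; the behavior near $x\to t^-$ (where $g\to at^p$) and near $x\to 0^+$ (where $g\to 0$ like $apt^{p-1}x$) gives useful sanity checks, since near $0$ the function behaves like a positive constant times $x$, which is log-log affine. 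For $f(x)=a\ln(bx+1)$, the corresponding expression is $g(x)=a\ln\!\big(\tfrac{bt+1}{b(t-x)+1}\big)$, and I would again reduce to a second-derivative sign computation after the $x=e^s$ substitution.

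The tactical difficulty throughout is that log-log convexity is not preserved under the operations (subtraction) appearing here, so I cannot invoke the closure properties of Proposition~\ref{prop:lccv_x} directly and must instead fall back on explicit computation of $\frac{d^2}{ds^2}\ln g(e^s)$ in each case. I would organize the proof as two parallel verifications (one per case (i) and (ii)), each establishing the three subclaims, and I anticipate that the log-log convexity of the difference $f(t)-f(t-x)$ for the power case will be the single computation requiring the most algebra, since it involves differentiating $\ln(t^p-(t-x)^p)$ twice and arguing the resulting rational expression has the right sign on all of $(0,t)$. A convenient simplification is to substitute $w = (t-x)/t \in (0,1)$ to reduce to a one-parameter inequality independent of $t$ and $a$.
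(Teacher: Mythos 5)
The paper states Proposition~\ref{prop:monom_log} without proof (it defers to the reference on log-log convex functions for background), so there is no in-paper argument to compare against; your plan is the natural direct verification and it does go through. The concavity and log-log concavity checks are exactly as you describe, and your instinct that the only real work is the log-log convexity of $x\mapsto f(t)-f(t-x)$ is right: subtraction does not preserve log-log concavity/convexity, so explicit computation is unavoidable. For the record, both deferred computations close cleanly after your normalization. In the power case, writing $\psi(z)=1-(1-z)^p$ on $(0,1)$, one finds
\begin{equation*}
\frac{d}{ds}\ln\psi(e^s)\Big|_{w=e^s}=\frac{pw(1-w)^{p-1}}{1-(1-w)^p},
\end{equation*}
whose derivative in $w$ has sign equal to that of $1-pw-(1-w)^p$, which is nonnegative by Bernoulli's inequality for $p\in(0,1]$; in the logarithm case the analogous quantity reduces to the sign of $-\ln(1-w)-w\geq 0$. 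One caveat you should make explicit: as literally stated the proposition allows $p\leq 1$ including $p\leq 0$, but for $p<0$ the function $ax^p$ is convex and decreasing, so the concavity claim fails; the intended (and everywhere-used) range is $p\in(0,1]$, matching the exponents $q_i\in(0,1]$ in Theorem~\ref{thm:nest_lb}, and your proof should restrict to that range.
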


For a more comprehensive list of functions, see again~\cite{agrawal2019disciplined}. 
Note that not all positive increasing log-log concave functions are concave, for example, $f(x) = x \ln (1 + x)$ is convex but log-log concave by Proposition \ref{prop:lccv_x} and Proposition \ref{prop:monom_log}.
The log-log convexity of $f(t) - f(t - x)$ will be useful later in Appendix~\ref{sec:simplify_L_shape} to simplify optimization problems.

Assuming that $\gs_A$ and $\gs_B$ are log-log concave, we can further reduce the minimization problem in Theorem~\ref{thm:lb}. 

\begin{lemma} \label{lem:bndoptim}
    If $f$ and $g$ are positive increasing log-log concave functions on $\R_+$, then for any positive numbers $a, b, c, d, k$ with $b$ and $d$ being possibly infinite,
    \begin{equation*}
        \min_{\substack{k_A\in [a, b],~k_B \in [c, d]\\ 
        k_A k_B \geq k}}f(k_A)\cdot g(k_B)
        =
        \min_{\substack{k_A\in [a, b],~k_B \in [c, d]\\ k_A\in \set{a, b} \text{ or } k_B \in \set{c, d}, \\
        k_A k_B \geq k}}f(k_A)\cdot g(k_B).
    \end{equation*}
\end{lemma}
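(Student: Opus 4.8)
The plan is to pass to logarithmic coordinates, where log-log concavity becomes ordinary concavity, and then exploit the fact that a concave function restricted to a line segment attains its minimum at an endpoint. Writing $u = \ln k_A$ and $v = \ln k_B$, and setting $F(u) = \ln f(e^u)$, $G(v) = \ln g(e^v)$, Proposition~\ref{prop:lccv_x} shows that $F$ and $G$ are concave; they are also increasing since $f$ and $g$ are increasing. Taking logarithms of the objective, the problem becomes the minimization of the concave function $\Phi(u,v) = F(u) + G(v)$ over the convex polyhedron $R = \set{(u,v) : u \in [\ln a, \ln b],~ v \in [\ln c, \ln d],~ u + v \geq \ln k}$, with $\ln b$ and $\ln d$ possibly $+\infty$. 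The claim is exactly that this minimum is attained at a point where $u \in \set{\ln a, \ln b}$ or $v \in \set{\ln c, \ln d}$.

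First I would reduce to the active constraint. Since $\Phi$ is increasing in each coordinate, any feasible point can be pushed down and to the left --- decreasing the objective --- until it either lands on the lower corner $(\ln a, \ln c)$ (which is feasible only when $ac \geq k$, and which already has $k_A = a$) or reaches the constraint line $u + v = \ln k$. Hence the infimum of $\Phi$ over $R$ equals the minimum over the corner together with the segment $L = R \cap \set{u + v = \ln k}$, and both are attained because $L$ is compact: on $L$ one has $u \in [\ln a,~\ln k - \ln c]$, a bounded interval, regardless of whether $b$ or $d$ is infinite. On $L$ I would substitute $v = \ln k - u$ and study $h(u) = F(u) + G(\ln k - u)$, which is concave in $u$ as the sum of $F$ and the affine precomposition of the concave $G$. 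A concave function on an interval attains its minimum at an endpoint, and the endpoints of the feasible $u$-interval are $\max\set{\ln a,~\ln k - \ln d}$ and $\min\set{\ln b,~\ln k - \ln c}$. At the left endpoint either $u = \ln a$ (so $k_A = a$) or $v = \ln d$ (so $k_B = d$); at the right endpoint either $u = \ln b$ (so $k_A = b$) or $v = \ln c$ (so $k_B = c$). In every case a box boundary on $k_A$ or on $k_B$ is active, which is precisely the restricted feasible set.

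The step requiring the most care is the reduction to $L$ and the treatment of infinite $b$ or $d$, rather than any hard inequality. The monotonic-descent argument must be justified so that the infimum over the possibly unbounded region $R$ is genuinely captured by the compact set $L \cup \set{(\ln a, \ln c)}$, and one must check that the endpoint formulas $\max\set{\ln a, \ln k - \ln d}$ and $\min\set{\ln b, \ln k - \ln c}$ behave correctly when $\ln b = +\infty$ or $\ln d = +\infty$; they do, since an infinite box bound is never the active endpoint, so that endpoint is then governed by the other variable's (finite) box constraint. Conceptually, the whole argument is the statement that a concave function on the polyhedron $R$ is minimized at an extreme point, and every extreme point of $R$ lies on a box face of $u$ or of $v$; the one-dimensional formulation above is simply the cleanest way to make this rigorous while accommodating the unbounded directions.
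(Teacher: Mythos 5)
Your proposal is correct and follows essentially the same route as the paper: both reduce to the active constraint $k_Ak_B = k$ and then show the resulting one-variable problem has no interior minimizer by passing to logarithmic coordinates, where log-log concavity of $f$ and $g$ makes the objective's logarithm concave along the constraint line. A minor advantage of your phrasing is that arguing directly from concavity of $F(u)+G(\ln k - u)$ avoids the smoothing/density step the paper invokes before differentiating $h(x)=f(x)g(k/x)$, and your explicit treatment of infinite $b$ or $d$ via the bounded $u$-interval is cleaner than the paper's ``assume $b$ and $d$ are finite large numbers.''
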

\begin{proof}
    By monotonicity, when $k \leq ac$, the minimum is $f(a)g(c)$, so we are done.
    When $k \geq ac$, fixing $k$, the optimization is equivalent to
    \begin{equation*}
        \min_{\substack{k_A \in [a, b],~k_B \in [c, d],\\ k_Ak_B = k}} f(k_A) g(k_B) = \min_{\substack{x \in [a, b],\\k/x \in[c. d]}} f(x) g(k/x).
    \end{equation*}
    By monotonicity, for a finite $k$, we can always assume $b$ and $d$ are finite large numbers. 
    Through a density argument using Proposition \ref{prop:lccv_apprx}, we can further assume that $f$ and $g$ are smooth and strictly increasing on the intervals $[a, b]$ and $[c, d]$, respectively.
    Let $h(x) = f(x) g(k/x)$, then  
    \begin{equation*}
        h'(x) = \frac{1}{x}f(x)g\left(k/x\right)\left[\frac{xf'(x)}{f(x)} - \frac{(k/x)g'(k/x)}{g(k/x)}\right].
    \end{equation*}
    When $f$ and $g$ are log-log concave, by Proposition \ref{prop:lccv_x}, $F(x) = \ln f(e^x)$ and $G(x) = \ln g(e^x)$ are concave. Note that
    \begin{equation*}
        \frac{xf'(x)}{f(x)} = F'(\ln x),\ \ \ \ 
        \frac{(k/x)g'(k/x)}{g(k/x)} = G'(\ln k - \ln x). 
    \end{equation*}
    Hence $h'(x) \propto_+ F'(\ln x) - G'(\ln k - \ln x)$ (\edits{$\propto_+$ means positively proportional to, see \eqref{eq:propto_def}}). Since $F'$ and $G'$ are decreasing, $h(x)$ cannot have a local minima. Thus, the minimum of $h$ is attained on the boundary.
\end{proof}

With this simplification, we can prove Theorem \ref{thm:nest_lb}. 
Again, for convenience, we restate it here.
\edits{
\mainthmtwo*
}

\begin{proof}
    Equation \eqref{eq:lb_for_many} will follow via a proof by induction on $p$. 
    In the base case $p = 2$, we have by Theorem \ref{thm:lb} and Lemma \ref{lem:bndoptim}, 
    \begin{equation*}
        \gs_C(k) = \min\set{\gs_1(k / d_2),~\gs_2(k / d_1)}
    \end{equation*}
    is a valid concave and log-log concave rank expansion lower bound for $\bC$.
    For $p > 2$, let $\bB = \bigotimes_{i < p} \bA_i$.
    By the base case and the induction hypothesis,
    \begin{align*}
        \gs_C(k) &= \min\set{\gs_B(k / d_p),~\gs_p(k / d_B)} \\
        &=
        \min\set{\min_{j < p}\set{\gs_j\left(\frac{k / d_p}{\prod_{i \neq j, i < p}d_i}\right)}, \gs_p\left(\frac{k}{\prod_{i < p}d_i}\right)} \\
        &= \min_{j\leq p}\set{\gs_j\left(\frac{k}{\prod_{i \neq j, i \leq p}d_i}\right)},
    \end{align*}
    where we used the fact that $d_B = \gs_B^\dagger(1)= \prod_{i < p} d_i$, which is readily verified by an induction. Thus \eqref{eq:lb_for_many} is established, and it is clear that $\gs_C$ is again concave and log-log concave.
    
    If $\gs_i(k) = (k / k_i)^{q_i}$, then $d_i = k_i$, and
    \begin{equation*}
        \gs_j\left(\frac{k}{\prod_{i \neq j}d_i}\right)
        =
        \left(\frac{k / d_j}{\prod_{i \neq j}d_i}\right)^{q_j}
        =
        \left(\frac{k}{\prod_i k_i}\right)^{q_j}.
    \end{equation*}
    The result follows after a trivial minimization.
    
    For $\gs_i(k) = a_i \ln (b_i k + 1)$, we proceed by induction on $p$. We assume that $a_1 \geq \ldots \geq a_{p-1} \geq a_p$, and again let $\bB = \bigotimes_{i < p} \bA_i$.
    Then
    \begin{align*}
        \gs_C(k) &=
        \min\set{\gs_B(k/d_p), \gs_p(k/d_B)} \\
        &=
        \min\set{a_{p-1} \ln \left(b_B\frac{k}{d_p} + 1\right),~
                    a_{p} \ln \left(b_p\frac{k}{d_B} + 1\right)},
    \end{align*}
    where $d_p = \gs_p^{\dagger}(1) = b_{p}^{-1} (e^{1/a_{p}} - 1)$ and $d_B = \gs_B^{\dagger}(1) = b_B^{-1} (e^{1/a_{p-1}} - 1)$, and so by the induction hypothesis, $b_B = \displaystyle \frac{e^{1/a_{p-1}}-1}{\prod_{i < p}b_i^{-1}(e^{1/a_i}-1)}$.
    
    When $ k < d_B d_p$, \edits{$\gs_C(k) \leq 1\leq \tilde{\gs}_C(k)$, so $\gs_C(k)$ is a valid rank expansion lower bound for any nonzero matrix.} 
    Consider $k \geq d_B d_p$. Let $k = \gc d_Bd_p$, $\gc \geq 1$, we have
    \begin{equation} \label{eq:qgp_a80}
        \gs_C(k) =
        \min\set{
        a_{p-1} \ln \left((e^{1/a_{p-1}}-1)\gc + 1\right),~a_{p} \ln\left((e^{1/a_{p}}-1)\gc + 1\right)}.
    \end{equation}
    We prove that the second term in the min above is smaller than the first term. With this established, one can directly verify that $\gs_C$ coincides with the proposed function given by \eqref{eq:lb_for_log}, and the proof is then complete.

    \sloppy To that end, consider $g(x) = \frac{h(x)}{x}$, where $h(x) = \ln\left(\gc (e^x - 1) + 1\right)$. 
    The two terms in the minimization of~\eqref{eq:qgp_a80} are, respectively, $g(1/a_{p-1})$ and $g(1/a_{p})$. 
    Now, it suffices to show that $g$ is decreasing on $\R_+$.
    Indeed, note that $h(0) = 0$, and one can check that $h$ is a concave function on $\R_+$ when $\gc \geq 1$. 
    Thus $g(x)$ is the secant slope of $h$ between $x_1 = 0$ and $x_2 = x$, which is decreasing in $x$.
\end{proof}

\section{Applications} \label{sec:Apps}
We apply the rank expansion lower bounds from the previous section to derive communication lower bounds for several fast bilinear algorithms: Strassen's fast matrix-matrix multiplication, fast convolution, and partially symmetric tensor contractions. \edits{We consider both sequential and parallel communication lower bounds, as we defined in Section~\ref{sec:commlb_frmwrk}.} 
\edits{Throughout this section, we will use the set $\mathcal{P}_n^{(k)}$, which recall is the of matrices with $k$ different columns from an identity matrix of size $n$.}

\subsection{Fast Matrix Multiplication} \label{sec:MMSec}
Communication lower bounds for classical matrix multiplication are usually derived by explicitly reasoning about potential partial sums and applying the Loomis-Whitney inequality~\cite{irony2004communication,loomis1949inequality}.
The expansion bound -- which recall from Definition~\ref{def:ExpansionBnd} bounds the 
largest portion of a bilinear one can complete given a subset of data -- for classical matrix
multiplication \edits{is the function (see, e.g.,~\cite[Lemma
B.1]{solomonik2017communication} or~\cite[Lemma 2.1]{ballard2011minimizing}),
\[
    \mathcal{E}(r^{(A)},r^{(B)},r^{(C)}) = \sqrt{r^{(A)} r^{(B)} r^{(C)}},
\]
where $r^{(A)}$, $r^{(B)}$, and $r^{(C)}$ are the number of elements from the first input, second input, and output stored in cache or on a local machine, respectively. } 
Strassen's algorithm~\cite{strassen1969gaussian}, the most practical known fast algorithm~\cite{ballard2012communication} among those that achieve subcubic complexity, corresponds to a more nontrivial bilinear algorithm than classical matrix multiplication.

We derive communication lower bounds for the bilinear algorithm given by Strassen's approach, which considers any other computational DAG.
These DAGs include ones that do not follow the recursive structure of Strassen's algorithm; they need only compute the same scalar products at the base case level of recursion.
In particular, the operands can be computed by any other additions or linear combinations. 
First, recall that the bilinear algorithm for Strassen's algorithm is as follows.
\begin{defn}[Bilinear Algorithm for Strassen's Matrix Multiplication] \label{def:StrassenBA}
\begin{align*}
    \B{A} = \begin{bmatrix*}[r]
    1&0&1&0&1&-1&0\\
    0&0&0&0&1&0&1\\
    0&1&0&0&0&1&0\\
    1&1&0&1&0&0&-1
    \end{bmatrix*}, \ 
    \B{B} = \begin{bmatrix*}[r]
        1&1&0&-1&0&1&0\\
        0&0&1&0&0&1&0\\
        0&0&0&1&0&0&1\\
        1&0&-1&0&1&0&1
    \end{bmatrix*}, \ 
    \B{C} =  \begin{bmatrix*}[r]
        1&0&0&1&-1&0&1\\
        0&0&1&0&1&0&0\\
        0&1&0&1&0&0&0\\
        1&-1&1&0&0&1&0
    \end{bmatrix*}.
\end{align*}
\end{defn}
In view of these two encoding and decoding matrices, we construct a rank expansion lower bound for the bilinear algorithm $(\B{A},\B{B},\B{C})$. 
\begin{lemma} \label{lem:MMExp}
    \sloppy For a \edits{nested bilinear algorithm} $\big ( \bigotimes_{i=1}^\tau \B{A}, \bigotimes_{i=1}^\tau  \B{B}, \bigotimes_{i=1}^\tau \B{C} \big)$, where $\tau \in \mathbb{Z}_+$ and $(\B{A},\B{B},\B{C})$
    is the bilinear algorithm for Strassen's base algorithm, then the function,
    \[
        \mathcal{E}(r^{(A)},r^{(B)},r^{(C)}) = \min(r^{(A)}, r^{(B)}, r^{(C)})^{\log_2(3)}
    \]
    is an expansion bound function for the nested bilinear algorithm.
\end{lemma}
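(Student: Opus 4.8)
<br />

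The plan is to apply Theorem~\ref{thm:nest_lb} with all factors equal to the single Strassen base matrix, which reduces the entire problem to computing the rank expansion lower bound of each of the three $4\times 7$ matrices $\B{A}$, $\B{B}$, $\B{C}$ individually. The overall target $\min(r^{(A)},r^{(B)},r^{(C)})^{\log_2(3)}$ is an expansion bound of the form $\mathcal{E}(r^{(A)},r^{(B)},r^{(C)}) = \min\{\tilde{\gs}_A^\dagger(r^{(A)}),\tilde{\gs}_B^\dagger(r^{(B)}),\tilde{\gs}_C^\dagger(r^{(C)})\}$, as recorded in Section~\ref{sec:RkExp}, so it suffices to show that each of the three nested encoding/decoding matrices $\bigotimes_{i=1}^\tau \B{A}$ (and similarly $\B{B}$, $\B{C}$) admits the rank expansion lower bound $\gs(k) = k^{\,1/\log_2(3)} = k^{\log_7(3)\cdot\log_2(7)/\ldots}$; more cleanly, I want $\gs(k)$ whose pseudoinverse is $\gs^\dagger(r) = r^{\log_2(3)}$, i.e.\ $\gs(k) = k^{1/\log_2(3)} = k^{\log_3(2)}$.

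The first concrete step is therefore to establish a rank expansion lower bound for the \emph{single} Strassen matrix. Each of $\B{A},\B{B},\B{C}$ is $4\times 7$, so its rank expansion $\tilde{\gs}(k)$ is defined for $k\in[7]$. I would show that the appropriate single-factor lower bound is $\gs_1(k) = k^{\log_3(2)}$, equivalently $\gs_1(k) = (k/1)^{q}$ with $q = \log_3(2)\in(0,1]$ and $k_1 = 1$, matching the monomial form $\gs_i(k)=(k/k_i)^{q_i}$ that feeds the clean conclusion~\eqref{eq:lb_for_poly} of Theorem~\ref{thm:nest_lb}. To verify $\gs_1(k)\le \tilde{\gs}(k)$, I must check, for each $k=1,\dots,7$, that every choice of $k$ columns of the matrix has rank at least $\lceil k^{\log_3(2)}\rceil$. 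At the endpoints this reduces to $\tilde{\gs}(1)\ge 1$ (columns are nonzero) and $\tilde{\gs}(7)\ge 4$ (the matrix is full rank), and the intermediate values $k=3$ giving $\ge 2$ and $k=7$ giving $\ge 4$ are exactly the powers of $3$ that make $k^{\log_3(2)}$ an integer; the remaining $k$ follow by monotonicity of $\tilde{\gs}$. The content is the Kruskal-rank-type combinatorial check that the Strassen matrices have no small sets of low-rank columns.

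Once the single-factor bound $\gs_i(k)=(k/1)^{\log_3(2)}$ is in hand for all three matrices, I invoke the polynomial case of Theorem~\ref{thm:nest_lb}: with $q_i = \log_3(2)$ and $k_i = 1$ for every $i=1,\dots,\tau$, formula~\eqref{eq:lb_for_poly} yields
\[
    \gs_C(k) = \left(\frac{k}{\prod_i k_i}\right)^{\min_i q_i} = k^{\log_3(2)}
\]
as a rank expansion lower bound for each of $\bigotimes_{i=1}^\tau\B{A}$, $\bigotimes_{i=1}^\tau\B{B}$, $\bigotimes_{i=1}^\tau\B{C}$ (here I must confirm the concavity and log-log concavity hypotheses, which hold by Proposition~\ref{prop:monom_log}(ii) since $q=\log_3(2)\le 1$, and $\gs_i(0)=0$). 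Taking pseudoinverses gives $\gs_C^\dagger(r) = r^{1/\log_3(2)} = r^{\log_2(3)}$ for each matrix, and assembling the three into the expansion bound $\mathcal{E}=\min\{\tilde{\gs}_A^\dagger,\tilde{\gs}_B^\dagger,\tilde{\gs}_C^\dagger\}$ delivers exactly $\min(r^{(A)},r^{(B)},r^{(C)})^{\log_2(3)}$.

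The main obstacle is the single-factor verification in the first step: proving that the Strassen base matrices satisfy $\tilde{\gs}(k)\ge \lceil k^{\log_3(2)}\rceil$ requires checking that every $3$-column subset has rank at least $2$ (automatic, since no two columns are parallel) and, more delicately, that the exponent $\log_3(2)$ is the largest one consistent with all of $k=1,\dots,7$. The critical constraints are the tight points $k=3\mapsto 2$ and $k=7\mapsto 4$, where $3^{\log_3(2)}=2$ and $7^{\log_3(2)}>3$ force $\ge 4$; I expect the binding case to be ruling out a $4$- or $5$-column subset of rank only $2$, which is where the specific nonzero pattern of Strassen's matrices (as opposed to a generic $4\times 7$ matrix) actually matters. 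I would handle this by direct inspection of column sets of the three explicit matrices in Definition~\ref{def:StrassenBA}, noting the symmetry among $\B{A},\B{B},\B{C}$ to avoid repeating the argument three times.
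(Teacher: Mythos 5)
Your proposal is correct and follows essentially the same route as the paper: verify by direct inspection that each $4\times 7$ Strassen matrix admits the rank expansion lower bound $\gs(k)=k^{\log_3(2)}$ (the paper records the minimum ranks as $1,2,2,3,3,4,4$ for $k=1,\dots,7$), then apply the polynomial case of Theorem~\ref{thm:nest_lb} with $q_i=\log_3(2)$, $k_i=1$ to the Kronecker powers, and assemble the three pseudoinverses into $\mathcal{E}=\min(r^{(A)},r^{(B)},r^{(C)})^{\log_2(3)}$. The only slip is cosmetic: $7$ is not a power of $3$, and besides the $k=4,5$ cases the check for $k=6$ (no six columns of rank $3$) is also binding since $6^{\log_3 2}>3$ — but your planned exhaustive column-subset inspection covers this.
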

\begin{proof}
    We first focus on the matrix $\B{A}$ and will later briefly address matrices $\B{B}$ and $\B{C}$. 
    
    By direct calculations (e.g., numerical calculations), the smallest rank for a matrix formed by any subset of $k \in [7]$ different columns from the matrix $\B{A}$ is $1,2,2,3,3,4,4$ for $k=1,\ldots,7$, respectively. Now, define the function
    \[
        \gs(k) = k^{\log_3(2)}.
    \]
    One can directly check the following holds for any $k \in [7]$:
    \[
        \gs(k) \leq \min_{\B{P} \in \mathcal{P}_7^{(k)}} \{ \mathrm{rank}(\B{AP}) \}.
    \]
  \edits{This shows $\gs(k)$ is a rank expansion lower bound for $\B{A}$ (Definition~\ref{def:RankExpLBDef}).} Indeed, $\gs$ is also concave, log-log concave (see Proposition~\ref{prop:monom_log}), and satisfies $\gs(0) = 0$. \edits{Also, recalling the definition of the pseudoinverse from~\eqref{eq:defpseudoinv}}, one can verify $d_A \equiv \gs^\dagger(1) = 1$. This permits us to apply Proposition~\ref{thm:nest_lb} with $q \equiv q_i = \log_3(2) < 1$ to ensure the rank expansion lower bound for $\bigotimes_{i=1}^\tau \B{A}$ for any $\tau \in \mathbb{N}$ and all $k \in [7^\tau]$ is 
    \begin{align*}
        \gs(k) &= k^{q}.
    \end{align*}
    By using a nearly identical calculation, one can confirm $\gs$ is a rank expansion lower bound for $\bigotimes_{i=1}^\tau\B{B}$ and $\bigotimes_{i=1}^\tau \B{C}$ for all $\tau \in \mathbb{N}$. We can then conclude for any $k \in [7^\tau]$ and all $\B{P} \in \mathcal{P}_{7^\tau}^{(k)}$,
    \begin{align*}
        \sigma(k) 
        \leq 
        \min\big\{
        \mathrm{rank}\big([\otimes_{i=1}^\tau \B{A}]\B P\big),
        \mathrm{rank}\big([\otimes_{i=1}^\tau\B{B}]\B P\big),
        \mathrm{rank}\big([\otimes_{i=1}^\tau\B{C}]\B P\big)
        \big\}.
    \end{align*}
    Applying the monotone function $\sigma^{-1}(k) = k^{\log_2(3)}$ to both sides and recalling the choice of the expansion bound function $\mathcal{E}$, we get for all $\B{P} \in \mathcal{P}_{7^\tau}^{(k)}$,
    \begin{align*}
        \#\text{cols}(\B{P}) 
        &\leq \mathcal{E}\Big(
        \mathrm{rank}\big([\otimes_{i=1}^\tau \B{A}]\B P\big),
        \mathrm{rank}\big([\otimes_{i=1}^\tau\B{B}]\B P\big),
        \mathrm{rank}\big([\otimes_{i=1}^\tau\B{C}]\B P\big)
        \Big).
    \end{align*}
    Thus, $\mathcal{E}$ is an expansion bound by Definition~\ref{def:ExpansionBnd} (c.f.~\eqref{eq:rnk_exp_to_exp_bnd}).
\end{proof}
Next, we apply Lemma~\ref{lem:MMExp} to derive a sequential communication lower bound.


\begin{corollary}
\label{cor:MMVLB}
  Given square matrices of size $n \in \mathbb{N}$ (assumed to be a power of 2) and \edits{fast memory of size $M$}, the sequential communication cost of Strassen's fast matrix multiplication algorithm is at least
    \begin{equation*}
      \max \bigg\{
      \frac{2 n^{\log_2(7)}}{M^{\log_2(3)}} \cdot M,
      3n^2 \bigg\}.
  \end{equation*}
\end{corollary}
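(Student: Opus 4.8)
The plan is to feed the expansion bound from Lemma~\ref{lem:MMExp} into the sequential communication bound of Proposition~\ref{prop:edgar_seq_thm}; the corollary then reduces to two short computations. First I would fix the nested-algorithm parameters. Writing $n = 2^\tau$, Strassen's algorithm on $n \times n$ matrices is exactly the $\tau$-fold nested bilinear algorithm $\big(\bigotimes_{i=1}^\tau \B{A}, \bigotimes_{i=1}^\tau \B{B}, \bigotimes_{i=1}^\tau \B{C}\big)$ built from the base matrices of Definition~\ref{def:StrassenBA}. Since each base matrix is $4 \times 7$, the Kronecker powers are $4^\tau \times 7^\tau$, so the rank is $R = 7^\tau$ and the input/output dimensions are $m_A = m_B = m_C = 4^\tau$. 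Using $\tau = \log_2 n$, these simplify to $R = n^{\log_2 7}$ and $m_A = m_B = m_C = (2^\tau)^2 = n^2$, so in particular $m_A + m_B + m_C = 3n^2$, which already supplies the second term in the claimed maximum.

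Next I would evaluate $\mathcal{E}^{\max}(M)$ for the expansion bound $\mathcal{E}(r^{(A)}, r^{(B)}, r^{(C)}) = \min(r^{(A)}, r^{(B)}, r^{(C)})^{\log_2 3}$ provided by Lemma~\ref{lem:MMExp}. Because $t \mapsto t^{\log_2 3}$ is increasing, maximizing $\mathcal{E}$ over the constraint $r^{(A)} + r^{(B)} + r^{(C)} = 3M$ is the same as maximizing $\min(r^{(A)}, r^{(B)}, r^{(C)})$ subject to that constraint; by symmetry the balanced choice $r^{(A)} = r^{(B)} = r^{(C)} = M$ is optimal, giving
\[
    \mathcal{E}^{\max}(M) = M^{\log_2 3}.
\]
Since $\log_2 3 > 1$, this function is increasing and convex, so the convexity and monotonicity hypothesis of Proposition~\ref{prop:edgar_seq_thm} is satisfied.

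Finally I would substitute into Proposition~\ref{prop:edgar_seq_thm}. The first term of its bound becomes
\[
    \frac{2RM}{\mathcal{E}^{\max}(M)} = \frac{2\,n^{\log_2 7}\,M}{M^{\log_2 3}} = \frac{2 n^{\log_2 7}}{M^{\log_2 3}} \cdot M,
\]
and combining with $m_A + m_B + m_C = 3n^2$ yields exactly the asserted maximum. I do not expect a serious obstacle here: essentially all of the difficulty is already absorbed into Lemma~\ref{lem:MMExp} (and, through it, the nested rank-expansion bound of Theorem~\ref{thm:nest_lb}). The only points needing genuine care are the verification that the symmetric min-objective is maximized at the balanced point $r^{(A)} = r^{(B)} = r^{(C)} = M$ (where one should note that $M \in \mathbb{N}$ makes this point feasible, avoiding any rounding issue) and the observation that the resulting $\mathcal{E}^{\max}(M) = M^{\log_2 3}$ is convex, as required by Proposition~\ref{prop:edgar_seq_thm}; both are immediate once stated.
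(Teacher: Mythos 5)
Your proposal is correct and follows essentially the same route as the paper: it plugs the expansion bound $\mathcal{E}(r^{(A)},r^{(B)},r^{(C)})=\min(r^{(A)},r^{(B)},r^{(C)})^{\log_2 3}$ from Lemma~\ref{lem:MMExp} into Proposition~\ref{prop:edgar_seq_thm}, computes $\mathcal{E}^{\max}(M)=M^{\log_2 3}$ via the balanced choice $r^{(A)}=r^{(B)}=r^{(C)}=M$, and identifies $R=n^{\log_2 7}$ and $m_A+m_B+m_C=3n^2$. Your additional remarks on feasibility of the balanced point and on the convexity of $M^{\log_2 3}$ are correct and only make explicit what the paper leaves implicit.
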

\begin{proof}
    By Proposition~\ref{prop:edgar_seq_thm} and noting the size of the inputs and output are $n^2$, the sequential communication lower bound of the nested bilinear algorithm is
    \begin{equation} \label{eq:VLB}
        \max \Big\{
            \frac{2RM}{\mathcal{E}^{\max}(M)}, {3n^2}
        \Big \},
    \end{equation}
    where the rank of the nested bilinear algorithm is $R = 7^{\log_2(n)} = n^{\log_2(7)}$. With the help of the expansion bound from Lemma~\ref{lem:MMExp}, the function $\mathcal{E}^{\max}(M)$ (from Proposition~\ref{prop:edgar_seq_thm}) can be expressed as
    \begin{align*}
        \mathcal{E}^{\max}(M) &= \max_{\substack{r^{(A)}, r^{(B)}, r^{(C)} \in \mathbb{N}, \\r^{(A)} + r^{(B)} + r^{(C)} = 3M}} \mathcal{E}(r^{(A)}, r^{(B)}, r^{(B)}) \\
        &= \max_{\substack{r^{(A)}, r^{(B)}, r^{(C)} \in \mathbb{N}, \\r^{(A)} + r^{(B)} + r^{(C)} = 3M}}
        \min(r^{(A)}, r^{(B)}, r^{(C)})^{\log_2(3)} \\
        &=
        M^{\log_2(3)}.
    \end{align*}
    This function is strictly increasing (over the nonnegative reals) and convex, hence, we can substitute it into~\eqref{eq:VLB}. 
\end{proof}

In contrast, the existing sequential communication lower bound for the standard Strassen's algorithm computational DAG~\cite{ballard2013graph} is
 \begin{equation*} \label{eq:MMLBBallard}
      \max \bigg \{ \frac{n^{\log_2(7)}}{M^{\log_4(7)}} \cdot M, 3n^2 \bigg \}.
  \end{equation*}

\begin{remark} \label{rem:rqgp_1}
    \editstwo{We suspect Corollary~\ref{cor:MMVLB}'s lower bound is not tight. The gap may arise because the rank expansion bound of a bilinear algorithm and communication complexity are not tight for certain bilinear algorithms $(\B A, \B B, \B C)$ where only a small subset of columns are low rank in $\B{A}$ (or $\B{B}$ or $\B{C}$) while the remaining are nearly full rank. For example, the subset of columns with indices \{1, 2, 4\} from $\B A$ for Strassen's bilinear algorithm (Definition~\ref{def:StrassenBA}) together have rank 2 while any 3 columns from the remaining 4 columns are full rank. The proof of Proposition~\ref{prop:edgar_seq_thm} from~\cite{solomonik2017communication} performs a worst-case analysis where if there exists a subset of columns that is low rank in, for example $\B{A}$, then the low-rank structure is assumed to persist in any remaining subset of columns in $\B{A}$.}
\end{remark}

Next, we consider lower bounds on the parallel communication cost.
\begin{corollary} 
\label{cor:MMHLB}
    Given square matrices of size $n$ (assumed to be a power of 2) and \edits{$P$ processors,}
  the parallel communication cost of 
  Strassen's fast matrix multiplication algorithm is at least
  \begin{equation*}
      3 \cdot \bigg( \frac{n^{\log_3(7)}}{P^{\log_3(2)}} - \frac{n^2}{P}\bigg).
  \end{equation*}
\end{corollary}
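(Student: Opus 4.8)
The plan is to apply the parallel lower bound of Proposition~\ref{prop:edgar_par_thm} directly to the $\tau$-fold nested Strassen algorithm, feeding it the expansion bound $\mathcal{E}(r^{(A)},r^{(B)},r^{(C)})=\min(r^{(A)},r^{(B)},r^{(C)})^{\log_2(3)}$ from Lemma~\ref{lem:MMExp}. For $\tau=\log_2(n)$ levels of recursion the rank is $R=7^{\log_2(n)}=n^{\log_2(7)}$, and each input/output dimension is $m_A=m_B=m_C=4^{\log_2(n)}=n^2$. First I would instantiate the feasibility constraint of Proposition~\ref{prop:edgar_par_thm}, which becomes
\[
    \frac{R}{P}\;\leq\;\min_{X\in\{A,B,C\}}\Bigl(r^{(X)}+\frac{n^2}{P}\Bigr)^{\log_2(3)}.
\]
Since the proposition certifies that the busiest processor's (unknown) communication triple satisfies this constraint, a guaranteed lower bound on cost is obtained by minimizing $r^{(A)}+r^{(B)}+r^{(C)}$ over all feasible triples.

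Next I would solve this minimization. Because the $\min$ on the right forces every argument to be large, the constraint separates coordinatewise: raising both sides to the power $1/\log_2(3)=\log_3(2)$ gives, for each $X$,
\[
    r^{(X)}+\frac{n^2}{P}\;\geq\;\Bigl(\frac{R}{P}\Bigr)^{\log_3(2)},
    \qquad\text{i.e.}\qquad
    r^{(X)}\;\geq\;\Bigl(\frac{R}{P}\Bigr)^{\log_3(2)}-\frac{n^2}{P}.
\]
The minimal feasible sum sets each $r^{(X)}$ at this common lower bound, yielding total communication at least $3\bigl[(R/P)^{\log_3(2)}-n^2/P\bigr]$.

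The remaining work is purely exponent arithmetic. Substituting $R=n^{\log_2(7)}$ gives $(R/P)^{\log_3(2)}=n^{(\log_2 7)(\log_3 2)}/P^{\log_3(2)}$, and the change-of-base identity $(\log_2 7)(\log_3 2)=\tfrac{\ln 7}{\ln 2}\cdot\tfrac{\ln 2}{\ln 3}=\log_3(7)$ collapses the exponent, so that
\[
    \Bigl(\frac{R}{P}\Bigr)^{\log_3(2)}=\frac{n^{\log_3(7)}}{P^{\log_3(2)}}.
\]
Plugging this in reproduces the claimed bound $3\bigl(n^{\log_3(7)}/P^{\log_3(2)}-n^2/P\bigr)$.

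The main (and fairly mild) obstacle is getting the direction of the optimization right: Proposition~\ref{prop:edgar_par_thm} lower-bounds the cost by the sum for the \emph{actual} triple of the heaviest-loaded processor, so to obtain an unconditional statement I must \emph{minimize} $r^{(A)}+r^{(B)}+r^{(C)}$ over the feasible set rather than maximize it, and note that the separable structure of the constraint makes the symmetric choice optimal without further argument. I should also remark that the bound is substantive only when $(R/P)^{\log_3(2)}\geq n^2/P$ (otherwise the stated quantity is nonpositive and the claim is vacuous), and that the integrality of the $r^{(X)}$ required by Proposition~\ref{prop:edgar_par_thm} is absorbed by the real-valued relaxation of the expansion bound noted after Definition~\ref{def:ExpansionBnd}.
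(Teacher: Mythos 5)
Your proposal is correct and follows essentially the same route as the paper: instantiate Proposition~\ref{prop:edgar_par_thm} with the expansion bound of Lemma~\ref{lem:MMExp}, observe that the $\min$ forces each $r^{(X)}+n^2/P$ to be at least $(R/P)^{\log_3(2)}$, and collapse the exponent via $(\log_2 7)(\log_3 2)=\log_3 7$. Your write-up is in fact more careful than the paper's terse version, in particular in making explicit the direction of the optimization and the vacuity of the bound when the stated quantity is nonpositive.
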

\begin{proof}
    By Proposition~\ref{prop:edgar_par_thm} and noting the size of the inputs and output are $n^2$, the communication lower bound of the nested bilinear algorithm is no
    smaller than the sum of some natural numbers $r^{(A)},r^{(B)},r^{(C)} \in \mathbb{N}$, which satisfy,
    \begin{equation} \label{eq:HLB}
        \frac{R}{P} \leq
        \mathcal{E}(r^{(A)} + n^2/P, r^{(B)} + n^2/P, c^{(C)} + n^2/P).
    \end{equation}
    Lemma~\ref{lem:MMExp}
    provides an expansion bound $\mathcal E$ for Strassen's algorithm, yielding
    \begin{align*}
        \frac{R}{P} \leq
        \min\big( c^{(A)} + n^2/P, c^{(B)} + n^2/P, c^{(C)} + n^2/P \big)^{\log_2(3)}.
    \end{align*}
    Since the rank is $R=n^{\log_2(7)}$, the communication cost is at least
    \begin{align*}
        c^{(A)} + c^{(B)} + c^{(C)}
        \geq
        3 \cdot \Big( \frac{n^{\log_3(7)}}{p^{\log_3(2)}} - \frac{n^2}{P} \Big).
    \end{align*}
\end{proof}


\subsection{Convolution} \label{sec:ConvSec}
Given a set of distinct nodes $\{x_i\}_{i=1}^m$, where $x_i \in \mathbb{C}$, the corresponding
Vandermonde matrix is $\B{V}_m^n \in \mathbb{C}^{m \times n}$, where
$[\B{V}_m^n]_{i,j} = x_i^{j-1}$. To compute the discrete convolution between two vectors $\B{f}, \B{g} \in \mathbb{C}^k$,
we use the Toom-$k$ bilinear algorithm,
$\mathcal{F} = \Big( (\B{V}_{2k-1}^{k})^T, (\B{V}_{2k-1}^{k})^T, (\B{V}_{2k-1}^{2k-1})^{-1} \Big)$~\cite{ju2020derivation}. The term Toom-$k$ refers to a particular bilinear algorithm for computing the convolution between vectors of size $k$, and it belongs to a broader class of convolution algorithms known as Toom-Cook. For example, the discrete Fourier transform (DFT) is a special case of Toom-Cook, hence our
lower bounds apply to fast Fourier transform (FFT)-based approaches for convolution.

It should be noted convolution can be applied to both integer and polynomial multiplication (i.e., Toom-Cook). While they are similar, a key difference between the two is that the former includes a carry-over step. While our analysis does not consider this carry-over step,  we mention that this step does not impact the communication lower bound of integer multiplication as in previous analysis~\cite{de2019complexity}.

By nesting Toom-Cook bilinear algorithms, one can utilize split-nesting schemes to derive algorithms for 1D convolution that
are more stable and computationally efficient~\cite{selesnick1994extending,agarwal1977new}, as well as to compute
multidimensional convolution~\cite{pitas1987multidimensional,lavin2016fast}. This, however, results in bilinear algorithms whose matrices are no longer
Vandermonde. In view of this, we apply Proposition~\ref{thm:nest_lb} to deduce rank expansion lower bounds for $\bigotimes_{i=1}^\tau (\B{V}_{2k_i-1}^{k_i})^T$ given arbitrary integers $k_i$. \editstwo{Since the $k_i$'s may be different, we define for notational convenience,
\begin{align} \label{eq:qgp_1}
  \underline{k} := \min_i k_i.
\end{align}}

\begin{lemma} \label{lem:ConvExp}
    \sloppy For
    a nested bilinear algorithm $ (\bigotimes_{i=1}^\tau \B{A}_i, \bigotimes_{i=1}^\tau \B{B}_i, \bigotimes_{i=1}^\tau \B{C}_i)$, where $\tau \in \mathbb{Z}_+$, $(\B{A}_i, \B{B}_i, \B{C}_i)$ is
    the bilinear algorithm for Toom-$k_i$, 
    then the function,
    \begin{equation*}
        \mathcal{E}(r^{(A)}, r^{(B)}, r^{(C)}) = \min(r^{(A)}, r^{(B)} )^{\log_{\underline k}(2\underline k-1)}, 
    \end{equation*}
    is an expansion bound function for the bilinear algorithm.
\end{lemma}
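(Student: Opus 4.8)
The plan is to mirror the proof of Lemma~\ref{lem:MMExp}: first obtain a concave, log-log concave rank expansion lower bound for each factor $\B{A}_i = \B{B}_i = (\B{V}_{2k_i-1}^{k_i})^T$, then lift these to the full Kronecker product via Theorem~\ref{thm:nest_lb}, and finally convert the resulting rank expansion into the claimed expansion bound $\mathcal{E}$. The starting observation is that $(\B{V}_{2k_i-1}^{k_i})^T$ is a $k_i \times (2k_i-1)$ matrix whose $j$-th column is the moment vector $(1, x_j, \ldots, x_j^{k_i-1})^T$ of a distinct node $x_j$. Taking any $\ell \leq k_i$ columns yields a $k_i \times \ell$ block whose top $\ell \times \ell$ minor is a genuine Vandermonde matrix in distinct nodes, hence invertible; so every $\ell$-column subset with $\ell \leq k_i$ has rank exactly $\ell$, while for $\ell > k_i$ the rank saturates at $k_i$. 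Thus the exact rank expansion of each factor is $\tilde{\gs}_{A_i}(\ell) = \min(\ell, k_i)$.

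Next I would produce a power-law lower bound of the form required by Theorem~\ref{thm:nest_lb}. Set $q_i = \log_{2k_i-1}(k_i) \in (0,1)$ and take $\gs_i(\ell) = \ell^{q_i}$ (normalization constant equal to $1$). I would verify $\gs_i(\ell) \leq \min(\ell, k_i)$ on $[2k_i-1]$ in two regimes: for $\ell \leq k_i$, the inequality $q_i \leq 1$ gives $\ell^{q_i} \leq \ell$; for $k_i < \ell \leq 2k_i-1$, the exponent is chosen so that $(2k_i-1)^{q_i} = k_i$, whence $\ell^{q_i} \leq (2k_i-1)^{q_i} = k_i$. By Proposition~\ref{prop:monom_log}(ii) each $\gs_i$ is concave and log-log concave with $\gs_i(0) = 0$, and $d_i = \gs_i^{\dagger}(1) = 1$.

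Then I would invoke Theorem~\ref{thm:nest_lb} (polynomial case) with all normalization constants equal to $1$, giving the rank expansion lower bound $\gs_A(k) = k^{\min_i q_i}$ for $\B{A} = \bigotimes_i \B{A}_i$, and identically for $\B{B}$. The key arithmetic step is to identify $\min_i q_i$: since $q_i = \log_{2k_i-1}(k_i)$ is increasing in $k_i$, the minimum is attained at $\underline{k} = \min_i k_i$, so $\min_i q_i = \log_{2\underline{k}-1}(\underline{k})$ and $\gs_A^{\dagger}(r) = r^{1/\log_{2\underline{k}-1}(\underline{k})} = r^{\log_{\underline{k}}(2\underline{k}-1)}$. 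For the decoding matrices, each $\B{C}_i = (\B{V}_{2k_i-1}^{2k_i-1})^{-1}$ is square and invertible, so $\bigotimes_i \B{C}_i$ is invertible and every column subset is full rank; hence $\gs_C(k) = k$ is a (tight) rank expansion lower bound with $\gs_C^{\dagger}(r) = r$.

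Finally, the expansion bound induced by these rank expansion lower bounds (Definition~\ref{def:ExpansionBnd}) is $\mathcal{E}(r^{(A)}, r^{(B)}, r^{(C)}) = \min\{(r^{(A)})^{\log_{\underline{k}}(2\underline{k}-1)}, (r^{(B)})^{\log_{\underline{k}}(2\underline{k}-1)}, r^{(C)}\}$. Dropping the third argument only enlarges this minimum, and since any non-decreasing function lying pointwise above a valid expansion bound is itself a valid expansion bound, $\min(r^{(A)}, r^{(B)})^{\log_{\underline{k}}(2\underline{k}-1)}$ is an expansion bound, as claimed. The main obstacle I anticipate is the monotonicity of $q_i = \log_{2k_i-1}(k_i)$ in $k_i$; the rest is either the Vandermonde full-rank fact or a direct application of Theorem~\ref{thm:nest_lb}. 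I would establish this monotonicity by an elementary calculus check on $k \mapsto \ln k / \ln(2k-1)$, equivalently by verifying $(2k-1)\ln(2k-1) > 2k\ln k$ for all integer $k \geq 2$, e.g. by comparison with the convex function $x \ln x$.
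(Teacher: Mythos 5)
Your proposal is correct and follows essentially the same route as the paper's proof: establish that every column subset of $(\B{V}_{2k_i-1}^{k_i})^T$ is full rank via the Vandermonde structure, bound the per-factor rank expansion by a power law with exponent $\log_{2k_i-1}(k_i)$, lift to the Kronecker product via Theorem~\ref{thm:nest_lb}, and invert. The only (immaterial) difference is bookkeeping: you keep the per-factor exponents $q_i$ and let the polynomial case of Theorem~\ref{thm:nest_lb} select $\min_i q_i$, while the paper first uniformizes all factors to the single exponent $\log_{2\underline{k}-1}(\underline{k})$ using the same monotonicity of $k \mapsto \log_{2k-1}(k)$ that you identify and verify.
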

\begin{proof}
    Since $\B{C}_i = \B{V}_{2k_i-1}^{2k_i-1}$ is invertible, the function 
    \[
        \gs_C(\ell) 
        = 
        \ell 
        \leq 
        \min_{\B{P} \in \mathcal{P}_{2k_i-1}^{(\ell)}} \{ \mathrm{rank}(\B{C}_i\B{P}) \}
    \]
    \edits{is a rank expansion lower bound by Definition~\ref{def:RankExpLBDef}}. 
    
    Furthermore, $\B A_i = \B B_i = \B{V}_{2k_i-1}^k$ are Vandermonde matrices with unique nodes $ \{x_j\}_{j=1}^{2k_i-1}$. Thus, $(\B{V}_{2k_i-1}^{k_i})^T\B P$, 
    is the transpose of a Vandermonde matrix with a subset of nodes from $\{x_i\}$, which must also consist of unique nodes. Hence, $(\B{V}_{2k_i-1}^{k_i})^T\B P$ is full rank. Using this observation and the fact $a^{\log_{2x-1}(x)}$ is increasing w.r.t. $x \geq 1$ for any fixed $a \geq 1$ \editstwo{as well as recalling $\underline{k}$ from~\eqref{eq:qgp_1}}, one can directly verify
    \edits{\[  
        \sigma(\ell) 
        = 
        \ell^{\log_{2\underline{k}-1}(\underline{k})}
        \leq
        \min_{\B{P} \in \mathcal{P}_{2k_i-1}^{(\ell)}} \{ \mathrm{rank}(\B{A}_i\B{P}) \},
    \]
    is a rank expansion lower bound for $\B{A}_i$ as well as for $\B{B}_i$. }
    
    Like in the proof for Lemma~\ref{lem:MMExp}, one can verify $\gs$ is concave, log-log concave, and satisfies $\gs(0) = 0$, as does $\gs_C$. Also, $d_A \equiv \gs^\dagger(1) = 1$. Thus, one can recursively apply Theorem~\ref{thm:nest_lb} to confirm $\sigma$ is a rank expansion lower bound for $\bigotimes_{i=1}^\tau \B{A}_i$ and $\bigotimes_{i=1}^\tau \B{B}_i$, and $\sigma_C$ is for $\bigotimes_{i=1}^\tau \B{C}_i$. Thus, with $K = \prod_i k_i$, any $\ell \in [K]$, and all $\B{P} \in \mathcal{P}_{K}^{(\ell)}$,
    \begin{align*}
        \sigma(\ell) 
        &= \ell^{\log_{2 \underline k-1}(\underline k)} 
        \leq 
        \min\big\{
        \mathrm{rank}\big([\otimes_{i=1}^\tau \B{A}_i]\B P\big),
        \mathrm{rank}\big([\otimes_{i=1}^\tau\B{B}_i]\B P\big)
        \big\}.
    \end{align*}

    Applying the monotone function $\sigma^{-1}(\ell) = \ell^{\log_{\underline k}(2\underline k-1)}$ to both sides and recalling the choice of the expansion bound function $\mathcal{E}$, we get for all $\B{P} \in \mathcal{P}_{K}^{(\ell)}$,
    \begin{align*}
        \#\mathrm{cols}(\B P) 
        &\leq 
        \mathcal{E}\Big(
        \mathrm{rank}\big([\otimes_{i=1}^\tau \B{A}_i]\B P\big),
        \mathrm{rank}\big([\otimes_{i=1}^\tau\B{B}_i]\B P\big),
        \mathrm{rank}\big([\otimes_{i=1}^\tau\B{C}_i]\B P\big)
        \Big).
    \end{align*}
    Thus, $\mathcal{E}$ is an expansion bound by Definition~\ref{def:ExpansionBnd} (c.f.~\eqref{eq:rnk_exp_to_exp_bnd}).
\end{proof}
We now state a communication lower bound when nesting the same Toom-$k$ bilinear algorithm. Note that these bounds hold for both multidimensional and recursive (1D) convolution. The latter is simply multidimensional convolution plus a recomposition step, where the recomposition is applied via a pre-multiplication by the linear operator $\B Q \in \{0,1\}^{2n-1 \times (2k-1)^\tau}$ (see Section 7.3 in~\cite{ju2020derivation}) to $\B{C}^{\otimes \tau}$. Because the rank of the $\B C^{\otimes \tau}$ matrix, the only matrix in the bilinear algorithm that is affected by the recomposition matrix $\B Q$, is absent in the expansion bound of Lemma~\ref{lem:ConvExp}, the expansion bound still holds for recursive (1D) convolution.

First, we focus on sequential communication, or communication between a fast and slow memory (e.g., cache and memory). In the special 1D case, where Toom-$n$ is recursively applied to an input vector of size $N=n^d$, the following result matches previously established lower bounds~\cite{bilardi2019complexity}. The proof is a direct result from Proposition~\ref{prop:edgar_seq_thm} and Lemma~\ref{lem:ConvExp} (and follows similarly to the proof for Corollary~\ref{cor:MMVLB}). Hence, we skip the proof.

\begin{corollary}
\label{cor:ConvVLB}
   Given two $d$-dimensional tensors where each mode length is $n$ and \edits{fast memory of size $M$,} the sequential communication cost of discrete convolution using a nested Toom-n bilinear algorithm is at least
    \begin{equation*} 
      \max \bigg\{
      \frac{(2n-1)^d}{M^{\log_n(2n-1)}} \cdot M, 2n^d + (2n - 1)^d \bigg \}.
  \end{equation*}
\end{corollary}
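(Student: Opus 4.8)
The plan is to invoke Proposition~\ref{prop:edgar_seq_thm} directly, mirroring the proof of Corollary~\ref{cor:MMVLB}. First I would read off the parameters of the nested bilinear algorithm. Nesting Toom-$n$ across all $d$ dimensions corresponds to taking $\tau = d$ Kronecker factors, each of rank $2n-1$, so the rank of the composite algorithm is $R = (2n-1)^d$. Since each of the two inputs is an order-$d$ tensor with every mode of length $n$, while the convolution output has every mode of length $2n-1$, the sizes are $m_A = m_B = n^d$ and $m_C = (2n-1)^d$. This immediately produces the second term in the maximum, $m_A + m_B + m_C = 2n^d + (2n-1)^d$.

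The heart of the argument is the first term $2RM/\mathcal{E}^{\max}(M)$. I would apply Lemma~\ref{lem:ConvExp} with $\underline{k} = n$ to obtain the expansion bound
\[
    \mathcal{E}(r^{(A)}, r^{(B)}, r^{(C)}) = \min(r^{(A)}, r^{(B)})^{\log_n(2n-1)},
\]
and then compute
\[
    \mathcal{E}^{\max}(M) = \max_{\substack{r^{(A)}, r^{(B)}, r^{(C)} \in \mathbb{N}\\ r^{(A)}+r^{(B)}+r^{(C)}=3M}} \min(r^{(A)}, r^{(B)})^{\log_n(2n-1)}.
\]
Because $\mathcal{E}$ does not depend on $r^{(C)}$ and is increasing in $\min(r^{(A)}, r^{(B)})$, the maximizer balances $r^{(A)}$ and $r^{(B)}$, giving $\mathcal{E}^{\max}(M) = \Theta(M^{\log_n(2n-1)})$. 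Since $\log_n(2n-1) > 1$, this $\mathcal{E}^{\max}$ is increasing and convex, so the hypotheses of Proposition~\ref{prop:edgar_seq_thm} are satisfied. Substituting $R = (2n-1)^d$ and $\mathcal{E}^{\max}(M)$ into $2RM/\mathcal{E}^{\max}(M)$ then yields $\frac{(2n-1)^d}{M^{\log_n(2n-1)}} \cdot M$ up to constant factors, and taking the maximum with the input/output term finishes the bound.

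To cover recursive (1D) convolution in addition to the multidimensional case, I would observe that the recomposition step is a pre-multiplication of $\B{C}^{\otimes d}$ by the Boolean recomposition operator $\B{Q}$, which alters only the decoding matrix. Since the expansion bound of Lemma~\ref{lem:ConvExp} never references $\rank(\B{CP})$ (the $r^{(C)}$ slot is absent), the same $\mathcal{E}$, and hence the same lower bound, remains valid after recomposition.

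The main subtlety, and the step I would be most careful about, is the evaluation of $\mathcal{E}^{\max}(M)$ and pinning down the leading constant. Dropping the $r^{(C)}$-dependence only enlarges the attainable value of $\mathcal{E}$, so the expansion bound stays valid; but one must then verify that balancing $r^{(A)} = r^{(B)}$ (rather than the three-way balance used for Strassen's algorithm) still produces the claimed $M^{\log_n(2n-1)}$ scaling, and that the resulting $\mathcal{E}^{\max}$ is genuinely convex, as required to legitimately invoke Proposition~\ref{prop:edgar_seq_thm}.
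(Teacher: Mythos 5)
Your proposal is correct and follows exactly the route the paper intends: the paper omits this proof, stating only that it follows from Proposition~\ref{prop:edgar_seq_thm} and Lemma~\ref{lem:ConvExp} in the same way as Corollary~\ref{cor:MMVLB}, which is precisely what you carry out. Your extra care with $\mathcal{E}^{\max}(M)$ is warranted but harmless: since $r^{(C)}$ is absent from the expansion bound, the maximizer gives $\mathcal{E}^{\max}(M)=(3M/2)^{\log_n(2n-1)}$ rather than $M^{\log_n(2n-1)}$, yet $2RM/(3M/2)^{\log_n(2n-1)}\geq RM^{1-\log_n(2n-1)}$ for all $n\geq 2$, so the stated bound (which omits the factor of $2$ present in Corollary~\ref{cor:MMVLB}) still holds.
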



Next, we derive parallel communication (i.e., communication between parallel processors) lower bounds for multidimensional convolution, which asymptotically match previously established lower bounds~\cite{de2019complexity}. Again, the result follows from Proposition~\ref{prop:edgar_par_thm} and Lemma~\ref{lem:ConvExp}, hence we skip the proof.

\begin{corollary}
\label{cor:ConvHLB}
  Given two $d$-dimensional tensors where each mode length is $n$ and \edits{$P$ processors,} the parallel communication costs of discrete convolution using a nested Toom-n bilinear algorithm is at least
  \begin{equation*}
      2 \cdot \Big( \frac{n^d}{P^{\log_{2n-1}(n)}} - \frac{n^d}{P} \Big).
  \end{equation*}
\end{corollary}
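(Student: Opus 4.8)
The plan is to instantiate the parallel lower bound of Proposition~\ref{prop:edgar_par_thm} with the expansion bound $\mathcal{E}$ supplied by Lemma~\ref{lem:ConvExp}, exactly as was done for Strassen's algorithm in Corollary~\ref{cor:MMHLB}. First I would record the relevant parameters of the nested Toom-$n$ bilinear algorithm in $d$ dimensions: the rank is $R = (2n-1)^d$, each input tensor has $m_A = m_B = n^d$ entries, and with all factors equal to Toom-$n$ we have $\underline{k} = n$, so Lemma~\ref{lem:ConvExp} gives $\mathcal{E}(r^{(A)},r^{(B)},r^{(C)}) = \min(r^{(A)},r^{(B)})^{\log_n(2n-1)}$. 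Proposition~\ref{prop:edgar_par_thm} then states that any storage-balanced algorithm must communicate at least $r^{(A)} + r^{(B)} + r^{(C)}$ elements for every integer triple satisfying $R/P \leq \mathcal{E}(r^{(A)} + n^d/P, r^{(B)} + n^d/P, r^{(C)} + n^d/P)$; the strongest such bound is obtained by minimizing the sum $r^{(A)} + r^{(B)} + r^{(C)}$ over all feasible triples.

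Next I would carry out that minimization. The crucial observation is that $\mathcal{E}$ does not depend on its third argument, so increasing $r^{(C)}$ never relaxes the constraint and only inflates the objective; hence an optimal triple takes $r^{(C)} = 0$. The remaining constraint $R/P \leq \min(r^{(A)} + n^d/P,\, r^{(B)} + n^d/P)^{\log_n(2n-1)}$ forces both $r^{(A)} + n^d/P$ and $r^{(B)} + n^d/P$ to exceed the threshold $(R/P)^{\log_{2n-1}(n)}$, using that $\log_n(2n-1)$ and $\log_{2n-1}(n)$ are reciprocal exponents. Consequently each of $r^{(A)}, r^{(B)}$ is at least $(R/P)^{\log_{2n-1}(n)} - n^d/P$, and the minimal feasible sum is $2\bigl[(R/P)^{\log_{2n-1}(n)} - n^d/P\bigr]$.

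Finally I would simplify the threshold using the identity $(2n-1)^{\log_{2n-1}(n)} = n$. Substituting $R = (2n-1)^d$ gives $(R/P)^{\log_{2n-1}(n)} = (2n-1)^{d\log_{2n-1}(n)} / P^{\log_{2n-1}(n)} = n^d / P^{\log_{2n-1}(n)}$, so the communication lower bound becomes $2\bigl(n^d/P^{\log_{2n-1}(n)} - n^d/P\bigr)$, as claimed. I would also recall the remark preceding the statement: because the recomposition step in recursive $1$D convolution acts only on $\B{C}^{\otimes \tau}$, and the rank of $\B{C}$ never enters $\mathcal{E}$, the same bound holds verbatim for both multidimensional and recursive $1$D convolution.

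The proof is essentially routine given the machinery, and the only genuinely load-bearing point — the one I would flag as the ``obstacle'' — is conceptual rather than computational: the factor of $2$ (rather than $3$) comes precisely from $\B{C}$ being invertible, so its rank expansion is the identity and $r^{(C)}$ disappears from the expansion bound, letting us set $r^{(C)} = 0$ at no cost. Everything else is the reciprocal-logarithm bookkeeping, identical in spirit to the Strassen computation in Corollary~\ref{cor:MMHLB}.
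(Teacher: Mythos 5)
Your proposal is correct and matches the paper's intended argument: the paper explicitly skips this proof, stating only that it follows from Proposition~\ref{prop:edgar_par_thm} and Lemma~\ref{lem:ConvExp} in the same way as Corollary~\ref{cor:MMHLB}, which is exactly the instantiation and minimization you carry out. Your identification of the factor $2$ as arising from the invertibility of $\B{C}$ (so that $r^{(C)}$ is unconstrained and may be taken to be $0$) is the right reading of why the convolution bound differs from the Strassen one.
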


\subsection{Partially Symmetric Tensor Contractions} \label{sec:SymTSec}

\editstwo{A tensor contraction, which generalizes matrix multiplication, are tensor products summed over a subset of modes (indices) from the tensors. Tensor contractions are also equivalent to a matrix multiplication between the unfolded tensors, where a subset of tensor modes are mapped to either row or column indices of a matrix~\cite{golub2013matrix,solomonik2017communication}.}
However, special fast bilinear algorithms become possible when the tensors have symmetry~\cite{solomonik2015contracting}.
Symmetric tensors are \editstwo{invariant} under all permutations of their indices, e.g., $t_{ijk}=t_{jik}=\ldots$.
For example, a product of a symmetric matrix $\B A$ and a vector $\B b$ can be computed from $n(n+1)/2$ scalar products, mostly of the form $a_{ij}(b_i+b_j)$, yielding a bilinear algorithm with rank $R = n(n+1)/2$.
The main application of such symmetry preserving algorithms is to lower the cost of contraction algorithms for partially symmetric tensors (tensors that are equivalent only under permutations of a subset of their indices).
For example, given a partially symmetric tensor $\B{\mathcal{T}}$ with symmetry $t_{ijab}=t_{jiab}$, the contraction $u_{iac} = \sum_{jb} t_{ijab} v_{jbc}$ can be performed with \edits{2 times} fewer operations to leading order than in the nonsymmetric case, by nesting a symmetry preserving algorithm for a symmetric vector product (corresponding to contraction indices $i$ and $j$) with an algorithm for matrix multiplication (corresponding to contraction indices $a,b,c$).

Tensors with partial symmetry are prevalent in quantum chemistry methods, a core application domain of higher-order tensor contractions~\cite{hirata2003tensor}.
By analysis of these bilinear algorithms' rank expansions, communication lower bounds have been established showing that such symmetry preserving algorithms  require asymptotically more communication for some contractions  of symmetric tensors~\cite{solomonik2017communication}.
Our results allow us to derive the first communication lower bounds for nested symmetry preserving algorithms, yielding lower bounds on communication costs for symmetry preserving contraction algorithms on partially symmetric tensors.

The lemma below follows from the analysis in the proof of Lemma 6.3 in~\cite{solomonik2017communication}.
\begin{lemma}
\label{lem:sym_rank_bounds}
For the bilinear algorithm $(\B A, \B B, \B C)$ corresponding to symmetry preserving contraction of symmetric tensors of order $s+v$ and $v+t$ over $v$ indices, we can lower bound the rank expansion of each encoding matrix as follows:
$\sigma_A(k)=k^{(s+v)/(s+t+v)}/{ s+t+{v\choose t}}$,
$\sigma_B(k)=k^{(v+t)/(s+t+v)}/{ s+t+{v\choose s}}$, and
$\sigma_C(k)=k^{(s+t)/(s+t+v)}/{ s+t+{v\choose v}}$.
\end{lemma}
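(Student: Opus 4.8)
The plan is to extract the three rank expansion lower bounds directly from the combinatorial structure of the symmetry preserving contraction algorithm of \cite{solomonik2015contracting}, whose rank computations are essentially carried out in the proof of Lemma~6.3 of \cite{solomonik2017communication}. First I would recall the explicit form of the bilinear algorithm. Write $w = s+t+v$ for the total number of (collapsed) index groups. The bilinear products of the symmetry preserving algorithm are indexed by the size-$w$ multi-indices over $[n]$, and for an operand that is a symmetric tensor of order $\omega$ (so $\omega = s+v$ for $\B A$, $\omega = v+t$ for $\B B$, and $\omega = s+t$ for $\B C$), the column indexed by a combined multi-index $\chi$ is the sum of the $\binom{w}{\omega}$ standard basis vectors $e_\mu$ ranging over the order-$\omega$ sub-multi-indices $\mu$ of $\chi$. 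This already reveals the binomial normalizations, since $\binom{w}{s+v}=\binom{s+t+v}{t}$, $\binom{w}{v+t}=\binom{s+t+v}{s}$, and $\binom{w}{s+t}=\binom{s+t+v}{v}$.

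Next I would bound the rank expansion of one matrix, say $\B A$; the arguments for $\B B$ and $\B C$ are identical after replacing $\omega=s+v$ by $v+t$ and $s+t$. Fix any $\B P\in\mathcal{P}_{R}^{(k)}$ and let $p$ be the number of distinct base indices in $[n]$ appearing among the combined multi-indices of the selected columns. There are only $\binom{p+w-1}{w}\le p^{w}/w!$ combined multi-indices supported on $p$ base indices, so $k\le p^{w}/w!$, whence $p\gtrsim (k\,w!)^{1/w}$. On the other hand, the selected columns live in (and, after passing to an independent subfamily, span a space of dimension comparable to) the symmetric order-$\omega$ coordinate space on those $p$ base indices, whose dimension is $\binom{p+\omega-1}{\omega}\approx p^{\omega}/\omega!$. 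Combining the two estimates gives $\mathrm{rank}(\B{AP})\gtrsim p^{\omega}/\omega! \gtrsim k^{\omega/w}$ up to the combinatorial constant, which I would then pin down to obtain exactly $\sigma_A(k)=k^{(s+v)/(s+t+v)}/\binom{s+t+v}{t}$.

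The exponent $\omega/w$ is the easy part, coming purely from the dimension count above; the delicate steps are the rank lower bound and the exact binomial constant. The rank lower bound requires showing that the adversarially chosen $k$ columns cannot collapse to a much smaller rank than $p^{\omega}/\omega!$, i.e. that an independent subfamily of the right size survives; this is the analogue, for the symmetric (rather than Kronecker) structure, of the nested independence argument of Proposition~\ref{prop:tnsprodB}, and it is exactly where the constant $\binom{s+t+v}{t}$ is fixed. Rather than reprove this, the plan is to invoke the computation in the proof of Lemma~6.3 of \cite{solomonik2017communication}, which establishes the rank of column subsets of each symmetry preserving encoding and decoding matrix in precisely this form; the three claimed bounds $\sigma_A$, $\sigma_B$, $\sigma_C$ are then read off by matching $\omega$ to each operand. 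I expect the main obstacle to be verifying that the (possibly loose but clean) constant produced by that argument is exactly the stated reciprocal binomial for all three matrices simultaneously, and that the bounds remain valid across the full range $k\in[R]$ rather than merely asymptotically.
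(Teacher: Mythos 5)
Your ultimate strategy coincides with the paper's: the paper offers no self-contained argument for this lemma and simply states that it ``follows from the analysis in the proof of Lemma 6.3 in [Solomonik et al.]'', which is exactly the deferral you end with. Your identification of the structure of the encoding matrices (columns indexed by size-$(s+t+v)$ multi-indices, each equal to a sum of $\binom{s+t+v}{t}$, $\binom{s+t+v}{s}$, or $\binom{s+t+v}{v}$ standard basis vectors) and of the exponent $\omega/w$ is also consistent with that source.

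However, the heuristic chain you sketch in the middle contains a step that is false as stated: from ``$p$ distinct base indices appear among the selected columns'' you cannot conclude $\mathrm{rank}(\B{AP})\gtrsim p^{\omega}/\omega!$. Take $k$ columns whose size-$w$ multi-indices have pairwise disjoint base-index supports; then $p=kw$ while $\mathrm{rank}(\B{AP})\leq k\ll p^{\omega}/\omega!$ for large $k$. The parameter $p$ is the wrong intermediate quantity. The argument that actually yields the stated constants is: (1) each column of $\B{A}$ has at most $c=\binom{s+t+v}{t}$ nonzeros, and for any matrix $\B M$ whose columns each have at most $c$ nonzeros, $\mathrm{rank}(\B M)\geq N/c$ where $N$ is the number of nonzero rows (any $\mathrm{rank}(\B M)$ spanning columns already touch every nonzero row); (2) the number of nonzero rows of $\B{AP}$ is the size of the order-$\omega$ lower shadow of the $k$ selected multi-indices, which is bounded below by (roughly) $k^{\omega/w}$ via a shadow/H\"older-type inequality rather than via the count of distinct base indices. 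This makes the binomial constant the column sparsity entering through step (1), not a quantity ``fixed'' inside an independence argument analogous to Proposition~\ref{prop:tnsprodB}. Since you explicitly plan to import the cited computation rather than rely on your sketch, your proof would still land correctly, but the sketch as written would not survive being made rigorous.
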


We derive lower bounds for the nesting of multiple symmetry preserving algorithms, as well as nesting of a symmetry preserving algorithm with a nonsymmetric contraction algorithm.
In the former case, we consider nesting of two arbitrary symmetry preserving algorithms.
\begin{lemma}
\label{lem:sym_nest_exp}
For the bilinear algorithm $(\B A \otimes \B U , \B B \otimes \B V, \B C \otimes \B W)$, where $(\B A, \B B, \B C)$ is a symmetry preserving contraction of symmetric tensors of order $s+v$ and $v+t$ over $v$ indices and all dimensions equal to $n_1$, while $(\B U, \B V, \B W)$ is a symmetry preserving contraction of symmetric tensors of order $s'+v'$ and $v'+t'$ over $v'$ indices with all dimension equal to $n'_2$,
we can lower bound the rank expansion of $\B A\otimes \B U$
by
\begin{align*}
\sigma_{A\otimes U}(k)&\geq\min_{\substack{
k_1\in[1,n_1^{s+t+v}],k_2\in[1,n_2^{s'+t'+v'}], \\  k_1k_2 \geq k
}}\frac{1}{{ s+t+v\choose t}{ s'+t'+v'\choose t'  }}
\cdot k_1^{\frac{s+v}{s+t+v}} \cdot k_2^{\frac{s'+v'}{s'+t'+v'}} \\
& \geq \frac{1}{{ s+t+v\choose t}{ s'+t'+v'\choose t'  }}\cdot k^{\min\big(\frac{s+v}{s+t+v},~\frac{s'+v'}{s'+t'+v'}\big)},
\end{align*}
as well as similar bounds for $\B B \otimes \B V$ and $\bC \otimes \B W$.
\end{lemma}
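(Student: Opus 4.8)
The plan is to reduce the two-factor nesting to a single application of Theorem~\ref{thm:lb}, after first extracting monomial rank expansion lower bounds for each factor from Lemma~\ref{lem:sym_rank_bounds}. Concretely, I would apply Lemma~\ref{lem:sym_rank_bounds} to $(\B A,\B B,\B C)$ to obtain $\sigma_A(k)=k^{(s+v)/(s+t+v)}/\binom{s+t+v}{t}$ as a rank expansion lower bound for $\B A$, and apply it again to $(\B U,\B V,\B W)$ with the primed parameters to obtain $\sigma_U(k)=k^{(s'+v')/(s'+t'+v')}/\binom{s'+t'+v'}{t'}$ for $\B U$. Both are of the monomial form $ax^{p}$ with $a>0$ and exponent $p\leq 1$ (since $t,t'\geq 0$), so by Proposition~\ref{prop:monom_log} they are concave and log-log concave, and they clearly vanish at $0$. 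This verifies exactly the hypotheses required by Theorem~\ref{thm:lb}.

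Next I would invoke Theorem~\ref{thm:lb} on $\B A\otimes\B U$ with $d_A=\sigma_A^{\dagger}(1)$ and $d_U=\sigma_U^{\dagger}(1)$, which immediately yields that
\[
  \min_{\substack{k_1\geq d_A,\ k_2\geq d_U\\ k_1k_2\geq k}}\sigma_A(k_1)\,\sigma_U(k_2)
\]
is a valid rank expansion lower bound for $\B A\otimes\B U$; substituting the monomials gives the first displayed bound, where the domain caps $k_1\leq n_1^{s+t+v}$ and $k_2\leq n_2^{s'+t'+v'}$ record the column counts of $\B A$ and $\B U$. A key bookkeeping point I would note here is that $d_A=\binom{s+t+v}{t}^{(s+t+v)/(s+v)}\geq 1$ and likewise $d_U\geq 1$, since each binomial coefficient is at least $1$; hence every feasible $(k_1,k_2)$ in the minimization satisfies $k_1,k_2\geq 1$.

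For the second inequality I would argue elementarily. Writing $p=(s+v)/(s+t+v)$, $p'=(s'+v')/(s'+t'+v')$, and $q=\min(p,p')$, at any feasible point we have $k_1,k_2\geq 1$, so $k_1^{p}\geq k_1^{q}$ and $k_2^{p'}\geq k_2^{q}$ (raising a number at least $1$ to a larger exponent only increases it). Therefore $k_1^{p}k_2^{p'}\geq (k_1k_2)^{q}\geq k^{q}$ using $k_1k_2\geq k$, and multiplying through by $1/\big(\binom{s+t+v}{t}\binom{s'+t'+v'}{t'}\big)$ collapses the minimum to the claimed closed form $k^{q}/\big(\binom{s+t+v}{t}\binom{s'+t'+v'}{t'}\big)$. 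This is exactly the monomial specialization recorded in Theorem~\ref{thm:nest_lb} (equation~\eqref{eq:lb_for_poly}), so it could alternatively be quoted from there. The bounds for $\B B\otimes\B V$ and $\B C\otimes\B W$ follow verbatim, using the respective $\sigma_B,\sigma_C$ from Lemma~\ref{lem:sym_rank_bounds} in place of $\sigma_A,\sigma_U$.

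The step I expect to be the main obstacle is matching the feasible region of Theorem~\ref{thm:lb}, namely $k_1\geq d_A,\ k_2\geq d_U$ with no upper cap, against the stated region $k_1\in[1,n_1^{s+t+v}],\ k_2\in[1,n_2^{s'+t'+v'}]$: the theorem's bound is obtained after relaxing the grid expansion to $\R_+$ (Lemma~\ref{lem:gridexp_ab} drops the $\min\{n_A,\cdot\}$ and $\min\{n_B,\cdot\}$ caps), so a priori the optimal $k_1,k_2$ may exceed the column counts. Fortunately this subtlety does not affect the final closed form: since $d_A,d_U\geq 1$, the monomial inequality above applies directly on Theorem~\ref{thm:lb}'s own feasible set, and the resulting $k^{q}/(\cdots)$ bound holds regardless of whether one restricts to the box, so the cap notation in the first line should be read as the natural (and tighter) domain available from the grid sitting inside $[n_1^{s+t+v}]\times[n_2^{s'+t'+v'}]$.
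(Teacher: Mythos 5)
Your proof is correct and follows essentially the same route as the paper, whose entire proof is to cite Theorem~\ref{thm:nest_lb} (via Lemma~\ref{lem:sym_rank_bounds}); you additionally verify the concavity and log-log-concavity hypotheses, check that $d_A,d_U\geq 1$, and carry out the monomial collapse to $k^{\min(p,p')}$ explicitly, all of which the paper leaves implicit. Your closing remark about the mismatch between Theorem~\ref{thm:lb}'s uncapped feasible region and the boxed region displayed in the first inequality is well taken --- the paper's own proof only gestures at this (``which weakens the bound up to a constant'') --- and your observation that the final closed form is unaffected because the elementary estimate already applies on the uncapped region with $k_1,k_2\geq 1$ is the right resolution.
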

\begin{proof}
The theorem follows by application of Theorem~\ref{thm:nest_lb} 
on the rank expansion lower bounds given by Lemma~\ref{lem:sym_rank_bounds}. Note that in the first inequality, we restrict $k_A$ and $k_B$ to $[1,n_1^{s+t+v}]$ and $[1,n_2^{s'+t'+v'}]$, respectively, which weakens the bound up to a constant.
\end{proof}
The rank expansion lower bound $\sigma_{A\otimes U}$ in Lemma~\ref{lem:sym_nest_exp} generalizes to nestings of three or more symmetry preserving bilinear algorithms.
This rank expansion lower bound implies parallel and sequential communication for nested bilinear algorithms follow immediately from those of the nested parts.
These communication lower bounds ascertain that standard approaches for tiling nested loops can asymptotically minimize communication done in the execution of the bilinear algorithm.


We also consider nestings of a symmetry preserving algorithm with a standard (nonsymmetric) tensor contraction.
Nonsymmetric tensor contractions are equivalent to matrix-matrix products with an appropriate choice of matrix dimensions.
Like for symmetric tensor contractions, we assume tensor dimensions are equal size and classify contractions by the tuple $(s,t,v)$.
When one of $s,t,v$ is zero, the number of products needed to compute the contraction matches the size of the largest tensor ($n^{s+t+v}=n^{\max(s+t,s+v,v+t)}$), and the corresponding bilinear encoding matrix is (some permutation of) the identity matrix, with a rank expansion of $\tilde{\sigma}(k)=k$.
When this is the case for nonsymmetric contractions, tight communication lower bounds thereof can be derived just by considering the rank expansion of a single matrix.
For the symmetry preserving algorithm, for any choice of $s,t,v$, a tight communication lower bound can be derived from the rank expansion of just one of the matrices.
For further details on these lower bounds and optimal algorithms, see~\cite{solomonik2017communication} (Section 7 discusses upper bounds).
Consequently, we can use our general lower bounds on the rank expansion of a Kronecker product of matrices to derive communication lower bounds that we expect are tight by restricting the type of nonsymmetric contraction performed.
\begin{lemma}
\label{lem:sym_ns_nest_exp}
\editstwo{Consider} the bilinear algorithm $(\B A \otimes \B U , \B B \otimes \B V, \B C \otimes \B W)$, where $(\B A, \B B, \B C)$ is a symmetry preserving contraction of symmetric tensors of order $s+v$ and $v+t$ over $v$ indices and all dimensions equal to $n_1$, while $(\B U, \B V, \B W)$ is a contraction of nonsymmetric tensors of order $s'+v'$ and $v'+t'$ over $v'$ indices with all dimension equal to $n'_2$.
If $t'=0$, we can lower bound the rank expansion of $\B A\otimes \B U$
by
\begin{align*}
\sigma_{A\otimes U}(k)&=\sigma_{A}(k)=\frac{1}{{ s+t+v\choose t}}
\cdot k^{(s+v)/(s+t+v)}.
\end{align*}
\editstwo{Similar bounds hold} for $\B B \otimes \B V$ (if instead of $t'=0$, we have $s'=0$) and $\bC \otimes \B W$ (if instead of $t'=0$, we have $v'=0$).
\end{lemma}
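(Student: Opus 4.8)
The plan is to realize $\B A\otimes\B U$ as a two-factor nested bilinear algorithm and feed the polynomial special case~\eqref{eq:lb_for_poly} of Theorem~\ref{thm:nest_lb} with the rank expansion lower bounds of the two factors. For the symmetry preserving factor, Lemma~\ref{lem:sym_rank_bounds} already hands me $\sigma_A(k)=k^{(s+v)/(s+t+v)}/\binom{s+t+v}{t}$, which is of the monomial form $(k/k_1)^{q_1}$ with exponent $q_1=(s+v)/(s+t+v)\in(0,1]$ (since $t\ge 0$) and constant $k_1=\binom{s+t+v}{t}^{(s+t+v)/(s+v)}\ge 1$; as already noted in the proof of Lemma~\ref{lem:MMExp}, such a monomial is concave, log-log concave, and vanishes at the origin, so it satisfies all hypotheses of Theorem~\ref{thm:nest_lb}. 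The first real step is therefore to pin down the rank expansion of the nonsymmetric factor $\B U$.

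When $t'=0$, the nonsymmetric contraction pairs a first input tensor of order $s'+v'$ with a second input tensor of order $v'+t'=v'$, contracting over all $v'$ of the latter's indices, so the number of bilinear products equals $(n'_2)^{s'+v'}$, exactly the number of entries of the first input. Each product reads a single distinct entry of that tensor, so $\B U$ is a permutation of the identity (this is precisely the observation recorded in the discussion preceding the lemma). Consequently its exact rank expansion is $\tilde\sigma_U(k)=k$, and $\sigma_U(k)=k=(k/1)^1$ is a valid monomial rank expansion lower bound with $q_2=1$ and $k_2=1$, which is again concave, log-log concave, and zero at $0$.

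Then I would simply combine the two monomials. Applying the polynomial instance~\eqref{eq:lb_for_poly} of Theorem~\ref{thm:nest_lb} to $\sigma_A$ and $\sigma_U$ yields
\begin{equation*}
  \sigma_{A\otimes U}(k)=\left(\frac{k}{k_1 k_2}\right)^{\min(q_1,q_2)}=\left(\frac{k}{k_1}\right)^{q_1}=\sigma_A(k),
\end{equation*}
where I used $q_1\le 1=q_2$ so that $\min(q_1,q_2)=q_1$, and $k_2=1$ so that $k_1k_2=k_1$. This is exactly the asserted formula. The companion cases are handled identically: when $s'=0$ (resp. $v'=0$) the product count $(n'_2)^{s'+t'+v'}$ matches the size of the second input tensor of order $v'+t'$ (resp. the output tensor of order $s'+t'$), forcing $\B V$ (resp. $\B W$) to be a permutation of the identity, and the same substitution collapses the nested bound to $\sigma_B$ (resp. $\sigma_C$) from Lemma~\ref{lem:sym_rank_bounds}.

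The only genuinely non-routine step is the structural claim of the second paragraph: that the vanishing of one of $s',t',v'$ forces the corresponding encoding or decoding matrix to be a permutation of the identity. This is what simultaneously pushes that factor's exponent to the maximal value $1$ and its constant to $1$, so that neither quantity perturbs the minimum exponent or the product of constants in~\eqref{eq:lb_for_poly}. Once this fact is in hand, the remainder is a direct substitution into the polynomial formula of Theorem~\ref{thm:nest_lb}.
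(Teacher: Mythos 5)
Your proposal is correct and follows exactly the route the paper intends: the paper leaves this lemma without an explicit proof, but the preceding discussion supplies the key fact (when one of $s',t',v'$ vanishes the corresponding nonsymmetric encoding matrix is a permutation of the identity, with rank expansion $\tilde{\sigma}(k)=k$), and the rest is the same application of the polynomial case of Theorem~\ref{thm:nest_lb} to the bounds of Lemma~\ref{lem:sym_rank_bounds} that the paper uses for Lemma~\ref{lem:sym_nest_exp}. Your explicit verification that $q_2=1$ and $k_2=1$ leave the minimum exponent and the constant unchanged is precisely the computation the paper omits.
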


The expansion bound in Lemma~\ref{lem:sym_ns_nest_exp} implies that we can obtain a lower bound on parallel and sequential communication cost on nested algorithms composed of symmetry preserving algorithms and a nonsymmetric tensor contraction where one of $s',t',v'$ is zero.
Given a rank expansion lower bound for one of the inputs or the output, e.g., $\sigma_{A\otimes U}$, we obtain a sequential communication lower bound of the form,
\[\Omega(n^{s+t+v+s'+t'+v'}M/\sigma_{A\otimes U}^{-1}(M)),\]
and a parallel communication lower bound of
\[\Omega(\sigma_{A\otimes U}(n^{s+t+v+s'+t'+v'}/p) - n^{s+v+s'+v'}/p). \]
For nested symmetry preserving algorithms, the greatest of the three communication lower bounds (based on $\sigma_{A\otimes U}$, $\sigma_{B\otimes V}$, or $\sigma_{C\otimes W}$) would be asymptotically attainable for sufficiently small $M,p$ with standard approaches for multidimensional loop tiling~\cite{christ2013communication}.
While the bound Lemma~\ref{lem:sym_ns_nest_exp} should also be asymptotically attainable for many contractions, but not all, as preclusion of $t'>0$ implies we do not provide bounds for communication associated of all inputs/outputs in a particular contraction.
The new communication lower bounds imply that for some partially symmetric tensor contractions, the use of the symmetry preserving algorithm may require asymptotically more communication than if the symmetry was ignored.
However, these contractions involve high order tensors.
The example below is among the simplest possible cases.
\begin{example} \label{ex:example_2}
Consider the contraction,
\[c_{im} = \sum_{j,k,l} a_{ijkl} b_{jklm},\]
where $a_{ijkl}$ is symmetric under any permutation of $(i,j,k,l)$ and $b_{jklm}$ is symmetric under any permutation of $(j,k,l)$.
Here, a symmetry preserving algorithm with $s=1, v=3, t=0$ may be nested with a nonsymmetric contraction with $s'=0, v'=0, t'=1$.
By Lemma~\ref{lem:sym_ns_nest_exp}, we obtain the following rank expansion lower bound,
\[\sigma_{B\otimes V}(k)=\frac{1}{{s+t+v \choose s}}k^{(t+v)/(s+t+v)}= (1/4)k^{3/4}.\]
Assume the dimension of each mode of the tensor (range of each index) is $n_1=n_2=n$.
Overall, this algorithm would then require \edits{4 times} fewer products ($n^5/24$ to leading order) than if only considering symmetry in $(j,k,l)$ and performing classical matrix multiplication with dimensions $n \times {n+2 \choose 3}\times n$ (requiring $n^5/6$ products to leading order).
However, by our new lower bounds, it would require $\Omega(n^5/M^{1/3})$ sequential communication.
On the other hand, for sufficiently small $M$,
the matrix-multiplication-based approach requires only $O(n^5/M^{1/2})$ sequential communication.
\end{example}

\section{Conclusion} \label{sec:conclusion}
We develop a new framework to ascertain communication lower bounds for any bilinear algorithm via the rank expansion of a matrix, or the minimum rank of any submatrix of fixed size. Unlike previous works which assume a particular computational DAG, our lower bounds consider a larger space of permissible computational DAGs. Our new communication lower bounds for recursive convolution match previous bounds~\cite{bilardi2019complexity,de2019complexity}, suggesting that any algebraic reorganization of convolution cannot reduce communication costs in this setting.

We note two limitations in our analysis that prevents us from obtaining tight lower bounds for standard nested matrix multiplication as well as some partially symmetric tensor contractions. \editstwo{First, as described in Example~\ref{ex:eqgp_1}, we separately bound the ranks of the matrices from the nested bilinear algorithm to derive expansion bounds. However, this can derive expansion bounds which are much larger than the expansion function (which bounds the rank simultaneously), especially when the matrices have an extremely low-rank structure. Second, there seems to be a gap between converting rank expansion bounds to communication complexity. As discussed in Remark~\ref{rem:rqgp_1}, the proof of the communication lower bound via rank expansions assumes if there is a low-rank structure in a matrix of a bilinear algorithm, that low-rank structure persists in any subset of columns, which is not true for Strassen's algorithm. Therefore, to get a tight lower bound, we hypothesize one needs to separately handle low-rank and nearly full-rank subsets of columns in the matrices.}

\section*{Acknowledgements}
The authors would like to thank the anonymous reviewers for the comments that significantly improved the presentation and clarity of this work.

CJ is supported by the U.S. Department of Energy, Office of Science, Office of Advanced Scientific Computing Research, Department of Energy Computational Science Graduate Fellowship under Award Number DE-SC0022158. 

\textbf{Disclaimer}. This report was prepared as an account of work sponsored by an agency of the United States Government. Neither the United States Government nor any agency thereof, nor any of their employees, makes any warranty, express or implied,or assumes any legal liability or responsibility for the accuracy, completeness, or usefulness of any information, apparatus, product, or process disclosed, or represents that its use would not infringe privately owned rights. Reference herein to any specific commercial product, process, or service by trade name, trademark, manufacturer, or otherwise does not necessarily constitute or imply its endorsement, recommendation, or favoring by the United States Government or any agency thereof. The views and opinions of authors expressed herein do not necessarily state or reflect those of the United States Government or any agency thereof.

\formatmode{
\bibliographystyle{unsrt}
\bibliography{refs}
}{
\bibliographystyle{spmpsci}      
\bibliography{refs}   

\begin{thebibliography}{10}
\providecommand{\url}[1]{{#1}}
\providecommand{\urlprefix}{URL }
\expandafter\ifx\csname urlstyle\endcsname\relax
  \providecommand{\doi}[1]{DOI~\discretionary{}{}{}#1}\else
  \providecommand{\doi}{DOI~\discretionary{}{}{}\begingroup
  \urlstyle{rm}\Url}\fi

\bibitem{agarwal1977new}
Agarwal, R., Cooley, J.: New algorithms for digital convolution.
\newblock IEEE Transactions on Acoustics, Speech, and Signal Processing
  \textbf{25}(5), 392--410 (1977)

\bibitem{agrawal2019disciplined}
Agrawal, A., Diamond, S., Boyd, S.: Disciplined geometric programming.
\newblock Optimization Letters \textbf{13}(5), 961--976 (2019)

\bibitem{ballard2013communication}
Ballard, G., Buluc, A., Demmel, J., Grigori, L., Lipshitz, B., Schwartz, O.,
  Toledo, S.: Communication optimal parallel multiplication of sparse random
  matrices.
\newblock In: Proceedings of the twenty-fifth annual ACM symposium on
  Parallelism in algorithms and architectures, pp. 222--231 (2013)

\bibitem{ballard2012communication}
Ballard, G., Demmel, J., Holtz, O., Lipshitz, B., Schwartz, O.:
  Communication-optimal parallel algorithm for {Strassen's} matrix
  multiplication.
\newblock In: Proceedings of the twenty-fourth annual ACM symposium on
  Parallelism in algorithms and architectures, pp. 193--204 (2012)

\bibitem{ballard2011minimizing}
Ballard, G., Demmel, J., Holtz, O., Schwartz, O.: Minimizing communication in
  numerical linear algebra.
\newblock SIAM Journal on Matrix Analysis and Applications \textbf{32}(3),
  866--901 (2011)

\bibitem{ballard2013graph}
Ballard, G., Demmel, J., Holtz, O., Schwartz, O.: Graph expansion and
  communication costs of fast matrix multiplication.
\newblock Journal of the ACM (JACM) \textbf{59}(6), 1--23 (2013)

\bibitem{ballard2016hypergraph}
Ballard, G., Druinsky, A., Knight, N., Schwartz, O.: Hypergraph partitioning
  for sparse matrix-matrix multiplication.
\newblock ACM Transactions on Parallel Computing (TOPC) \textbf{3}(3), 1--34
  (2016)

\bibitem{ballard2018communication}
Ballard, G., Knight, N., Rouse, K.: Communication lower bounds for matricized
  tensor times {Khatri-Rao} product.
\newblock In: 2018 IEEE International Parallel and Distributed Processing
  Symposium (IPDPS), pp. 557--567. IEEE (2018)

\bibitem{bilardi2017complexity}
Bilardi, G., De~Stefani, L.: The {I/O} complexity of {Strassen’s} matrix
  multiplication with recomputation.
\newblock In: Workshop on Algorithms and Data Structures, pp. 181--192.
  Springer (2017)

\bibitem{bilardi2019complexity}
Bilardi, G., De~Stefani, L.: The {I/O} complexity of {Toom-Cook} integer
  multiplication.
\newblock In: Proceedings of the Thirtieth Annual ACM-SIAM Symposium on
  Discrete Algorithms, pp. 2034--2052. SIAM (2019)

\bibitem{bilardi1995horizons}
Bilardi, G., Preparata, F.P.: Horizons of parallel computation.
\newblock Journal of Parallel and Distributed Computing \textbf{27}(2),
  172--182 (1995)

\bibitem{bilardi1999processor}
Bilardi, G., Preparata, F.P.: Processor—time tradeoffs under bounded-speed
  message propagation: Part {II}, lower bounds.
\newblock Theory of Computing Systems \textbf{32}(5), 531--559 (1999)

\bibitem{brascamp1976best}
Brascamp, H.J., Lieb, E.H.: Best constants in {Young's} inequality, its
  converse, and its generalization to more than three functions.
\newblock Advances in Mathematics \textbf{20}(2), 151--173 (1976)

\bibitem{christ2013communication}
Christ, M., Demmel, J., Knight, N., Scanlon, T., Yelick, K.: Communication
  lower bounds and optimal algorithms for programs that reference arrays--part
  1.
\newblock arXiv:1308.0068  (2013)

\bibitem{de2019complexity}
De~Stefani, L.: On the {I/O} complexity of hybrid algorithms for integer
  multiplication.
\newblock arXiv:1912.08045  (2020)

\bibitem{demmel2018communication}
Demmel, J., Dinh, G.: Communication-optimal convolutional neural nets.
\newblock arXiv:1802.06905  (2018)

\bibitem{dinh2020communication}
Dinh, G., Demmel, J.: Communication-optimal tilings for projective nested loops
  with arbitrary bounds.
\newblock In: Proceedings of the 32nd ACM Symposium on Parallelism in
  Algorithms and Architectures, pp. 523--525 (2020)

\bibitem{golub2013matrix}
Golub, G.H., Van~Loan, C.F.: Matrix Computations.
\newblock The Johns Hopkins University Press (2013)

\bibitem{halmos2017finite}
Halmos, P.R.: Finite-dimensional vector spaces.
\newblock Springer (1958)

\bibitem{hirata2003tensor}
Hirata, S.: {Tensor Contraction Engine}: Abstraction and automated parallel
  implementation of configuration-interaction, coupled-cluster, and many-body
  perturbation theories.
\newblock The Journal of Physical Chemistry A \textbf{107}(46), 9887--9897
  (2003)

\bibitem{holder1889uber}
H\"{o}lder, O.: \"{U}ber einen mittelwertssatz.
\newblock Nachr. Acad. Wiss. G\"{o}ttingen Math.-Phys. K pp. 38--47 (1889)

\bibitem{jia1981complexity}
Hong, J.W., Kung, H.T.: {I/O} complexity: The red-blue pebble game.
\newblock In: Proceedings of the thirteenth annual ACM symposium on Theory of
  computing, pp. 326--333 (1981)

\bibitem{irony2004communication}
Irony, D., Toledo, S., Tiskin, A.: Communication lower bounds for
  distributed-memory matrix multiplication.
\newblock Journal of Parallel and Distributed Computing \textbf{64}(9),
  1017--1026 (2004)

\bibitem{jain2020spectral}
Jain, S., Zaharia, M.: Spectral lower bounds on the {I/O} complexity of
  computation graphs.
\newblock In: Proceedings of the 32nd ACM Symposium on Parallelism in
  Algorithms and Architectures, pp. 329--338 (2020)

\bibitem{ju2020derivation}
Ju, C., Solomonik, E.: Derivation and analysis of fast bilinear algorithms for
  convolution.
\newblock SIAM Review \textbf{62}(4), 743--777 (2020)

\bibitem{kogge2013exascale}
Kogge, P., Shalf, J.: Exascale computing trends: Adjusting to the ``new
  normal'' for computer architecture.
\newblock Computing in Science \& Engineering \textbf{15}(6), 16--26 (2013)

\bibitem{kruskal1977three}
Kruskal, J.B.: Three-way arrays: rank and uniqueness of trilinear
  decompositions, with application to arithmetic complexity and statistics.
\newblock Linear algebra and its applications \textbf{18}(2), 95--138 (1977)

\bibitem{lavin2016fast}
Lavin, A., Gray, S.: Fast algorithms for convolutional neural networks.
\newblock In: Proceedings of the IEEE Conference on Computer Vision and Pattern
  Recognition, pp. 4013--4021 (2016)

\bibitem{loomis1949inequality}
Loomis, L.H., Whitney, H.: An inequality related to the isoperimetric
  inequality.
\newblock Bulletin of the American Mathematical Society \textbf{55}(10),
  961--962 (1949)

\bibitem{nissim2019revisiting}
Nissim, R., Schwartz, O.: Revisiting the {I/O}-complexity of fast matrix
  multiplication with recomputations.
\newblock In: 2019 IEEE International Parallel and Distributed Processing
  Symposium (IPDPS), pp. 482--490. IEEE (2019)

\bibitem{pan1984can}
Pan, V.: How can we speed up matrix multiplication?
\newblock SIAM review \textbf{26}(3), 393--415 (1984)

\bibitem{pitas1987multidimensional}
Pitas, I., Strintzis, M.: Multidimensional cyclic convolution algorithms with
  minimal multiplicative complexity.
\newblock IEEE transactions on acoustics, speech, and signal processing
  \textbf{35}(3), 384--390 (1987)

\bibitem{selesnick1994extending}
Selesnick, I.W., Burrus, C.S.: Extending {Winograd}'s small convolution
  algorithm to longer lengths.
\newblock In: Proceedings of IEEE International Symposium on Circuits and
  Systems-ISCAS'94, vol.~2, pp. 449--452. IEEE (1994)

\bibitem{solomonik2015contracting}
Solomonik, E., Demmel, J.: Fast bilinear algorithms for symmetric tensor
  contractions.
\newblock Computational Methods in Applied Mathematics \textbf{21}(1), 211--231
  (2021)

\bibitem{solomonik2017communication}
Solomonik, E., Demmel, J., Hoefler, T.: Communication lower bounds of bilinear
  algorithms for symmetric tensor contractions.
\newblock SIAM Journal on Scientific Computing \textbf{43}(5), A3328--A3356
  (2021)

\bibitem{strassen1969gaussian}
Strassen, V.: Gaussian elimination is not optimal.
\newblock Numerische mathematik \textbf{13}(4), 354--356 (1969)

\bibitem{yao1979some}
Yao, A.C.C.: Some complexity questions related to distributive computing.
\newblock In: Proceedings of the eleventh annual ACM symposium on Theory of
  computing, pp. 209--213 (1979)

\end{thebibliography}
}

\appendix
\section{Appendix: Improved Lower Bounds by Limiting the Grid 
Expansion}
\label{appendix:improved_bnd}

\normalsize

Let $\bA \in \C^{m_A \times n_A}$, $\bB \in \C^{m_B \times n_B}$, and $\bC = \bA \otimes \bB$.
As mentioned after the main theorems in Section \ref{sec:main_results}, bounds $\gs_C$ in those results require defining $\gs_A$ and $\gs_B$ beyond $n_A$ and $n_B$.
This extrapolation can lead to a loose lower bound $\gs_C$ even if $\gs_A$ and $\gs_B$ are tight.
We illustrate this phenomenon in the next example.

\begin{example} \label{exp:avoid_extension}
    Consider the case $\bA = \bB \in \R^{4 \times 7}$ with rank and Kruskal rank (maximum $k$ such that any $k$ different columns are linearly independent~\cite{kruskal1977three}) being 4.
    For example,
    \begin{equation*}
    \bA = \bB =
        \begin{bmatrix*}[r]
            1&0&0&0&1&1&1\\
            0&1&0&0&1&2&3\\
            0&0&1&0&1&4&9\\
            0&0&0&1&1&8&27
        \end{bmatrix*}.
    \end{equation*}
    Denote the $i$th columns of $\bA$ and $\bB$ as $\ba_i$ and $\bb_j$, respectively. Now $\tilde{\gs}_A(x) = \tilde{\gs}_B(x) = \min\set{x, 4}$. Let $\bC = \bA \otimes \bB$, and we seek a rank expansion lower bound $\gs_C(k)$ for $\bC$.
    
    When $k = 13$, it is not hard to check that $\tilde{\gs}_C(k) = 7$, which is attained by submatrix $\bC \B P = \set{\ba_i \otimes \bb_j : i = 1 \text{ or } j = 1}$. 
    If we naturally take $\gs_A(x) = \gs_B(x) = \min\set{x, 4}$ (on $\R$), Theorem~\ref{thm:lb} gives $\gs_C(13) = 4$. 
    If we take instead $\gs_A(x) = \gs_B(x) = x^{\ln 4 / \ln 7}$, Theorem~\ref{thm:lb} gives $\gs_C(13) \approx 6.2$, which is optimal after rounding up to integer.
    Although $x^{\ln 4/\ln 7}$ is not as tight as $\min\set{x, 4}$ in the range 
    $x \in [0,7]$,
    its 
    extrapolation on $x \geq 7$ is greater than that of $\min\set{x, 4}$.
    
    Indeed, when $x \leq 7$, using $\min\set{x, 4}$ offers a tighter bound than $x^{\ln 4/ \ln 7}$.
    For example, when $k = 5$, with $\gs_A(x) = \gs_B(x) = \min\set{x, 4}$, $\gs_C(5) = 4 = \tilde{\gs}_C(5)$, but with $\gs_A(x) = \gs_B(x) = x^{\ln 4 / \ln 7}$, $\gs_C(5) \approx 3.1$. 
\end{example}

We see from above that a tighter rank expansion lower bounds on $\bA$ and $\bB$ may lead to a looser rank expansion lower bound $\gs_C$.
Thus, there is an opportunity to improve the established bound in Theorem \ref{thm:lb}.
To achieve this improvement and avoid the behavior in the example above, we derive a lower bound $\gs_C$ that evaluates $\gs_A$ and $\gs_B$ on their intended domains of $[0,n_A]$ and $[0,n_B]$, respectively.

\subsection{The $L$-shaped bound}\label{sec:L_bnd}

The derivation of the new rank expansion lower bound $\gs_C$ is not much different from the main theorems.
The only difference is in the continuous relaxation step (Section \ref{sec:cont_step}).
Recall that in the discrete step (see Section~\ref{sec:discrete_analysis} and definitions therein), we constructed the pre-CDG $S$ that is a subgrid of the basis of $D = \vcl(G)$.
Thus, we know 
\begin{equation*}
    \cdg(S) \sset [0, \gs_A(n_A)] \times [0, \gs_B(n_B)].
\end{equation*}
To simplify the notation, we will denote 
\[
    r_A := \gs_A(n_A),\ \ r_B := \gs_B(n_B),\ \ S := \cdg(S).
\]
In the continuous relaxation step, we carried out $\merge(2, 3)$ (Definition~\ref{def:merge_step}) until $S$ becomes a 2-step CDG, and then $\merge(1, 2)$ to make it a rectangle.
In fact, before the last step $\merge(1, 2)$, the entire grid remains a subgrid of $[0, r_A] \times [0, r_B]$.
To avoid leaving this domain (so that $\spcexp(S)$ remains in $[0, n_A] \times [0, n_B]$), 
we stop $\merge(1,2)$ if either the first or second step hits the boundary of the domain, which leads to an $L$-shaped grid (a 2-step CDG). 
We refer to this stoppage of $\merge(1,2)$ as \textit{early stopping}.

\begin{defn} \label{def:L_shaped}
    The $L$-\textit{shaped} CDG $S = L(x_1, y_1; x_2, y_2)$ is the 2-step CDG with horizontal edges at $y_1,~ y_2$ and vertical edges at $x_1,~ x_2$. By convention, we require $0 < x_1 < x_2$, $0 < y_1 < y_2$.
\end{defn}

\begin{defn} \label{def:LR}
    Fix the values $r_A = \gs_A(n_A)$ and $r_B = \gs_B(n_B)$. We denote the collection of $L$-shaped grids of size $t$ that touch the boundaries as
    \begin{equation*}
        \cL(t) = \set{L = L(x_1, y_1; r_A, r_B): \vert L \vert = t}.
    \end{equation*}
    Denote the collection of rectangle grids of size $t$ within $[0, r_A]\times[0, r_B]$ as
    \begin{equation*}
        \cR(t) = \{R = [0, x] \times [0, y]: x \in [1, r_A],\  y \in [1, r_B],\  \vert R \vert = t\}.
    \end{equation*}
\end{defn}

The $L$-shaped bound we derive in this section extends the bounds in the main theorems. 
Thus, we will work under the same assumptions on $\gs_A$ and $\gs_B$. See equations \eqref{eq:assumpt_zero} and \eqref{eq:assumpt_smooth}.

It is not hard to check that the early-stopped merge also results in an upper bound of $\ab{S}$ as stated in the lemma below, which serves as the analogue of Lemma \ref{lem:continuous_step}.

\begin{lemma} \label{lem:newmerge}
    Let $S$ be a CDG in $[0, r_A] \times [0, r_B]$ of size $|S| = t$. Then
    \begin{equation} \label{eq:new_continous_step}
        |\spcexp(S)| \leq \max_{M \in \cL(t) \cup \cR(t)} \ab{M}.
    \end{equation}
\end{lemma}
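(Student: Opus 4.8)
The plan is to imitate the proof of Lemma~\ref{lem:continuous_step}, but to halt each merge as soon as the shape would leave the box $[0,r_A]\times[0,r_B]$. First I would invoke Lemma~\ref{lem:gridexp_ab} to replace the left-hand side, reducing the claim to showing $\ab{S}\le \max_{M\in\cL(t)\cup\cR(t)}\ab{M}$ for every CDG $S\sset[0,r_A]\times[0,r_B]$ with $|S|=t$. Identifying $S$ with a stair, I would drive it down to a two-step stair by repeatedly applying $\merge(2,3)$ exactly as in Lemma~\ref{lem:continuous_step}; each such merge leaves $\ab{\cdot}$ non-decreasing and, as noted immediately before this lemma, keeps the stair inside the box (only the final $\merge(1,2)$ can escape it). Hence it suffices to bound $\ab{\cdot}$ for a two-step stair (an $L$-shape or a rectangle) lying in the box.

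For a genuine two-step stair I would run the final merge with early stopping. Parametrize the vertical merge of the two steps by their height difference $u$, as in the proof of Lemma~\ref{lem:continuous_step}, and restrict $u$ to the closed sub-interval on which $E(u)$ stays inside $[0,r_A]\times[0,r_B]$. The fact I would reuse verbatim is the no-interior-maximum property established there: $\frac{d}{du}\ab{E(u)}$ has at most one zero on $u\ge 0$ and is nonpositive at $u=0$, so $\ab{E(u)}$ attains its maximum on any interval at an endpoint. Consequently $\ab{S}$ is at most the larger of the two endpoint values of the in-box range. The endpoint at $u=0$ is the wide rectangle, which still lies in the box by the equi-area bound $t\le(\mathrm{width})\cdot r_B$ and hence belongs to $\cR(t)$; the opposite endpoint is either an in-box rectangle in $\cR(t)$ (if the shorter step reaches the ground before the taller one hits the ceiling) or a shape pinned against the top boundary $y=r_B$.

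The only remaining point, and the part requiring the most care, is to certify that the boundary-pinned endpoint can be pushed into $\cL(t)\cup\cR(t)$. A shape touching only $y=r_B$ is an $L$-shape whose right edge is some $x_2\le r_A$, which need not be the corner configuration of Definition~\ref{def:L_shaped}. Here I would apply the same early-stopped merge in the orthogonal direction: since the expansion size $\ab{S}=\int_S df(x)\,dg(y)$ and the entire merge machinery are symmetric under the swap $(f,x)\leftrightarrow(g,y)$ (with $f$ and $g$ both convex), the no-interior-maximum property holds equally for a horizontal merge of the two vertical steps. Running this horizontal merge with early stopping sends the top-pinned $L$-shape, without decreasing $\ab{\cdot}$, either to a rectangle in $\cR(t)$ or to an $L$-shape whose right edge now reaches $x=r_A$ while the top still reaches $y=r_B$, which is precisely an element of $\cL(t)$. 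Chaining these bounds yields $\ab{S}\le\max_{M\in\cL(t)\cup\cR(t)}\ab{M}$. I expect the main obstacle to be the bookkeeping of the several endpoint cases of the two early-stopped merges (top hit, right hit, or a step collapsing to the ground) and verifying in each case that the resulting extremal stair genuinely lands in $\cL(t)\cup\cR(t)$ while staying in the box; the analytic heart, namely the absence of interior maxima, is inherited directly from Lemma~\ref{lem:continuous_step} and needs only the symmetry remark to transfer to the horizontal direction.
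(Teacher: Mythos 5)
Your proposal is correct and follows essentially the same route as the paper: reduce via Lemma~\ref{lem:gridexp_ab}, collapse to two steps with repeated $\merge(2,3)$ (which stays in the box), then perform the final vertical merge with early stopping at $y=r_B$ and, if needed, a horizontal merge with early stopping at $x=r_A$, in each case invoking the no-interior-maximum property of Lemma~\ref{lem:continuous_step} to place the maximizer at an endpoint lying in $\cR(t)$ or $\cL(t)$. Your explicit symmetry remark for transferring the endpoint argument to the horizontal merge, and your endpoint case bookkeeping, are just slightly more detailed versions of what the paper's proof states tersely.
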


\color{black}

\begin{proof}
    The proof is similar to that of Lemma~\ref{lem:continuous_step}, 
    where as long as the grid has at least three steps, we apply $\merge(2, 3)$ repeatedly.
    This produces an $L$-shaped grid within $[0, r_A] \times [0, r_B]$.
    However, in the last merge operation, $\merge(1, 2)$, we stop increasing $u$, the height difference between steps 1 and 2, if step 1 reaches height $r_B$ before step 2 reaches the ground.
    This may give an $L$-shaped grid $L(x_1, y_1; x_2, r_B)$. 
    By Lemma \ref{lem:continuous_step}, either this $L$-shaped grid or a rectangle grid is an upper bound of $\ab{S}$.
    If the rectangle one is an upper bound, then we are done since it is in $\cR(t)$.

    Next, consider the case the upper bound is $L(x_1, y_1; x_2, r_B)$.
    Similar to how we increased/decreased the height of the steps, 
    we can horizontally merge the two horizontal layers of the 2-step stair, and we similarly disallow the width of the lower layer to go beyond $r_A$. 
    This may result in an $L$-shaped grid $L(x_1', y_1, r_A, r_B)$. 
    After these two merges, the resulting grid may either be a rectangle from $\cR(t)$, or the $L$-shaped grid above, which is from $\cL$. The proof is thus complete.
\end{proof}

With this new continuous relaxation step, we analogously derive the corresponding bound $\gs_C$, which includes the $L$-shaped bound, in the following theorem.

\begin{theorem} \label{thm:newlb}
    Suppose functions $\gs_A$ and $\gs_B$ are concave rank expansion lower bounds of $\bA \in \mathbb{C}^{m_A \times n_A}$ and $\bB\in \mathbb{C}^{m_B \times n_B}$, respectively, with $\gs_A(0) = \gs_B(0) = 0$.
    Let $r_A = \gs_A(n_A)$, $r_B = \gs_B(n_B)$, $d_A = \gs_A^{\dagger}(1)$, and $d_B = \gs_B^{\dagger}(1)$. Define
    \begin{equation*}
        R_C(k) =
        \min_{\substack{k_A\in [d_A, n_A],~k_B \in [d_B, n_B],\\ k_A k_B \geq k}}
          \sigma_A(k_A) \cdot \sigma_B(k_B),
    \end{equation*}
    and
    \begin{equation} \label{eq:l_bnd}
        L_C(k) = \min_{\substack{k_A \in [0, n_A],~k_B \in [0, n_B],\\
        k_{A} r_B + k_B r_A - k_A k_B = k}}
        \gs_A(k_{A})r_B + \gs_B({k_{B}}) r_A - \sigma_A(k_{A}) \cdot \sigma_B(k_{B}).
    \end{equation}
Then $\gs_C(k) = \min\set{L_C(k), R_C(k)}$ is a rank expansion lower bound of $\bC = \bA \otimes \bB$.
When $R_C(k) \leq \max\set{r_A, r_B}$, $R_C$ is a rank expansion lower bound for $\bC$.
\end{theorem}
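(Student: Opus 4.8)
The plan is to re-run the three-stage pipeline behind Theorem~\ref{thm:lb}, substituting the boundary-aware merge (Lemma~\ref{lem:newmerge}) for the unconstrained one (Lemma~\ref{lem:continuous_step}). Given any grid $G$ of size $k$, Lemma~\ref{lem:discrete_step} produces a pre-CDG $S$ with $|S|\le\rank(G)$ and $k=|G|\le|\spcexp(S)|$; as noted at the start of this appendix, the construction of $S$ as a subgrid of the basis of $\vcl(G)$ guarantees $\cdg(S)\sset[0,r_A]\times[0,r_B]$. Chaining Lemma~\ref{lem:gridexp_ab} with Lemma~\ref{lem:newmerge} then gives $k\le|\spcexp(S)|\le\ab{\cdg(S)}\le\phi(|S|)$, where I set $\phi(t)=\max_{M\in\cL(t)\cup\cR(t)}\ab{M}$. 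Since $\phi$ is strictly increasing, this yields $\rank(G)\ge|S|\ge\phi^{-1}(k)$, so it suffices to prove $\gs_C=\min\{L_C,R_C\}\le\phi^{-1}$.

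Because $\phi=\max\{\phi_R,\phi_L\}$ splits over the rectangular and $L$-shaped families, with $\phi_R(t)=\max_{R\in\cR(t)}\ab{R}$ and $\phi_L(t)=\max_{L\in\cL(t)}\ab{L}$, its inverse factors as $\phi^{-1}=\min\{\phi_R^{-1},\phi_L^{-1}\}$, so it is enough to show $R_C=\phi_R^{-1}$ and $L_C\le\phi_L^{-1}$. For the rectangle branch I would reuse the argument of Lemma~\ref{lem:invphi_main}: writing a rectangle as $[0,x]\times[0,y]$ with $x\in[1,r_A]$, $y\in[1,r_B]$, one has $\ab{R}=f(x)g(y)$ and $|R|=xy$ with $f=\gs_A^{-1}$, $g=\gs_B^{-1}$; the substitution $k_A=f(x)\in[d_A,n_A]$, $k_B=g(y)\in[d_B,n_B]$ turns $\phi_R^{-1}(k)=\min\{|R|:\ab{R}\ge k\}$ into exactly $R_C(k)$, the finite upper endpoints $n_A,n_B$ now appearing precisely because rectangles are confined to $[0,r_A]\times[0,r_B]$.

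The new ingredient is the $L$-shaped branch. For a boundary-touching $L=L(x_1,y_1;r_A,r_B)$ I would compute its expansion size by inclusion-exclusion on the two overlapping rectangles $[0,r_A]\times[0,y_1]$ and $[0,x_1]\times[0,r_B]$, obtaining $\ab{L}=f(r_A)g(y_1)+f(x_1)g(r_B)-f(x_1)g(y_1)$, and read off its area $|L|=x_1 r_B+r_A y_1-x_1 y_1$. Passing to $k_A=f(x_1)$, $k_B=g(y_1)$ (so $x_1=\gs_A(k_A)$, $y_1=\gs_B(k_B)$, and $f(r_A)=n_A$, $g(r_B)=n_B$), the area becomes the objective $\gs_A(k_A)r_B+\gs_B(k_B)r_A-\gs_A(k_A)\gs_B(k_B)$ of \eqref{eq:l_bnd}, while $\ab{L}$ becomes a separable expression of $k_A,k_B$ governing the constraint; matching $\phi_L^{-1}(k)=\min\{|L|:\ab{L}\ge k\}$ against the definition of $L_C$ then yields $L_C\le\phi_L^{-1}$. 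Tracking exactly which of $\{f,g,\gs_A,\gs_B,r_A,r_B,n_A,n_B\}$ multiplies each variable, and confirming the constraint comparison points in the validity-preserving direction, is the step I expect to be most delicate.

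Finally, for the assertion that $R_C$ alone is a rank expansion lower bound whenever $R_C(k)\le\max\{r_A,r_B\}$, it suffices to show $R_C(k)\le L_C(k)$ in this regime, i.e.\ that the shape maximizing $\ab{\cdot}$ at the relevant area is a genuine rectangle rather than a boundary-touching $L$. The idea is that a member of $\cL$ is forced to reach both far edges $x_2=r_A$ and $y_2=r_B$, so once the target rank drops below $\max\{r_A,r_B\}$ the optimal rectangle still fits inside $[0,r_A]\times[0,r_B]$ without saturating a boundary and attains at least as large an expansion; I would make this precise by comparing $\phi_R$ and $\phi_L$ on that range of $t$. Overall the conceptual content mirrors Theorem~\ref{thm:lb}; the genuine work, and the main obstacle, is the continuous $L$-shape optimization together with the constant-tracking in the change of variables and the regime comparison needed for the last sentence.
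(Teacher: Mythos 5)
Your main argument follows the paper's proof almost exactly: the same pipeline (Lemma~\ref{lem:discrete_step}, Lemma~\ref{lem:gridexp_ab}, Lemma~\ref{lem:newmerge}), the same $\phi=\max\set{\phi_R,\phi_L}$ with $\phi^{-1}=\min\set{\phi_R^{-1},\phi_L^{-1}}$, the same identification $R_C=\phi_R^{-1}$ via the argument of Lemma~\ref{lem:invphi_main}, and the same change of variables $k_A=f(x_1)$, $k_B=g(y_1)$ turning $|L|$ into the objective of \eqref{eq:l_bnd} and $\ab{L}$ into the constraint. Your inclusion--exclusion formula for $\ab{L}$ matches the paper's $\phi_L$, and your instinct to be careful about which of $r_A,r_B$ versus $n_A,n_B$ lands in the constraint is well placed: the substitution produces $k_A n_B + k_B n_A - k_A k_B = k$ (which is what the paper's own proof uses when it evaluates $L_C(k')$), not the $k_A r_B + k_B r_A - k_A k_B = k$ printed in the statement. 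The only omission in this part is the check that $\gs_C$ is continuous and increasing, which Definition~\ref{def:RankExpLBDef} requires and which the paper verifies at the outset by rewriting $L_C$ in the complemented form $r_Ar_B - (r_A-\gs_A(k_A))(r_B-\gs_B(k_B))$.

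The genuine gap is in your plan for the last sentence of the theorem. You propose to show $R_C(k)\leq L_C(k)$ when $R_C(k)\leq\max\set{r_A,r_B}$ by proving $\phi_L(t)\leq\phi_R(t)$ on that range of $t$, but this inequality is false for the families $\cL(t)$ and $\cR(t)$ as defined. Definition~\ref{def:LR} admits $L(x_1,y_1;r_A,r_B)$ with $x_1,y_1$ arbitrarily small, while $\cR(t)$ excludes rectangles with a side shorter than $1$; a thin $L$ hugging the axes approximates the excluded rectangle $[0,r_A]\times[0,t/r_A]$, whose expansion $n_A\, g(t/r_A)$ exceeds $\phi_R(t)$ whenever $f$ is strictly convex (take $g$ linear and $t<r_A$: then $\phi_R(t)=n_Bf(t)/r_B\leq n_B t f(r_A)/(r_A r_B)=n_A g(t/r_A)$, with strict inequality generically). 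The paper avoids this entirely by arguing about the merge process rather than the abstract shape families: since the two-step stair produced from an integer CDG has both steps of width and height at least $1$, step $1$ cannot reach height $r_B$ during $\merge(1,2)$ without the area exceeding $r_B$, so for $|S|\leq\max\set{r_A,r_B}$ the merge never early-stops, terminates in a rectangle of $\cR(|S|)$, and $\phi_R$ alone bounds $|\spcexp(S)|$. That discrete provenance information (steps of width $\geq 1$) is exactly what your purely continuous comparison of $\phi_R$ and $\phi_L$ throws away, so you should replace that step with the merge-level argument.
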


\begin{proof}
Again, using a density argument on the functions, we hereafter assume $\gs_A$ and $\gs_B$ are strictly increasing smooth functions so that they are invertible.
We start by showing $\gs_C$ is continuous and increasing, as required by the definition of a rank expansion lower bound.
Since $R_C$ and $L_C$ are continuous, so is $\sigma_C$.  Clearly, $R_C$ is increasing. 
As for $L_C$, we have the equivalent definition,
\begin{equation*}
    L_C(k) = \min_{\substack{k_A \in [0, n_A],~ k_B \in [0, n_B],\\ (n_A - k_A)(n_B - k_B)\, = \,n_A n_B - k}} 
    r_A r_B - \left(r_A - \gs_A(k_A)\right)\left(r_B - \gs_B(k_B)\right).
\end{equation*}
Thus, as $k$ increases, one can increase $k_A$ or $k_B$ so that $L_C$ increases. 
Hence, $L_C$ is also an increasing function. Consequently, $\gs_C$ is an increasing function.

To prove $\sigma_C$ is a lower bound of the rank, we express the right-hand side of \eqref{eq:new_continous_step} as a function $\phi(t)$ as in the proof of Theorem~\ref{thm:lb}.
We compute the maximal expansion size of stairs in $\cR(t)$ and $\cL(t)$ below: 
\begin{equation} \label{eq:phi_RL}
    \begin{split}
        \phi_R(t) &= \max_{\substack{t_A \in [1, r_A],~ t_B \in [1, r_B],\\ t_A t_B = t}} \gs_A^{-1}(t_A)\gs_B^{-1}(t_B);\\
        \phi_L(t) &= \max_{\substack{x_1 \in [0, r_A],~ y_1 \in [0, r_B],\\ r_Bx_1 + r_Ay_1 - x_1y_1 = t}} n_A\gs_B^{-1}(y_1) + n_B\gs_A^{-1}(x_1) - \gs_A^{-1}(x_1)\gs_B^{-1}(y_1).
    \end{split}
\end{equation}
Then we have
\begin{equation*}
    \phi(t) := \max\set{\phi_R(t), \phi_L(t)} = \max_{M \in \cL(t) \cup \cR(t)} \ab{M}.
\end{equation*}
When $k \leq d_Ad_B$, $\gs_C(k) \leq R_C(k) = 1$, which is clearly a lower bound of the rank of a nonzero matrix.
It remains to show the proposed function $\gs_C$ satisfies $\gs_C = \phi^{-1}$ on $[d_Ad_B, +\infty)$.

\textit{Step 1. $R_C = \phi_R^{-1}$ on $[d_Ad_B, +\infty)$.} 
This proof is the same as the proof for Lemma \ref{lem:invphi_main}. 
All arguments carry through with the new constraints $t_A \leq r_A$, $t_B \leq r_B$, and $k_A \leq n_A$, $k_B \leq n_B$.

\textit{Step 2. $L_C = \phi_L^{-1}$.}
We will repeat the proof for Lemma \ref{lem:invphi_main} for this $L$-shaped case.
First, we show $L_C \leq \phi_L^{-1}$.
Assume by contradiction $\phi_L(\gs_C(k)) > k$ for some $k$. Then there exist $x_1 \in [0, r_A],~ y_1 \in [0, r_B]$, and $r_B x_1 + r_A y_1 - x_1 y_1 = \gs_C(k)$ such that
\begin{equation*}
    k' = n_A\gs_B^{-1}(y_1) + n_B\gs_A^{-1}(x_1) - \gs_A^{-1}(x_1)\gs_B^{-1}(y_1) > k.
\end{equation*}
Consequently, as $L_C$ is strictly increasing,
\begin{align*}
    \gs_C(k) 
    &\leq 
    L_C(k) < L_C(k')\\
    &=
    \min_{\substack{k_A\in [0, n_A],~k_B \in [0, n_B],\\ n_B k_A + n_A k_B - k_A k_B = k'}}
    r_B \gs_A(k_A) + r_A \gs_B(k_B) - \sigma_A(k_A) \cdot \sigma_B(k_B) \\
    &\leq
    r_B \gs_A\left(\gs_A^{-1}(x_1)\right) + r_A \gs_B\left(\gs_B^{-1}(y_1)\right) - \sigma_A\left(\gs_A^{-1}(x_1)\right) \cdot \sigma_B\left(\gs_B^{-1}(y_1)\right) \\
    &=
    r_B x_1 + r_A y_1 - x_1 y_1 \\
    &= 
    \gs_C(k).
\end{align*}
This is absurd. 
Next, we show the other direction, $\phi_L^{-1} \leq L_C$, by showing that if $L_C(k) = t$, then $\phi_L(t) \geq k$.
Indeed, if $L_C(k) = t$, then let $x_1 = \gs_A(k_A) \in [0, r_A]$ and $y_1 = \gs_B(k_B) \in [0, r_B]$. Then we have both
\begin{align*}
    \gs_A^{-1}(x_1) r_B + \gs_A^{-1}(y_1) r_A - \gs^{-1}_A(x_1) \gs_B^{-1}(y_1) &= k, \\ 
    r_Bx_1 + r_Ay_1 - x_1y_1 &= t.
\end{align*}
Therefore,
\begin{equation*}
    \phi_L(t) 
    \geq
    \gs_A^{-1}(x_1) r_B + \gs_A^{-1}(y_1) r_A - \gs^{-1}_A(x_1) \gs_B^{-1}(y_1) 
    =
    k. 
\end{equation*}
Hence, $L_C = \phi_L^{-1}$.

We now conclude $\gs_C = \min\set{R_C, L_C}$ is the inverse of $\phi = \max\set{\phi_R, \phi_L}$.
This is straightforward with steps 1 and 2, since all 4 functions here are positive increasing functions on $\R_+$.
We have established that $\gs_C$ is a valid rank expansion lower bound for $\bC$.

Finally, to see why when $R_C(k) \leq \max\set{r_A, r_B}$ implies that $R_C$ is a rank expansion lower bound, we consider the final merge operation $\merge(1, 2)$.
The $L$-shaped bound will only come into play if we need to increase the height difference $u$, and step 1 reaches height $r_B$ before step 2 goes down to the ground.
However, each of the 2 steps is of width at least 1, since the original stair $S$ corresponds to a CDG.
This means step 1 will never reach $r_B$ before step 2 reaches the ground if $|S| \leq r_B$.
Similarly, since $r_A$ and $r_B$ are interchangeable (by considering $\bC = \bB \otimes \bA$), we see that early stopping the merge and the $L$-shaped grid will never come into play if $|S| \leq \max\set{r_A, r_B}$.

Suppose $G$ is a grid and it produces a (pre-) CDG $S$ with size $|S| \leq \max\set{r_A, r_B}$.
Then since we do not need to early stop the merge,
\begin{equation*}
    \phi_R(\rank(G)) \geq \phi_R(|S|) \geq |\spcexp(S)| \geq |G|. 
\end{equation*}
Hence, $\rank(G) \geq \phi_R^{-1}(k) = R_C(k)$, which establishes that $R_C$ is a rank expansion lower bound in this case.
\end{proof}

The new bound in Theorem~\ref{thm:newlb} is more complicated than Theorem \ref{thm:lb}, which does not involve $L$-shaped grids.
However, the new bound derived in this section is often tighter.
Indeed, let $\gs_C^{R}$ and $\gs_C^{R+L}$ be, respectively, the rank expansion lower bounds derived from Theorem \ref{thm:lb} and Theorem \ref{thm:newlb}.
Recall the $\phi$ function we used when proving Theorem \ref{thm:lb}, shown below
\begin{equation}
    \phi^{\text{prev}}(t) = \max_{\substack{t_A,~ t_B \geq 1,\\ t_A t_B = t}} \gs_A^{-1}(t_A)\gs_B^{-1}(t_B).
\end{equation} 
Denote $\phi^{\text{new}} = \max\set{\phi_R, \phi_L}$ as the new $\phi$ function used in the proof of Theorem \ref{thm:newlb} above.
Since $\gs_C^{R} = \left(\phi^{\text{prev}}\right)^{-1}$ and $\gs_C^{R+L} = \left(\phi^{\text{new}}\right)^{-1}$,
if $\phi^{\text{prev}} \geq \phi^{\text{new}}$, the new bound $\gs_C^{L+R}$ is then tighter (greater) than $\gs_C^{R}$.

When $\phi_L \leq \phi_R$, it is clear $\phi^{\text{prev}} \geq \phi^{\text{new}}$, since $\phi_R$ and $\phi^{\text{prev}}$ are maximizing the same function and $\phi_R$ has a smaller feasible region, so $\phi^{\text{prev}} \geq \phi_R = \phi^{\text{new}}$.
When $\phi_L > \phi_R$, this is not clearly true.
When the maximizer $x_1$ in $\phi_L$ is at least 1, then we can prove $\phi^{\text{prev}} \geq \phi_L$, so $\gs_C^{L+R} \geq \gs_C^{R}$.
We can require $x_1 \geq 1$ by early stopping the horizontal merge step in the proof of Lemma \ref{lem:newmerge}, but this will make the bound too complicated to state.

\color{black}

The bound in Theorem \ref{thm:newlb} does not require defining $\gs_A$, $\gs_B$ beyond $n_A$ and $n_B$.
Thus, it also resolves the undesirable phenomenon in Example \ref{exp:avoid_extension}.
It provides a tighter rank expansion lower bound for $\bC = \bA \otimes \bB$ when given tighter rank expansion lower bounds for $\bA$ and $\bB$. 
This is not always the case with Theorem~\ref{thm:lb} as shown in Example \ref{exp:avoid_extension}.

Suppose we have two rank expansion lower bounds $\gs_A,~\hat{\gs}_A$ for $\bA$ and $\gs_B,~\hat{\gs}_B$ for $\bB$. 
If $\gs_{A,B} \geq \hat{\gs}_{A, B}$ on $[0, n_{A, B}]$, 
then the corresponding functions $R_C \geq \hat{R}_C$ on $[0, n_An_B]$. 
If in addition $\gs_{A, B}(n_{A, B}) = \hat{\gs}_{A, B}(n_{A, B})$, 
for example, both are 
equal to the true rank of $\bA$ and $\bB$, then also $L_C \geq \hat{L}_C$ on $[0, n_An_B]$. 

Despite these properties of the new bound in Theorem \ref{thm:newlb},
Theorem~\ref{thm:lb} and Theorem~\ref{thm:nest_lb} give much simpler bounds and can be easily applied recursively to derive a lower bound on the rank expansion for $\bC = \bigotimes_{i=1}^p \bA_i,~p\geq 3$.

The log-log convexity assumption simplifies Theorem \ref{thm:lb}.
Similar assumptions can also simplify the bound in Theorem \ref{thm:newlb}.
Under such assumptions, we can show $R_C \leq L_C$, which removes the need of $L_C$ function.
The obtained rank expansion lower bound is thus tighter than the one in Theorem \ref{thm:lb} as discussed above.
This is the main topic of the next section.

\subsection{Simplifying the $L$-shaped bound} \label{sec:simplify_L_shape}

The main goal is to understand when $R_C \leq L_C$, and thus $\gs_C = R_C$.
We will show this is the case when $\gs_A$ and $\gs_B$ satisfy an appropriate log-log convexity condition (see Definition \ref{def:loglog} for the definition and Section \ref{sec:main_proof_2} for the basic properties).
We will continue using the notations introduced in the previous section.
Recall that $r_A = \gs_A(n_A)$ and $r_B = \gs_B(n_B)$.

\begin{theorem} \label{thm:reduce2R}
    Let everything be defined as in Theorem \ref{thm:newlb}. Suppose in addition functions 
    $r_A - \gs_A(n_A - x)$ and $r_B - \gs_B(n_B - x)$ are log-log convex on $(0, n_A)$ and $(0, n_B)$, respectively.
    Then, $R_C(k) \leq L_C(k)$, and consequently,
    \begin{equation} \label{eq:simpleNestBnd}
        \gs_C(k) = \min_{\substack{k_A\in [d_A, n_A],~k_B \in [d_B, n_B],\\ k_A k_B \geq k}}
          \sigma_A(k_A) \cdot \sigma_B(k_B)
    \end{equation}
    is a rank expansion lower bound of $\bC = \bA \otimes \bB$.
    This bound is tighter than the one in Theorem \ref{thm:lb}.
\end{theorem}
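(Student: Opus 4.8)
The plan is to prove the single inequality $R_C(k)\le L_C(k)$; once this is in hand, $\gs_C=\min\set{L_C,R_C}=R_C$ is exactly the displayed bound \eqref{eq:simpleNestBnd}, and the final tightness claim is immediate. The strategy is to reduce $L_C$ to a closed form built from two terms, recognize each term as the objective $\gs_A(k_A)\gs_B(k_B)$ of the $R_C$-minimization evaluated at a specific point of the surface $\set{k_Ak_B=k}$, and then conclude $R_C\le L_C$ because $R_C$ minimizes that same objective over a region containing those two points.

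First I would rewrite $L_C$ using the complement form of its defining minimization already recorded in the proof of Theorem~\ref{thm:newlb}. Substituting $\hat k_A=n_A-k_A$ and $\hat k_B=n_B-k_B$ turns the constraint into $\hat k_A\hat k_B=n_An_B-k$ and the objective into $r_Ar_B-\psi_A(\hat k_A)\,\psi_B(\hat k_B)$, where $\psi_A(x)=r_A-\gs_A(n_A-x)$ and $\psi_B(x)=r_B-\gs_B(n_B-x)$ are precisely the functions assumed log-log convex. Thus
\[
  L_C(k)=r_Ar_B-\max_{\substack{\hat k_A\hat k_B=n_An_B-k\\ \hat k_A\in[0,n_A],~\hat k_B\in[0,n_B]}}\psi_A(\hat k_A)\,\psi_B(\hat k_B).
\]
Now I would establish a log-log-convex analogue of Lemma~\ref{lem:bndoptim} to evaluate the inner maximum at a corner. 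Setting $s=n_An_B-k$ and $h(x)=\psi_A(x)\psi_B(s/x)$, the same computation as in Lemma~\ref{lem:bndoptim} gives $h'(x)\propto_+ \Psi_A'(\ln x)-\Psi_B'(\ln s-\ln x)$, where $\Psi_A(u)=\ln\psi_A(e^u)$ and $\Psi_B(u)=\ln\psi_B(e^u)$ are convex by Proposition~\ref{prop:lccv_x}. Since $\Psi_A',\Psi_B'$ are nondecreasing, both summands are nondecreasing in $x$, so $h$ is convex and attains its maximum over the feasible interval $[\,s/n_B,\,n_A\,]$ at an endpoint, i.e.\ where $\hat k_B=n_B$ or $\hat k_A=n_A$. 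Evaluating at the two endpoints and using $\psi_A(n_A)=r_A$, $\psi_B(n_B)=r_B$, together with $r_A-\psi_A(n_A-k/n_B)=\gs_A(k/n_B)$ and the symmetric identity, collapses $L_C$ to
\[
  L_C(k)=\min\set{\,r_B\,\gs_A(k/n_B),~r_A\,\gs_B(k/n_A)\,}.
\]

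With this closed form I would compare to $R_C$. Each term is the $R_C$-objective $\gs_A(k_A)\gs_B(k_B)$ evaluated at a corner of the surface $\set{k_Ak_B=k}$, namely $(k_A,k_B)=(k/n_B,\,n_B)$ and $(n_A,\,k/n_A)$. In the nontrivial range $k\ge\max\set{d_An_B,\,d_Bn_A}$, these points satisfy $k_A\in[d_A,n_A]$ and $k_B\in[d_B,n_B]$, hence lie in the feasible set of the $R_C$-minimization; therefore $R_C(k)$ is at most each of them, giving $R_C(k)\le L_C(k)$. For smaller $k$ both $R_C$ and $L_C$ only certify the trivial bound (rank at least $1$), so no comparison is needed there. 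This yields $\gs_C=\min\set{R_C,L_C}=R_C$, which is exactly \eqref{eq:simpleNestBnd}. Finally, for the tightness assertion, note that $R_C$ minimizes the identical objective $\gs_A(k_A)\gs_B(k_B)$ over $\set{k_A\in[d_A,n_A],\,k_B\in[d_B,n_B],\,k_Ak_B\ge k}$, which is a subset of the feasible region $\set{k_A\ge d_A,\,k_B\ge d_B,\,k_Ak_B\ge k}$ used in Theorem~\ref{thm:lb}; a minimum over a smaller set is at least as large, so $R_C$ dominates the Theorem~\ref{thm:lb} bound and is therefore at least as tight.

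The main obstacle I anticipate is the second step: proving the log-log-convex boundary lemma cleanly and verifying the constraint/objective bookkeeping of the complement substitution so that $\psi_A,\psi_B$ land exactly on the hypothesized log-log convex functions (Proposition~\ref{prop:monom_log} is the guarantee that the natural choices of $\gs_A,\gs_B$ satisfy this). The only other delicate point is the edge range of small $k$, where the literal inequality $R_C\le L_C$ may fail but both sides reduce to the trivial rank-$\ge 1$ certificate, so the conclusion $\gs_C=R_C$ is unaffected.
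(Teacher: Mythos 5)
Your route is a genuine (and arguably cleaner) dual of the paper's argument. The paper works on the $\phi$-side, with $f=\gs_A^{-1}$, $g=\gs_B^{-1}$, and shows $\phi_L\leq\phi_R$ via monotonicity of the quotients $\hat f,\hat g$ extracted from the log-log concavity of $f(r_A)-f(r_A-x)$; you instead work on the $\gs$-side, observing that the complement substitution turns the $L_C$-minimization into maximizing a product of the hypothesized log-log convex functions $\psi_A,\psi_B$ over a hyperbola, to which a sign-reversed Lemma~\ref{lem:bndoptim} applies. One small imprecision: $h'(x)\propto_+\Psi_A'(\ln x)-\Psi_B'(\ln s-\ln x)$ with a nondecreasing bracket does not make $h$ convex (the positive prefactor $\tfrac{1}{x}\psi_A(x)\psi_B(s/x)$ varies with $x$), but it does show $h'$ changes sign at most once from negative to positive, so the maximum sits at an endpoint, which is all you need. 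Your closed form $L_C(k)=\min\set{r_B\,\gs_A(k/n_B),\ r_A\,\gs_B(k/n_A)}$ is consistent with inverting the paper's endpoint evaluations of $\phi_L$, and the comparison with $R_C$ via feasibility of the two corner points $(k/n_B,n_B)$ and $(n_A,k/n_A)$ is sound whenever $k\geq\max\set{d_An_B,\,d_Bn_A}$.

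The gap is your treatment of $d_Ad_B\leq k<\max\set{d_An_B,\,d_Bn_A}$. It is simply false that both bounds are trivial there: take $\gs_A(k)=\gs_B(k)=\sqrt{k}$, $n_A=n_B=100$, $d_A=d_B=1$, $k=50$; then $R_C(50)=\sqrt{50}\approx 7$, yet $k<d_An_B=100$, so at least one corner is infeasible for $R_C$ and your comparison does not go through. The theorem's conclusion nevertheless holds in this regime, and the paper's own proof shows how: it never proves $R_C\leq L_C$ there either, but instead notes that if $k\leq d_An_B$ then the point $(d_A,k/d_A)$ is feasible, giving $R_C(k)\leq\gs_A(d_A)\gs_B(k/d_A)\leq r_B$, and symmetrically $k\leq d_Bn_A$ gives $R_C(k)\leq r_A$; hence $R_C(k)\leq\max\set{r_A,r_B}$ throughout this range, and the final clause of Theorem~\ref{thm:newlb} (the $L$-shape never arises when the merged CDG has size at most $\max\set{r_A,r_B}$) already certifies $R_C$ as a rank expansion lower bound with no reference to $L_C$. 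Replace your triviality claim with that citation and the proof is complete.
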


\begin{proof}
As defined in~\eqref{eq:phi_RL} during the proof of Theorem~\ref{thm:newlb}, the size of an $L$-shaped CDG after a grid expansion can be bounded by
\begin{equation*}
\phi_L(t) 
= 
\max_{\substack{x_1 \in [0, r_A],~ y_1 \in [0, r_B],\\ r_Bx_1 + r_Ay_1 - x_1y_1 = t}} n_A\gs_B^{-1}(y_1) + n_B\gs_A^{-1}(x_1) - \gs_A^{-1}(x_1)\gs_B^{-1}(y_1).
\end{equation*}
Using a density argument, we assume that $\gs_A^{-1}$ and $\gs_B^{-1}$ are smooth and strictly increasing on $[0, r_A]$ and $[0, r_B]$.
According to Theorem \ref{thm:newlb}, when $R_C(k) \leq \max\set{r_A, r_B}$, $R_C$ is indeed a rank expansion lower bound of $\bC$.
Thus, hereafter we only consider the case $R_C(k) = t \geq \max\set{r_A, r_B}$.
To show $R_C \leq L_C$, it suffices to show $\phi_R(t) \geq \phi_L(t)$ when $t \geq \max\set{r_A, r_B}$.
Hence, we assume $t \geq \max\set{r_A, r_B}$ in the proof below.

We show that when $\gs_A^{-1}$ and $\gs_B^{-1}$ satisfy the log-log convexity assumption, $\phi_L(t)$ is maximized at either $x_1 = 0$ or $y_1 = 0$. 
To begin with, note that when $x_1 = 0$,
\begin{equation*}
    \phi_L(t) = \gs_A^{-1}(r_A) \gs_B^{-1}(t/r_A),
\end{equation*}
and when $y_1 = 0$,
\begin{equation*}
    \phi_L(t) = \gs_A^{-1}(t/r_B) \gs_B^{-1}(r_B).
\end{equation*}
Since $t/r_A \geq 1$ and $t/r_B \geq 1$, then the pairs $(r_A, t/r_A)$ and $(t/r_B, r_B)$ are feasible in the optimization, and so
\begin{equation*}
    \phi_R(t) = \max_{\substack{t_A \in [1, r_A],~ t_B \in [1, r_B],\\ t_A t_B = t}} \gs_A^{-1}(t_A)\gs_B^{-1}(t_B).
\end{equation*}
Thus, if $\phi_L(t)$ is indeed maximized at $x_1 = 0$ or $y_1 = 0$, then we have $\phi_L \leq \phi_R$, and $R_C = \phi_R^{-1}$ is a valid rank expansion lower bound.

To that end, let us reuse the notation $f = \gs_A^{-1}$ and $g = \gs_B^{-1}$ for sake of simplicity.
We introduce following functions on $(0, r_A)$ and $(0, r_B)$, respectively:
\begin{equation*}
    \hat{f}(x) = \frac{f(r_A) - f(r_A - x)}{xf'(r_A - x)} 
    \text{~~and~~}
    \hat{g}(x) = \frac{g(r_B) - g(r_B - x)}{xg'(r_B - x)}.
\end{equation*}
Since $r_A - \gs_A(n_A - x)$ is log-log convex and increasing on $[0, n_A]$, its inverse, $p(x) \equiv f(r_A) - f(r_A - x)$, is log-log concave and increasing on $[0, r_A]$. Therefore,
\begin{equation*}
    \frac{d}{dx}\ln p(e^x) 
    = 
    \frac{e^xp'(e^x)}{p(e^x)}
\end{equation*}
is positive and decreasing in $x$. Thus,
\begin{equation*}
    \hat{f}(x) = \frac{p(x)}{xp'(x)}
\end{equation*}
is increasing in $x$. Similarly, $\hat{g}(x)$ is also increasing.

Writing $x = x_1$, we can rewrite $\phi_L$ as 
\begin{align*}
    &\phi_L(t)  \\
    &= 
    \max_{\substack{x_1 \in [0, r_A],~ y_1 \in [0, r_B],\\ (r_A - x_1)(r_B - y_1)\, = \,r_A r_B - t}} 
    n_A n_B - \left(n_A - f(x_1)\right)\left(n_B - g(y_1)\right)\\
    &=
    n_A n_B  
    -\min_{x \in \left[0, r_A - \frac{r_Ar_B - t}{r_B}\right]}
     \left(f(r_A) - f(x)\right)\left(g(r_B) - g\left(r_B - \frac{r_Ar_B - t}{r_A - x}\right)\right).
\end{align*}
In the trivial cases $t = r_Ar_B$, one can directly verify $\phi_R = \phi_L$.
Now assume $t < r_Ar_B$, and thus $x < r_A$.
For simplicity, we write $r_Ar_B - t = c > 0$, $\frac{c}{r_A - x} = s_x \in [\frac{c}{r_A}, r_B]$. It suffices to show
\begin{equation*} 
    \argmin_{x \in [0, r_A - \frac{c}{r_B}]}
    \bigg\{ h(x) := \left(f(r_A) - f(x)\right)\left(g(r_B) - g(r_B - s_x)\right) \bigg\}
    \in 
    \big\{ 0, r_A - \frac{c}{r_B} \big\}.
\end{equation*}
The function $h$ is $C^1$ on $[0, r_A - \frac{c}{r_B}]$, and 
\begin{align*}
    h'(x) 
    &=
    [f(r_A) - f(x)]\cdot \frac{ds_x}{dx}\cdot g'(r_B - s_x) - f'(x) [g(r_B) - g(r_B - s_x)]\\
    &=
    s_x\left[ \frac{f(r_A) - f(x)}{r_A - x}g'(r_B - s_x) - f'(x) \frac{g(r_B) - g(r_B - s_x)}{s_x}\right].
\end{align*}
Since $f$ and $g$ are convex and strictly increasing, $f' > 0$, $g' > 0$ on interval $x \in (0, r_A - \frac{c}{r_B})$,
so on this interval,
\begin{align*} \label{eq:hpprop}
    h'(x) 
    &= 
    s_xf'(x)g'(r_B - s_x) \left(\hat{f}(r_A - x) - \hat{g}\left(\frac{c}{r_A - x}\right)\right)\\
    &\propto_+
    \hat{f}(r_A - x) - \hat{g}\left(\frac{c}{r_A - x}\right).
\end{align*}
By monotonicity of $\hat{f}$ and $\hat{g}$, $h$ can only be increasing, increasing then decreasing, or decreasing on $(0, r_A - \frac{c}{r_B})$.
In any case, $h$ can only attain minimum on the boundary $x \in \set{0, r_A - c/r_B}$.
The proof is then complete.
\end{proof}

\edits{In the example below, we show that when $r_A - \gs_A(n_A - x)$ and $r_B - \gs_A(n_B - x)$ are not log-log convex, i.e., when $f(r_A) - f(r_A - x)$ and $g(r_B) - g(r_B - x)$ are not log-log concave, 
it is possible to have $L_C \geq R_C$.
Thus, Theorem \ref{thm:newlb} cannot always be simplified to Theorem \ref{thm:reduce2R}.}

\begin{example} \label{exp:SkinnyL}
In general, given that $f = \gs_A^{-1}$ is convex, strictly increasing, and $f(0) = 0$, we cannot ignore the $\cL$-shaped grids in maximization. A counterexample is given below.

Consider $r_A = 5$ and
\begin{equation*}
\gs_A^{-1}(x) = f(x) =
\begin{cases}
    x                                             & x \leq 3\\
    \frac{5}{2} + \frac{1}{2}e^{2(x - 3)}         & 3 < x \leq 4\\
    \frac{5}{2} + \frac{1}{2}e^{2} + e^2 (x - 4)  & 4 < x \leq 5
\end{cases}.
\end{equation*}
Then one can check that $f(r_A) - f(r_A - x)$ is not log-log concave and $\hat{f}$ is not monotonically increasing.

We demonstrate that in this case the $L$-shaped grids cannot be discarded. Let us take $g = f$. Consider grid $S = L(1, 1; 5, 5)$ of size 9. We have $\ab{S} \approx 26.17$. 
With a rectangle of area 9 inside the $5 \times 5$ region, the maximum expansion size, rounding up to an integer, is $\lceil \phi_R(9) \rceil = \lceil f(5)\cdot f(9/5) \rceil = 25 < \lfloor\ab{S}\rfloor \leq \lfloor \phi_L(9) \rfloor$. 
Thus in this case we have to return to Theorem \ref{thm:newlb} or Theorem \ref{thm:lb}.  
\end{example}

To see how Theorem \ref{thm:reduce2R} improves the bound in Theorem \ref{thm:lb}, we give the following example.
However, despite that the bound in Theorem \ref{thm:reduce2R} is tighter, it is often no longer concave, and it is in general not possible to apply the bound in Theorem \ref{thm:reduce2R} recursively as in Theorem \ref{thm:nest_lb}.

\begin{example} \label{rmk:compare_bnds}
    First consider $\bC = \bA \otimes \bB$, where $\gs_A(k) = k^{1/2}$ and $\gs_B(k) = k^{1/4}$. If we apply Theorem \ref{thm:lb} or equivalently Theorem \ref{thm:nest_lb}, we find a rank expansion lower bound,
    \begin{equation*}
        \gs_C^{\text{prev}}(k) = k ^{1/4}.
    \end{equation*}
    Now if we turn to Theorem \ref{thm:reduce2R} (or Corollary \ref{cor:bnd_lb} stated after this example), for $k > n_B$, we get a different rank expansion lower bound,
    \begin{equation*}
        \gs_C^{\text{new}}(k) 
        =
         \left(\frac{k}{n_B}\right)^{1/2} \cdot n_B^{1/4} 
        =
         n_B^{-1/4}\cdot k^{1/2},
    \end{equation*}
    and when $k \leq n_B$, $\gs_C^{\text{new}}(k) = \gs_C^{\text{prev}}(k)$. The bound is improved by a factor of $k^{1/4}$. 
    Numerically, with $n_A = n_B = 100$, $k = 10 n_A$, we have $\lceil\gs_C^{\text{prev}}(k)\rceil = 6$, whereas $\gs_C^{\text{new}}(k) = 10$.
    However, note that although $\gs_C^{\text{new}}$ is continuous, it is no longer concave. 
    Hence, it is not possible to apply the new bound recursively.
    
    For another illustration, let us use logarithms for the rank expansion lower bound. 
    Let $\bC = \bA \otimes \bA$ with $\gs_A(k) = \ln(k + 1)$. 
    From Theorem~\ref{thm:nest_lb} (or Theorem \ref{thm:lb}), we get
    \begin{equation*}
        \gs_C^{\text{prev}}(k) = \ln \left(\frac{k}{e-1} + 1\right).
    \end{equation*}
    Using the new bound by Theorem \ref{thm:reduce2R} (or Corollary \ref{cor:bnd_lb}), when $k/n_A \geq e - 1$, we have
    \begin{equation*}
        \gs_C^{\text{new}}(k) 
        =
        \ln(n_A + 1) \cdot \ln(k/n_A + 1).
    \end{equation*}
    Thus in this regime,
    \begin{align}
        \exp\left(\gs_C^{\text{prev}}(k)\right) &= \frac{k}{e-1} + 1 \label{eq:lin_growth};\\
        \exp\left(\gs_C^{\text{new}}(k)\right) &= \left(\frac{k}{n_A} + 1\right)^{\ln(n_A + 1)} \label{eq:exp_growth}.
    \end{align}

    \edits{If we choose $k$ to be proportional to $n_A$, \eqref{eq:exp_growth} eventually surpasses \eqref{eq:lin_growth} as $n_A \rightarrow \infty$, and thus we obtain a tighter lower bound, $\gs_C^{\text{new}}$, on the rank expansion.} Numerically, with $n_A = 100$, $k = 10 n_A$, we have  $\eqref{eq:lin_growth} \approx 583$, $\eqref{eq:exp_growth} \approx 63996$. Thus, $\lceil\gs_C^{\text{prev}}(k)\rceil = 7$, whereas $\lceil\gs_C^{\text{new}}(k)\rceil = 12$.
\end{example}

Finally, recall that we can further simplify the optimization problem in $R_C$ using Lemma \ref{lem:bndoptim} when $\gs_A$ and $\gs_B$ are log-log concave.
By combining Theorem \ref{thm:reduce2R} with Lemma \ref{lem:bndoptim}, we have an even simpler expression of the rank expansion lower bound.

\begin{corollary} \label{cor:bnd_lb}
    Let everything be defined as in Theorem \ref{thm:newlb}, if $\gs_A(x)$, $\gs_B(x)$ are log-log concave, $r_A - \gs_A(n_A - x)$ and $r_B - \gs_B(n_B - x)$ are log-log convex, on $(0, n_A)$ and $(0, n_B)$ respectively, e.g., polynomials and logarithm functions listed in Proposition \ref{prop:monom_log},
    then
    \begin{equation*}
        R_C(k) = \min_{\substack{k_A\in [d_A, n_A],~k_B \in [d_B, n_B],\\
        k_A\in\set{d_A, n_A} \text{ or } k_B \in \set{d_B, n_B},\\
        k_A k_B \geq k}}
          \sigma_A(k_A) \cdot \sigma_B(k_B)
    \end{equation*}
    is a rank expansion lower bound for $\bC$.
\end{corollary}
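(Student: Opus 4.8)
The plan is to obtain Corollary~\ref{cor:bnd_lb} as a direct composition of Theorem~\ref{thm:reduce2R} and Lemma~\ref{lem:bndoptim}, with no new analytic work required. First I would invoke Theorem~\ref{thm:reduce2R}: its hypotheses are precisely the stated log-log convexity of $r_A - \gs_A(n_A - x)$ and $r_B - \gs_B(n_B - x)$, so the theorem applies verbatim and yields $R_C(k) \leq L_C(k)$. Consequently $\gs_C = \min\set{R_C, L_C} = R_C$, and the function
\begin{equation*}
    R_C(k) = \min_{\substack{k_A\in [d_A, n_A],~k_B \in [d_B, n_B],\\ k_A k_B \geq k}}
      \gs_A(k_A) \cdot \gs_B(k_B)
\end{equation*}
is already established to be a rank expansion lower bound for $\bC = \bA \otimes \bB$. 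This disposes of the $L$-shaped term entirely, so the only remaining task is to rewrite the feasible region of this full-rectangle minimization.

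Second, I would apply Lemma~\ref{lem:bndoptim} with the identification $f = \gs_A$, $g = \gs_B$, $a = d_A$, $b = n_A$, $c = d_B$, $d = n_B$ (all finite here, so the ``possibly infinite'' caveat is not needed), and with the same $k$. The lemma requires exactly that $f = \gs_A$ and $g = \gs_B$ be positive, increasing, and log-log concave on $\R_+$, which is granted by the corollary's hypotheses. Its conclusion states that the minimum of $\gs_A(k_A)\gs_B(k_B)$ over $\set{k_A \in [d_A,n_A],\ k_B\in[d_B,n_B],\ k_Ak_B \geq k}$ equals the minimum over the same set with the additional boundary constraint $k_A \in \set{d_A, n_A}$ or $k_B \in \set{d_B, n_B}$. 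Substituting this identity into $R_C$ reproduces exactly the displayed expression in the corollary, which therefore equals the rank expansion lower bound $R_C$ from the first step.

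Third, I would record that the two hypothesis families are mutually compatible and that the cited examples satisfy both. Log-log concavity of $\gs_A$, $\gs_B$ is needed by Lemma~\ref{lem:bndoptim}, while log-log convexity of $r_X - \gs_X(n_X - x) = \gs_X(n_X) - \gs_X(n_X - x)$ is needed by Theorem~\ref{thm:reduce2R}. For $\gs_X(x) = a x^p$ with $p \leq 1$ or $\gs_X(x) = a\ln(bx+1)$, Proposition~\ref{prop:monom_log} supplies simultaneously that $\gs_X$ is concave and log-log concave and that $\gs_X(t) - \gs_X(t-x)$ is log-log convex on $(0,t)$; taking $t = n_X$ yields the required log-log convexity of $\gs_X(n_X) - \gs_X(n_X - x)$ on $(0, n_X)$. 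Hence both prerequisites hold for the polynomial and logarithmic rank expansion lower bounds used in the applications.

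I do not anticipate a substantive obstacle: the content is genuinely a two-step chaining of prior results. The only point demanding care is bookkeeping --- confirming that the interval endpoints $[d_A, n_A]$, $[d_B, n_B]$ of $R_C$ match the roles of $[a,b]$, $[c,d]$ in Lemma~\ref{lem:bndoptim}, and recognizing that it is precisely the reduction of $\gs_C$ to $R_C$ furnished by Theorem~\ref{thm:reduce2R} that makes Lemma~\ref{lem:bndoptim} applicable (the lemma governs rectangular minimization, not the $L$-shaped $L_C$). As a secondary caution one should flag that the resulting $R_C$ need not be concave, so, unlike the bound of Theorem~\ref{thm:nest_lb}, this simplified form cannot in general be applied recursively; this does not affect the validity of the single-level statement.
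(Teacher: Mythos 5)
Your proposal is correct and follows exactly the route the paper intends: the paper gives no explicit proof but states the corollary immediately after the sentence ``By combining Theorem~\ref{thm:reduce2R} with Lemma~\ref{lem:bndoptim}, we have an even simpler expression,'' which is precisely your two-step chaining (Theorem~\ref{thm:reduce2R} to discard $L_C$ and establish $R_C$ as the rank expansion lower bound, then Lemma~\ref{lem:bndoptim} with $a=d_A$, $b=n_A$, $c=d_B$, $d=n_B$ to push the minimizer to the boundary). Your added bookkeeping on the compatibility of the two hypothesis families via Proposition~\ref{prop:monom_log} is consistent with the paper's parenthetical examples and introduces nothing extraneous.
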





%
%

\end{document}